\newif{\ifarxiv} 
\newif{\iftac}
\newif{\ifchanges} 
\author{Jean Goubault-Larrecq}
\address{Universit\'e Paris-Saclay, CNRS, ENS Paris-Saclay,
  Laboratoire M\'ethodes
  Formelles, 91190, Gif-sur-Yvette, France
}
\title[Distributing Retractions and Weak Distributive Laws]{Distributing Retractions, Weak Distributive Laws and
  Applications to Monads of Hyperspaces, Continuous Valuations and Measures}
\keywords{weak distributive law, monad, continuous valuation, Radon measure, hyperspace, Smyth hyperspace, Hoare powerspace, Plotkin hyperspace, quasi-lens, lens}
\newcommand\acuteabreve{\rlap{\raisebox{.9ex}{$\scriptstyle\:'$}}{\u
    a}}
\newcommand\acuteabreve{\'{\u a}} 
\newcommand\eqdef{\mathrel{\buildrel \text{def}\over=}}
\newcommand\Topcat{\mathbf{Top}}
\newcommand\Setcat{\mathbf{Set}}
\newcommand\Dcpo{\mathbf{Dcpo}}
\newcommand\Cat{\mathbf{Cat}}
\newcommand\Cont{\mathbf{Cont}}
\newcommand\SComp{\mathbf{SComp}}
\newcommand\KHaus{\mathbf{KHaus}}
\newcommand\dG{{\text{\textsf{d}}}}
\newcommand\Smyth{{\mathcal Q}}
\newcommand\Hoare{{\mathcal H}}
\newcommand\Plotkin{\mathcal P\ell} 
\newcommand\Plotkinn{\Plotkin_\Vt}
\newcommand\Vt{\mathsf{V}}
\newcommand\SV{\Smyth_{\Vt}}
\newcommand\HV{\Hoare_\Vt}
\newcommand\TEMleq{\sqsubseteq^{\text{TEM}}}
\newcommand\Val{{\mathbf V}}
\newcommand\Lform{{\mathcal L}}
\newcommand\one{{\mathbf 1}}
\newcommand{\interior}[1]{int ({#1})} 
\newcommand\diff{\smallsetminus}
\newcommand\Pred{\mathbb{P}}
\newcommand\Demon{{\mathtt{D}}}
\newcommand\Angel{{\mathtt{A}}}
\newcommand\Nature{{\mathtt{P}}}
\newcommand\DN{{\Demon\Nature}}
\newcommand\AN{{\Angel\Nature}}
\newcommand\ADN{{\Angel\Demon\Nature}}
\newcommand\pl{{\mathrm{p}}}
\newcommand\catc{{\mathbf{C}}} 
\newcommand\dto{\Rightarrow}
\newcommand{\identity}[1]{\mathrm{id}_{#1}}
\newcommand\id[1]{1_{#1}}
\newcommand\ext[1]{\underline{#1}}
\newcommand\upc{\mathop{\uparrow}}
\newcommand\dc{\mathop{\downarrow}}
\newcommand\Open{\mathcal O}
\newcommand{\real}{\mathbb{R}}
\newcommand\Rp{\real_+}
\newcommand\creal{\overline{\real}_+}
\newcommand\conv{\mathop{\mathrm{conv}}}
\newcommand\patch{{\text{\textsf{patch}}}}
\newtheorem{theorem}{Theorem} 
\newtheorem{lemma}[theorem]{Lemma}
\newtheorem{proposition}[theorem]{Proposition}
\newtheorem{definition}[theorem]{Definition}
\newcommand{\red}[1]{\textcolor{red}{#1}}
\newenvironment{longred}{\par\color{red}}{\par}
\newcommand{\red}[1]{#1}
\newenvironment{longred}{\par}{\par}
\begin{document}

\maketitle
\begin{abstract}
  Given two monads $S$, $T$ on a category where idempotents split, and
  a weak distributive law between them, one can build a combined monad
  $U$.  Making explicit what this monad $U$ is requires some effort.
  When we already have an idea what $U$ should be, we show how to
  recognize that $U$ is indeed the combined monad obtained from $S$
  and $T$: it suffices to exhibit what we call a distributing
  retraction of $ST$ onto $U$.  We show that distributing retractions
  and weak distributive laws are in one-to-one correspondence, in a
  2-categorical setting.  
  We give three
  applications, where $S$ is the Smyth, Hoare or Plotkin hyperspace
  monad, $T$ is a monad of continuous valuations, and $U$ is a monad
  of previsions or of forks, depending on the case.  As a byproduct,
  this allows us to describe the algebras of monads of superlinear,
  resp.\ sublinear previsions.  In the category of compact Hausdorff
  spaces, the Plotkin hyperspace monad is sometimes known as the
  Vietoris monad, the monad of probability valuations coincides with
  the Radon monad, and we infer that the associated combined monad is
  the monad of normalized forks.
\end{abstract}



\noindent
\begin{minipage}{0.25\linewidth}
  \includegraphics[scale=0.2]{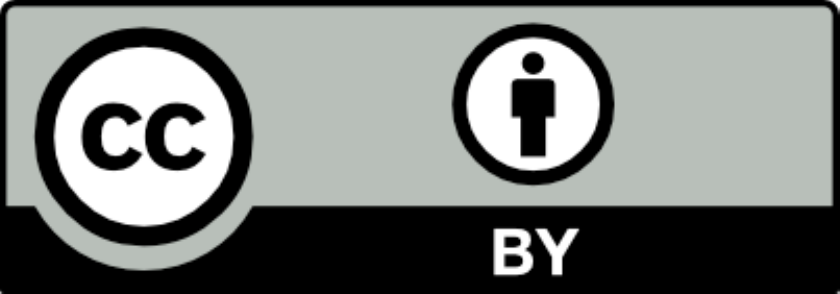}
\end{minipage}
\begin{minipage}{0.74\linewidth}
  \scriptsize
  For the purpose of Open Access, a CC-BY public copyright licence has
  been applied by the authors to the present document and will be
  applied to all subsequent versions up to the Author Accepted
  Manuscript arising from this submission.
\end{minipage}

\section{Introduction}
\label{sec:intro}

Jon Beck introduced the notion of distributive laws
$\lambda \colon TS \to ST$ of a monad $S$ over a monad $T$, and showed
that any such distributive law allowed for the construction of a
composite monad $ST$ \citep{Beck:distr}.  He also showed that there is
a one-to-one correspondence between distributive laws as above,
extensions of $T$ to the Kleisli category of $S$, and liftings of $S$
to the category of $T$-algebras.

It was once dreamed that one could perhaps use this construction in
order to define new monads, combining various side-effects in the
semantics of programming languages.  But distributive laws are rare,
and, as a case in point, there is no distributive law of the monad of
non-deterministic choice over the monad of discrete probability
measure on $\Setcat$ \citep{VW:distr}.  The same argument shows that
there is no distributive law of any of the monads of non-deterministic
choice (the Smyth powerdomain monad $\Smyth$, the Hoare powerdomain
monad $\Hoare$, the Plotkin powerdomain monad $\Plotkin$) over the
monad $\Val$ of continuous valuations (or $\Val_1$ [probability
valuations], or $\Val_{\leq 1}$ [subprobability valuations],
implementing probabilistic choice).

Gabriella B\"ohm was the first to study weakenings of the notion of
distributive law between monads \citep{Bohm:weak:monads}, in a general
$2$-categorical context.  Given two monads $S$ and $T$, a distributive
law $\lambda \colon TS \to ST$ is a natural transformation satisfying
four equations expressing compatibility with the units and
multiplications of each of the two monads.  \emph{Weak} distributive
laws are obtained by dropping one or the other of those equations.
Richard Garner observed that one of these kinds of weak distributive
laws, where compatibility with the unit of the $T$ monad is ignored,
could be applied to algebraic-effect type monads
\citep{Garner:weak:distr}.  This was pursused by Alexandre Goy in his
PhD thesis \citep{Goy:PhD}, by him and Daniela Petri\c san
\citep{goypetr-dp} and more recently by Quentin Aristote
\citep{Aristote:mono:WDL}.  This is the kind that we are interested in
here.

The main application we have in mind is the following.  Taking $S$ to
be the Smyth hyperspace monad $\SV$ over $\Topcat$, where $\SV X$ is
the collection of non-empty compact saturated subsets of a topological
space $X$, and $T$ to be the monad $\Val$ of continuous valuations, we
claim that there is a weak distributive law of $S$ over $T$.  This
yields a combined monad $U$, and we claim that $U$ is the monad
$\Pred_\DN$ of superlinear previsions of \citep{Gou-csl07}, once
properly recast as a monad on $\Topcat$, not just its full subcategory
$\Dcpo$.  We will come back to these notions in
Section~\ref{sec:smyth-hypersp-cont}.  The point is that, while
defining the weak distributive law is not complicated, checking that
it is a weak distributive law is more technical, and building the
combined monad and comparing it to the monad of superlinear previsions
is difficult.  (See \citep{JGL:wdistr}.)  Instead of carrying the
burden of building the combined monad $U$ from $S$ and $T$, we propose
to recognize $U$ as being obtained from $S$ and $T$ by some weak
distributive law (which will remain implicit), by exhibiting what we
call a distributing retraction of $ST$ onto $U$.  We show that
distributing retractions and weak distributive laws are two faces of
the same coin in Section~\ref{sec:distr-retr}.

The rest of the paper consists in applications.
Section~\ref{sec:smyth-hypersp-cont} is concerned with the three
monads $\SV$, $\Val_\bullet$ and $\Pred_\DN^\bullet$ mentioned above.
Section~\ref{sec:hoare-hypersp-cont} replaces $\SV$ by $\HV$, and
$\Pred_\DN^\bullet$ by the monad $\Pred_\AN^\bullet$ of sublinear
previsions of \citep{Gou-csl07}.  In those two sections, we take the
opportunity of characterizing the algebras of the $\Pred_\DN^\bullet$
and $\Pred_\AN^\bullet$ monads; this falls directly off the theory of
weak distributive laws.  Finally, Section~\ref{sec:plotk-powerd-cont}
is about the Plotkin hyperspace $\Plotkinn$ in lieu of $\SV$, and the
monad $\Pred_\ADN^\bullet$ of forks of \citep{Gou-csl07} in place of
$\Pred_\DN^\bullet$.  As a byproduct, we obtain (and generalize) the
weak distributive law of the Vietoris monad $\mathsf V_*$ over the
Radon monad $\mathsf R$ found by 
\citet{Aristote:mono:WDL} on the category $\KHaus$ of compact
Hausdorff spaces, and we show that the combined monad is
$\Pred_\ADN^1$.

We claim that it is significantly easier to recognize $U$ from a
distributing retraction than to build a weak distributive law between
$S$ and $T$, then the combined monad, and finally recognizing it as
the desired monad $U$.  The unpublished manuscript \citep{JGL:wdistr},
where we took the former approach, is long and technical, and quite a
bit more than the application sections of the present paper.

\section{Distributing retractions}
\label{sec:distr-retr}

While our main interest is on actual monads on a category, it is not
more complicated to reason in (strict) 2-categories \citep{KS:2cat}.
We write horizontal composition of 1-cells and 2-cells by
juxtaposition and vertical composition using $\circ$, so that our
notation will look like familiar notation for composition of functors
(e.g., $ST$) and of natural transformations (e.g.,
$\mu^T \circ T \eta^T$).  We write $\id \catc$ for the identity 1-cell
on the 0-cell $\catc$.  The \emph{interchange law} reads
$(g' \circ f') (g \circ f) = g'g \circ f'f$ for all 2-cells
$f \colon A \dto B$, $g \colon B \dto C$, $f' \colon A' \dto B'$ and
$g' \colon B' \dto C'$.  A 1-cell $F$ has an identity 2-cell, which we
also write as $F$.  As a special case of the interchange law, for all
1-cells $A, B, C, D \colon \catc \to \catc$ on the same 0-cell
$\catc$, for all 2-cells $f \colon A \dto B$ and $g \colon C \dto D$,
$g B \circ C f = (g \circ C) (B \circ f) = gf$, while
$D f \circ g A = (D \circ g) (f \circ A) = gf$, so
$g B \circ C f = D f \circ g A$ ($=gf$).  We will call the latter the
\emph{naturality} property of $g$.

A \emph{monad} in a 2-category $\mathcal K$ is a tuple
$(\catc, T, \eta^T, \mu^T)$ where $\catc$ is a 0-cell in $\mathcal K$,
$T \colon \catc \to \catc$ is a 1-cell,
$\eta^T \colon \id \catc \dto T$ and $\mu^T \colon TT \dto T$ are
2-cells subject to the following conditions:
$\mu^T \circ T \eta^T = T$, $\mu^T \circ \eta^T T = T$,
$\mu^T \circ T \mu^T = \mu^T \circ \mu^T T$.  We sometimes omit
$\catc$ and write $(T, \eta^T, \mu^T)$, saying that the latter is a
monad \emph{on} $\catc$, and we sometimes even call $T$ itself the
monad.  The 2-cell $\eta^T$ is the \emph{unit} of the monad $T$, and
$\mu^T$ is its \emph{multiplication}.

The usual monads on a category $\catc$ are the monads in the
2-category $\Cat$ of categories, functors and natural transformations.

Every monad $(T, \eta^T, \mu^T)$ admits an \emph{extension operation},
which maps every 1-cell $f \colon F \to TG$ to $f^{\ext T} \colon TF
\to TG$, defined as $\mu^T \circ T f$.  This will serve below.

A \emph{retraction} $\xymatrix{F \ar@<1ex>[r]^r & G \ar@<1ex>[l]^s}$
between two 1-cells $F$ and $G$ is given by two 2-cells $r$ and $s$ as
indicated, satisfying $r \circ s = G$.  In $\Cat$, a retraction is
better called a \emph{natural retraction}, in order to stress the fact
that it is comprised of two natural transformations that form
retractions at each object.

\begin{definition}
  \label{defn:STU:retr}
  Let $S$, $T$, $U$ be three monads on the same object $\catc$ of a
  2-category $\mathcal K$.  A \emph{distributing retraction} of $ST$
  onto $U$ is a retraction
  $\xymatrix{ST \ar@<1ex>[r]^r & U \ar@<1ex>[l]^s}$ satisfying the
  following laws:
  \begin{align}
    \label{eq:STU:eta}
    r \circ \eta^S \eta^T
    & = \eta^U \\
    \label{eq:STU:muS}
    r \circ \mu^S T & = \mu^U \circ ir
    \\
    \label{eq:STU:muT}
    r \circ S \mu^T & = \mu^U \circ rj, 
  \end{align}
  where $i \colon S \to U$ abbreviates $r \circ S \eta^T$ and $j
  \colon T \to U$ abbreviates $r \circ \eta^S T$, and:
  \begin{align}
    \label{eq:STU:etaS:conv}
    e \circ \eta^S T
    & = \eta^S T \\
    \label{eq:STU:SmuT:conv}
    e \circ S \mu^T
    & = S \mu^T \circ e T
    \\
    \label{eq:STU:muUS:conv}
    e \circ (s \circ \mu^U \circ j U)^{\ext S}
    & = (s \circ \mu^U \circ j U)^{\ext S} \circ e U,
  \end{align}
  where $\_^{\ext S}$ is the extension operation of the monad $S$ and
  $e$ abbreviates $s \circ r$.
\end{definition}

That probably looks complex, especially compared with the notion of
weak distributive law, which only requires three axioms (see below).
We will see in Section~\ref{sec:smyth-hypersp-cont} and subsequent
sections that the axioms of Definition~\ref{defn:STU:retr} are easy to
check in practice.  For now, let us observe that axioms
(\ref{eq:STU:eta}), (\ref{eq:STU:muS}) and (\ref{eq:STU:muT}) specify
compatibility conditions of the $r$ part with the units and
multiplications of the various monads.  The $s$ part is not involved,
and only intervenes in (\ref{eq:STU:etaS:conv}),
(\ref{eq:STU:SmuT:conv}) and (\ref{eq:STU:muUS:conv}) through the
idempotent $e \eqdef s \circ r$.  In applications to come, $e$ will
compute some form of convex hull, and one may think of
(\ref{eq:STU:etaS:conv}) as saying that $\eta^S T$ is convex-valued,
and of (\ref{eq:STU:SmuT:conv}) and (\ref{eq:STU:muUS:conv}) as saying
that $S \mu^T$ and $(s \circ \mu^U \circ j U)^{\ext S}$ preserve
\red{convex hulls}.
(In the case considered by Garner where $S$ is the
powerset monad, $T$ is the ultrafilter monad, and $U$ is the Vietoris
monad on the category of compact Hausdorff spaces, $e$ would compute
topological closure instead.)

A \emph{weak distributive law} \citep{Bohm:weak:monads,
  Garner:weak:distr} of a monad $S$ over a monad $T$ (on an object
$\catc$ in a 2-category $\mathcal K$) is a 1-cell
$\lambda \colon TS \to ST$ that satisfies the following laws:
\begin{align}
  \label{eq:distr:eta:1}
  \lambda \circ T \eta^S & = \eta^S T \\
  \label{eq:distr:mu:1}
  \lambda \circ T \mu^S & = \mu^S T \circ S \lambda \circ
                              \lambda S \\
  \label{eq:distr:mu:2}
  \lambda \circ \mu^T S & = S \mu^T \circ \lambda T \circ T \lambda.
\end{align}
A weak distributive law is a \emph{distributive law} if it
additionally satisfies the following:
\begin{align}
  \label{eq:distr:eta:2}
  \lambda \circ \eta^T S & = S \eta^T.
\end{align}

\subsection{Every distributing retraction defines a weak distributive
  law}
\label{sec:every-distr-retr}

Given two monads $T$ and $U$ on the same 0-cell $\catc$ of a
2-category $\mathcal K$, a \emph{monad morphism} from $T$ to $U$ is a
1-cell $f \colon T \to U$ such that $f \circ \eta^T = \eta^U$ and
$f \circ \mu^T = \mu^U \circ ff$.  There are several conventions in
the literature as to whether $f$ should be a 1-cell from $T$ to $U$ or
from $U$ to $T$, and this definition serves to clarify which notion we
mean.

\begin{lemma}
  \label{lemma:STU:retr}
  Let $S$, $T$, $U$ be three monads on the same 0-cell $\catc$ of a
  2-category $\mathcal K$, and
  $\xymatrix{ST \ar@<1ex>[r]^r & U \ar@<1ex>[l]^s}$ be a distributing
  retraction, with $i$ and $j$ defined as in
  Definition~\ref{defn:STU:retr}.  Then:
  \begin{enumerate}
  \item\label{it:i} $i$ is a monad morphism from $S$ to $U$;
  \item\label{it:j} $j$ is a monad morphism from $T$ to $U$;
  \item\label{it:sj} $s \circ j = \eta^S T$;
  \item\label{it:jeta} $j \circ \eta^T = \eta^U$;
  \item\label{it:muij} $\mu^U \circ ij = r$
  \end{enumerate}
\end{lemma}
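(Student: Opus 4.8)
\emph{Proof plan.} The plan is to check the five items by routine $2$-categorical calculation, using only axioms (\ref{eq:STU:eta}), (\ref{eq:STU:muS}), (\ref{eq:STU:muT}) and (\ref{eq:STU:etaS:conv}), the monad laws for $S$, $T$, $U$, and the interchange and naturality identities recorded at the start of Section~\ref{sec:distr-retr}; notably, neither (\ref{eq:STU:SmuT:conv}) nor (\ref{eq:STU:muUS:conv}), nor even the retraction equation $r \circ s = U$, will be needed here. I would dispatch the cheap items first. For (\ref{it:sj}): since $j = r \circ \eta^S T$ and $e = s \circ r$, we have $s \circ j = e \circ \eta^S T$, which equals $\eta^S T$ by (\ref{eq:STU:etaS:conv}). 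For the units, I would use that $S\eta^T \circ \eta^S$ and $\eta^S T \circ \eta^T$ are both the horizontal composite $\eta^S\eta^T$ (by naturality of $\eta^T$, resp.\ of $\eta^S$), so that $i \circ \eta^S = r \circ \eta^S\eta^T = \eta^U$ and $j \circ \eta^T = r \circ \eta^S\eta^T = \eta^U$ by (\ref{eq:STU:eta}). This already gives (\ref{it:jeta}) and the unit halves of (\ref{it:i}) and (\ref{it:j}).

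For the multiplication half of (\ref{it:i}), push $S\eta^T$ past $\mu^S$ by naturality, $S\eta^T \circ \mu^S = \mu^S T \circ SS\eta^T$, so that $i \circ \mu^S = r \circ \mu^S T \circ SS\eta^T$; rewrite $r \circ \mu^S T = \mu^U \circ ir$ by (\ref{eq:STU:muS}); and finish with the whiskering identities $Sr \circ SS\eta^T = S(r \circ S\eta^T) = Si$ and $iU \circ Si = ii$, which turn $ir \circ SS\eta^T$ into $ii$. The multiplication half of (\ref{it:j}) is the mirror image: naturality gives $\eta^S T \circ \mu^T = S\mu^T \circ \eta^S TT$, hence $j \circ \mu^T = r \circ S\mu^T \circ \eta^S TT$; (\ref{eq:STU:muT}) rewrites $r \circ S\mu^T = \mu^U \circ rj$; and pushing $\eta^S$ through once more, using $STj \circ \eta^S TT = \eta^S TU \circ Tj$ and then $rU \circ \eta^S TU = jU$ and $jU \circ Tj = jj$, turns $rj \circ \eta^S TT$ into $jj$. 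Thus $i \circ \mu^S = \mu^U \circ ii$ and $j \circ \mu^T = \mu^U \circ jj$, which completes (\ref{it:i}) and (\ref{it:j}).

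Finally, for (\ref{it:muij}) I would decompose the horizontal composite as $ij = Uj \circ iT$ (rather than $iU \circ Sj$), expand $iT = rT \circ S\eta^T T$ from $i = r \circ S\eta^T$, recognize $Uj \circ rT$ as the horizontal composite $rj$, and apply (\ref{eq:STU:muT}): this gives $\mu^U \circ ij = \mu^U \circ rj \circ S\eta^T T = r \circ S\mu^T \circ S\eta^T T$, and then $r \circ S\mu^T \circ S\eta^T T = r \circ S(\mu^T \circ \eta^T T) = r \circ ST = r$, using functoriality of whiskering by $S$ and the monad unit law $\mu^T \circ \eta^T T = T$. I do not expect a genuine obstacle: the only thing that needs care throughout is the bookkeeping of whiskering versus horizontal composition, that is, choosing at each step the whiskered form of a horizontal composite in which an axiom or a monad law applies verbatim. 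Of the five items, (\ref{it:muij}) is the one most easily botched, since it is the single place where a distributing-retraction axiom has to be combined with a monad unit law after re-expanding $i$; but with the decomposition $ij = Uj \circ iT$ the computation collapses in one line.
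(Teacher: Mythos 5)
Your proof is correct, and items \ref{it:i}--\ref{it:jeta} follow the paper's own computations almost verbatim (up to an immaterial choice of whiskered decomposition in the multiplication half of item~\ref{it:j}, where you use $rj = rU \circ STj$ and the paper uses $rj = Uj \circ rT$). The genuine divergence is in item~\ref{it:muij}. The paper composes both sides of (\ref{eq:STU:muS}) with $Ss$ on the right, invokes the retraction identity $r \circ s = U$ to obtain $\mu^U \circ iU = r \circ \mu^S T \circ Ss$, and then uses item~\ref{it:sj} (hence axiom (\ref{eq:STU:etaS:conv})) together with the $S$-monad unit law to conclude. You instead decompose $ij = Uj \circ iT$, recognize $\mu^U \circ ij = \mu^U \circ rj \circ S\eta^T T$, and apply (\ref{eq:STU:muT}) followed by the $T$-monad unit law $\mu^T \circ \eta^T T = T$. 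Your route is shorter and, as you announce, uses neither the retraction equation $r \circ s = U$ nor the $s$ part of the data at all in item~\ref{it:muij}; the paper's route leans on the $S$-monad laws and on item~\ref{it:sj}, yours on the $T$-monad laws and axiom (\ref{eq:STU:muT}) directly. Both are sound; yours isolates slightly more sharply which axioms each item depends on, which is a small but real gain given that the point of the lemma is to feed into Proposition~\ref{prop:STU:wdistr}.
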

\begin{proof}
  1. We have
  $i \circ \eta^S = r \circ S \eta^T \circ \eta^S = r \circ \eta^S
  \eta^T$ (by naturality of $\eta^S$) $= \eta^U$ (by
  (\ref{eq:STU:eta})).  Next,
  $i \circ \mu^S = r \circ S \eta^T \circ \mu^S$ (by definition of
  $i$) $= r \circ \mu^S T \circ SS\eta^T$ (by naturality of $\mu^S$)
  $= \mu^U \circ ir \circ SS \eta^T$ (by (\ref{eq:STU:muS}))
  $= \mu^U \circ i U \circ S r \circ SS\eta^T$ (by naturality of $i$)
  $= \mu^U \circ i U \circ S i$ (by definition of $i$)
  $= \mu^U \circ ii$ (by naturality of $i$).

  2.  Similarly,
  $j \circ \eta^T = r \circ \eta^S T \circ \eta^T = r \circ \eta^S
  \eta^T$ (by naturality of $\eta^S$) $= \eta^U$ by
  (\ref{eq:STU:eta}), and
  $j \circ \mu^T = r \circ \eta^S T \circ \mu^T$ (by definition of
  $j$) $= r \circ S \mu^T \circ \eta^S TT$ (by naturality of $\eta^S$)
  $= \mu^U \circ rj \circ \eta^S TT$ (by (\ref{eq:STU:muT}))
  $= \mu^U \circ U j \circ r T \circ \eta^S TT$ (by naturality of $r$)
  $= \mu^U \circ U j \circ j T$ (by definition of $j$)
  $= \mu^U \circ jj$ (by naturality of $j$).

  3.  This is just a rephrasing of (\ref{eq:STU:etaS:conv}).

  4.  This is a rephrasing of (\ref{eq:STU:eta}), using the naturality
  of $\eta^S$.

  5.  We first compose both sides of (\ref{eq:STU:muS}) by $S s$ on
  the right.  On the right-hand side,
  $\mu^U \circ ir \circ S s = \mu^U \circ i U \circ S r \circ S s =
  \mu^U \circ i U$, by naturality of $i$ and the fact that $r$ and $s$
  form a retraction.  Hence
  $\mu^U \circ i U = r \circ \mu^S T \circ S s$.  Using this,
  $\mu^U \circ i U \circ S j = r \circ \mu^S T \circ S s \circ S j = r
  \circ \mu^S T \circ S \eta^S T$ (by item~\ref{it:sj} above) $= r$
  (by the monad equations for $S$).  We conclude since
  $\mu^U \circ i U \circ S j = \mu^U \circ ij$, by naturality of $i$.
\end{proof}

\begin{proposition}
  \label{prop:STU:wdistr}
  Let $S$, $T$, $U$ be three monads on the same 0-cell $\catc$ of a
  2-category $\mathcal K$, and
  $\xymatrix{ST \ar@<1ex>[r]^r & U \ar@<1ex>[l]^s}$ be a distributing
  retraction.  We define $i \eqdef r \circ S \eta^T \colon S \to U$,
  $j \eqdef r \circ \eta^S T \colon T \to U$ and
  $e \eqdef s \circ r \colon ST \to ST$, as in
  Definition~\ref{defn:STU:retr}.  Then:
  \begin{align}
    \label{eq:lambda}
    \lambda \eqdef s \circ \mu^U \circ ji \colon TS \to ST
  \end{align}
  is a weak distributive law of $S$ over $T$.
\end{proposition}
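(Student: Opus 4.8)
The plan is to verify the three equations \eqref{eq:distr:eta:1}, \eqref{eq:distr:mu:1}, \eqref{eq:distr:mu:2} defining a weak distributive law, one at a time, through a single uniform reduction. First I would collect the basic facts. From $r \circ s = U$, the $2$-cell $e = s \circ r$ is idempotent and satisfies $e \circ s = s$ and $r \circ e = r$; by Lemma~\ref{lemma:STU:retr}, $i$ and $j$ are monad morphisms with $i \circ \eta^S = j \circ \eta^T = \eta^U$, $\mu^U \circ ij = r$, and $s \circ j = \eta^S T$; and from $\lambda = s \circ \mu^U \circ ji$ together with $r \circ s = U$ one reads off $r \circ \lambda = \mu^U \circ ji$ and $e \circ \lambda = \lambda$. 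Write $\bar\lambda$ for $\mu^U \circ ji = r \circ \lambda$. The reduction is this: $s$ has the left inverse $r$, hence is monic, so if two $2$-cells $P$, $Q$ satisfy $e \circ P = P$, $e \circ Q = Q$ and $r \circ P = r \circ Q$, then $P = e \circ P = s \circ (r \circ P) = s \circ (r \circ Q) = e \circ Q = Q$. So for each of the three equations, with $P$ its left-hand side and $Q$ its right-hand side, it suffices to prove (a) $e \circ P = P$ and $e \circ Q = Q$, and (b) $r \circ P = r \circ Q$.

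Part (a) is where the three ``convexity'' axioms of Definition~\ref{defn:STU:retr} enter, exactly one each. For all three equations the left-hand side $P$ is a $\lambda$ whiskered by $1$-cells, so $e \circ P = P$ follows from $e \circ \lambda = \lambda$. On the right: in \eqref{eq:distr:eta:1}, $Q = \eta^S T$ is $e$-fixed by \eqref{eq:STU:etaS:conv} verbatim. In \eqref{eq:distr:mu:2}, $Q = S\mu^T \circ \lambda T \circ T\lambda$, and \eqref{eq:STU:SmuT:conv} slides $e$ past $S\mu^T$: $e \circ Q = S\mu^T \circ eT \circ \lambda T \circ T\lambda = S\mu^T \circ (e \circ \lambda)T \circ T\lambda = Q$. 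In \eqref{eq:distr:mu:1}, $Q = \mu^S T \circ S\lambda \circ \lambda S$; here one uses the factorization $\lambda = (s \circ \mu^U \circ jU) \circ Ti$ to rewrite $\mu^S T \circ S\lambda = (s \circ \mu^U \circ jU)^{\ext S} \circ STi$, so that \eqref{eq:STU:muUS:conv} applies and yields $e \circ Q = (s \circ \mu^U \circ jU)^{\ext S} \circ eU \circ STi \circ \lambda S$; it then remains to observe that $STi \circ \lambda S$ factors through $sU$ on the left (since $\lambda S$ begins with $sS$ and $STi \circ sS = sU \circ Ui$ by interchange) and that $eU \circ sU = (e \circ s)U = sU$, whence $e \circ Q = Q$.

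Part (b) is a diagram chase using only the monad laws of $S$, $T$, $U$, naturality and interchange, and the facts from Lemma~\ref{lemma:STU:retr} --- no convexity axioms. For \eqref{eq:distr:eta:1}: $r \circ P = \bar\lambda \circ T\eta^S = \mu^U \circ ji \circ T\eta^S = \mu^U \circ U\eta^U \circ j = j = r \circ \eta^S T = r \circ Q$, using $i \circ \eta^S = \eta^U$ and the unit law $\mu^U \circ U\eta^U = U$. For \eqref{eq:distr:mu:1} and \eqref{eq:distr:mu:2} one additionally uses \eqref{eq:STU:muS}, resp.\ \eqref{eq:STU:muT}, to replace $r \circ \mu^S T$ by $\mu^U \circ ir$, resp.\ $r \circ S\mu^T$ by $\mu^U \circ rj$; then $r \circ \lambda = \bar\lambda = \mu^U \circ ji$, the monad-morphism equations of $i$ and $j$, and associativity of $\mu^U$ rewrite both $r \circ P$ and $r \circ Q$ as the same iterated multiplication of whiskered copies of $i$ and $j$. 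The point to isolate is that after these manipulations one occurrence of $\lambda$ --- hence of $s$ --- survives on the $Q$ side; it is absorbed because the monad-morphism computation produces the factor $\mu^U \circ ij$, which equals $r$, and $r \circ s = U$.

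I expect the main obstacle to be the bookkeeping in part (b): one must keep track of which factor each whiskered $2$-cell acts on, hence which copies of $i$, $j$, $\mu^U$ may be commuted past one another (those acting on disjoint factors) and which may not; and it is the identities $\mu^U \circ ij = r$ and $r \circ s = U$ that make the stray $\lambda$ on the right-hand sides disappear. By contrast the conceptual content --- the ``$e$-fixed, then compare under $r$'' device, and the three single uses of the convexity axioms --- is light, the only mildly subtle recognition being that $\mu^S T \circ S\lambda$ is the $S$-extension of the very $1$-cell $s \circ \mu^U \circ jU$ occurring in \eqref{eq:STU:muUS:conv}.
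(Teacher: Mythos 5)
Your proposal is correct and follows essentially the same route as the paper's proof: the ``both sides are $e$-fixed, then compare under $r$'' device, with each of (\ref{eq:STU:etaS:conv}), (\ref{eq:STU:SmuT:conv}), (\ref{eq:STU:muUS:conv}) used exactly once, the recognition of $\mu^S T \circ S\lambda$ as $(s \circ \mu^U \circ jU)^{\ext S} \circ STi$, and the absorption of the stray $s$ via $\mu^U \circ ij = r$ and $r \circ s = U$. The only (cosmetic) differences are that the paper proves (\ref{eq:distr:eta:1}) by a direct computation ending in $s \circ j = \eta^S T$ rather than by the uniform reduction, and finishes the $e$-fixedness of the right-hand side of (\ref{eq:distr:mu:1}) by naturality of $e$ together with $e \circ \lambda = \lambda$ instead of your factorization through $sU$.
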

\begin{proof}
  We first check (\ref{eq:distr:eta:1}).  We have
  $\lambda \circ T \eta^S = s \circ \mu^U \circ ji \circ T \eta^S = s
  \circ \mu^U \circ j U \circ T i \circ T \eta^S$, by naturality of
  $j$.  Now
  $i \circ \eta^S = r \circ S \eta^T \circ \eta^S = r \circ \eta^S
  \eta^T$ (by naturality of $\eta^S$) $= \eta^U$ (by
  (\ref{eq:STU:eta})), so
  $\lambda \circ T \eta^S = s \circ \mu^U \circ j U \circ T \eta^U$.
  This is equal to $s \circ \mu^U \circ U \eta^U \circ j$ by
  naturality of $j$, hence to $s \circ j$ since $U$ is a monad, and
  then to $\eta^S T$ by Lemma~\ref{lemma:STU:retr}, item~\ref{it:sj}.

  The other equalities need more work.

  We claim that:
  \begin{align}
    \label{eq:STU:wdistr:ii}
    r \circ \lambda \circ T \mu^S
    &
      \text{ and }
      r \circ \mu^S T \circ S \lambda \circ \lambda S
      \text{ are both equal to }
      \mu^U \circ \mu^U U \circ jii.
  \end{align}
  Indeed,
  \begin{align*}
    r \circ \lambda \circ T \mu^S
    & = \mu^U \circ ji \circ T \mu^S_X
    & \text{since $r, s$ form a retraction} \\
    & = \mu^U \circ j (i \circ \mu^S)
    & \text{by the interchange law} \\
    & = \mu^U \circ j (\mu^U \circ ii)
    & \text{since $i$ is a monad morphism
      (Lemma~\ref{lemma:STU:retr}, item~\ref{it:i})} \\
    & = \mu^U \circ U \mu^U \circ jii
    & \text{by the interchange law} \\
    & = \mu^U \circ \mu^U U \circ jii
  \end{align*}
  since $U$ is a monad, while:
  \begin{align*}
    & r \circ \mu^S T \circ S \lambda \circ \lambda S \\
    & = \mu^U \circ ir \circ S \lambda \circ
      \lambda S
    & \text{by (\ref{eq:STU:muS})} \\
    & = \mu^U \circ i U \circ S r \circ S \lambda \circ
      \lambda S
    & \text{by  naturality of $i$} \\
    & = \mu^U \circ i U \circ S \mu^U \circ S ji \circ \lambda S
    & \text{by def.\ of $\lambda$ and since $r, s$ form a retraction} \\
    & = \mu^U \circ U \mu^U \circ i UU \circ S ji \circ \lambda S
    & \text{by naturality of $i$} \\
    & = \mu^U \circ \mu^U U \circ i UU
      \circ S ji \circ \lambda S
    & \text{since $U$ is a monad} \\
    & = \mu^U \circ \mu^U U \circ ij U
      \circ ST i \circ \lambda S
    & \text{by the interchange law} \\
    & = \mu^U \circ r U
      \circ ST i \circ \lambda S
    & \text{by Lemma~\ref{lemma:STU:retr}, item~\ref{it:muij}} \\
    & = \mu^U \circ U i \circ r S \circ \lambda S
    & \text{by naturality of $r$} \\
    & = \mu^U \circ U i \circ \mu^U S \circ ji S
    & \text{by def.\ of $\lambda$ and since $r, s$ form a retraction} \\
    & = \mu^U \circ \mu^U U \circ UU i \circ ji S
    & \text{by naturality of $i$} \\
    & = \mu^U \circ \mu^U U \circ jii
    & \text{by the interchange law.}
  \end{align*}

  Following the intuition we sketched earlier, let us say that a
  1-cell $f \colon Z \to STY$ is \emph{convex-valued} if and only if
  $e \circ f = f$.

  Then $\lambda$ is convex-valued, since
  $e \circ \lambda = s \circ r \circ \lambda = s \circ \mu^U \circ ji$
  (by definition of $\lambda$ and since $r, s$ form a retraction)
  $= \lambda$.

  Both sides of (\ref{eq:distr:mu:1}) are convex-valued, too.  The
  left-hand side is because $\lambda$ is convex-valued, and therefore
  so is any composition $\lambda \circ f$.  The right-hand side is
  equal to
  $\mu^S \circ S (s \circ \mu^U \circ j U) \circ ST i \circ \lambda S$
  (using the interchange law), namely to
  $(s \circ \mu^U \circ j U)^{\ext S} \circ ST i \circ \lambda S$.
  Now:
  \begin{align*}
    & e \circ (s \circ \mu^U \circ j U)^{\ext S} \circ ST i \circ
    \lambda S \\
    & = (s \circ \mu^U \circ j U)^{\ext S} \circ e \circ ST i \circ
      \lambda S
    & \text{by (\ref{eq:STU:muUS:conv})} \\
    & = (s \circ \mu^U \circ j U)^{\ext S} \circ ST i \circ e \circ
      \lambda S
    & \text{by naturality of $e$} \\
    & = (s \circ \mu^U \circ j U)^{\ext S} \circ ST i \circ
      \lambda S
    & \text{since $\lambda$ is convex-valued,}
  \end{align*}
  and we have noticed earlier that the latter is equal to the
  right-hand side of (\ref{eq:distr:mu:1}).

  From this, we deduce that (\ref{eq:distr:mu:1}) holds.  In general,
  if $f$ and $g$ are convex-valued, it suffices to show that
  $r \circ f = r \circ g$ in order to assert that $f=g$.  Indeed, if
  $r \circ f = r \circ g$, then
  $e \circ f = s \circ r \circ f = s \circ r \circ g = e \circ g$, and
  since $f$ and $g$ are convex-valued, this entails $f=g$.  In the
  present case, $f=\lambda \circ T \mu^S$,
  $g = \mu^S T \circ S \lambda \circ \lambda S$, and
  $r \circ f = r \circ g$ is (\ref{eq:STU:wdistr:ii}).

  For the last of the three equations, we claim that:
  \begin{align}
    \label{eq:STU:wdistr:vi}
    r \circ \lambda \circ \mu^T S
    & \text{ and }
      r \circ S \mu^T \circ \lambda T \circ T \lambda
      \text{ are both equal to }
      \mu^U \circ \mu^UU \circ jji.
  \end{align}
  For the first one,
  \begin{align*}
    & r \circ \lambda \circ \mu^T S \\
    & = \mu^U \circ ji \circ \mu^T S
    & \text{by def.\ of $\lambda$ and since $r, s$ is a retraction} \\
    & = \mu^U \circ j U \circ T i \circ \mu^T S
    & \text{by naturality of $j$} \\
    & = \mu^U \circ j U \circ \mu^T U \circ TT i
    & \text{by naturality of $\mu^T$} \\
    & = \mu^U \circ \mu^U U \circ jj U \circ TT i
    & \text{since $j$ is a monad morphism (Lemma~\ref{lemma:STU:retr},
      item~\ref{it:j})} \\
    & = \mu^U \circ \mu^U U \circ jji
    & \text{by the interchange law.}
  \end{align*}
  For the second one,
  \begin{align*}
    & r \circ S \mu^T \circ \lambda T \circ T \lambda \\
    & = \mu^U \circ rj \circ \lambda T \circ T \lambda
    & \text{by (\ref{eq:STU:muT})} \\
    & = \mu^U \circ U j \circ r T \circ \lambda T \circ T \lambda
    & \text{by naturality of $r$} \\
    & = \mu^U \circ U j \circ
      \mu^U T \circ ji T
      \circ T \lambda
    & \text{by def. of $\lambda$ and since $r, s$ form a retraction} \\
    & = \mu^U \circ \mu^U U \circ UU j \circ ji T
      \circ T \lambda
    & \text{by naturality of $\mu^U$} \\
    & = \mu^U \circ \mu^U U \circ U ij \circ j ST \circ T \lambda
    & \text{by the interchange law} \\
    & = \mu^U \circ \mu^U U \circ U ij
      \circ U \lambda \circ j TS
    & \text{by naturality of $j$} \\
    & = \mu^U \circ U \mu^U \circ U ij
      \circ U \lambda \circ j TS
    & \text{since $U$ is a monad} \\
    & = \mu^U \circ U r
      \circ U \lambda \circ j TS
    & \text{by Lemma~\ref{lemma:STU:retr}, item~\ref{it:muij}} \\
    & = \mu^U \circ U \mu^U \circ U ji
      \circ j TS
    & \text{by def.\ of $\lambda$ and since $r, s$ form a retraction} \\
    & = \mu^U \circ \mu^U U \circ jji
    & \text{by naturality of $j$.}
  \end{align*}
  By (\ref{eq:STU:wdistr:vi}),
  $r \circ \lambda \circ \mu^T S = r \circ S \mu^T \circ \lambda T
  \circ T \lambda$, namely $r$ composed with either side of
  (\ref{eq:distr:mu:2}) yields the same value.  Hence, in order to
  show (\ref{eq:distr:mu:2}), it suffices to show that both sides of
  (\ref{eq:distr:mu:2}) are convex-valued.

  We remember that $\lambda$ is convex-valued, hence also the
  left-hand side $\lambda \circ \mu^T S$ of (\ref{eq:distr:mu:2}).
  For the right-hand side, we have:
  \begin{align*}
    e \circ S \mu^T \circ \lambda T \circ T \lambda
    & = S \mu^T \circ e
      \circ \lambda T \circ T \lambda
    & \text{by (\ref{eq:STU:SmuT:conv})} \\
    & =  S \mu^T \circ \lambda T \circ T \lambda,
  \end{align*}
  since $\lambda$ is convex-valued.
\end{proof}

While we are at it, we observe the following.
\begin{proposition}
  \label{prop:STU:wdistr:distr}
  With the same assumptions as in Proposition~\ref{prop:STU:wdistr},
  $\lambda$ is a distributive law if and only if $e$ is the identity,
  if and only if $r, s$ is an isomorphism.
\end{proposition}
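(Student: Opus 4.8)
The plan is to reduce the whole statement to the single identity
\[
  \lambda \circ \eta^T S = e \circ S \eta^T .
\]
I would prove this directly from the definition $\lambda \eqdef s \circ \mu^U \circ ji$: by the interchange law, $\mu^U \circ ji \circ \eta^T S = \mu^U \circ (j \circ \eta^T) i$, which by Lemma~\ref{lemma:STU:retr}, item~\ref{it:jeta}, equals $\mu^U \circ \eta^U U \circ i$, hence equals $i$ by the unit law $\mu^U \circ \eta^U U = U$ of the monad $U$; composing with $s$ on the left and unfolding $i = r \circ S \eta^T$, $e = s \circ r$ gives $\lambda \circ \eta^T S = s \circ i = s \circ r \circ S\eta^T = e \circ S \eta^T$, as claimed.

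Granting this, the three equivalences fall out. First, ``$e$ is the identity $\iff$ $r,s$ is an isomorphism'': since $r, s$ form a retraction we already have $r \circ s = U$, so the pair is an isomorphism exactly when in addition $s \circ r = ST$, i.e.\ when $e = ST$. Next, ``$e = ST \Rightarrow \lambda$ is a distributive law'': in that case $\lambda \circ \eta^T S = e \circ S \eta^T = S \eta^T$, which is precisely the extra axiom \eqref{eq:distr:eta:2} needed to upgrade the weak distributive law of Proposition~\ref{prop:STU:wdistr} to a genuine distributive law. Finally, ``$\lambda$ a distributive law $\Rightarrow e = ST$'': from \eqref{eq:distr:eta:2} and the displayed identity we get $e \circ S \eta^T = S \eta^T$, and I would bootstrap this to $e = ST$ via
\[
  e = e \circ S \mu^T \circ S \eta^T T
    = S \mu^T \circ e T \circ S \eta^T T
    = S \mu^T \circ (e \circ S \eta^T) T
    = S \mu^T \circ S \eta^T T
    = ST ,
\]
where the first equality uses $ST = S \mu^T \circ S \eta^T T$ (the monad law $\mu^T \circ \eta^T T = T$ whiskered by $S$), the second uses axiom \eqref{eq:STU:SmuT:conv} to commute $e$ past $S \mu^T$, and the last two use $e \circ S \eta^T = S \eta^T$ and the monad law once more.

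I do not anticipate a real obstacle here: the argument is one computation followed by a short bootstrap. The only points that need a little care are that the first identity genuinely relies on Lemma~\ref{lemma:STU:retr}, item~\ref{it:jeta}, together with the unit law of $U$, and — more importantly — that it is exactly the ``convexity-preservation'' axiom \eqref{eq:STU:SmuT:conv}, rather than mere idempotence of $e$, that turns ``$e$ fixes $S\eta^T$'' into ``$e$ is the identity everywhere''. It is also worth stressing, as in the equivalence with ``isomorphism'', that a distributing retraction only supplies $r \circ s = U$, so ``isomorphism'' really does add the content $s \circ r = ST$.
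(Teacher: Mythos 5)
Your proof is correct and follows essentially the same route as the paper's: establish the key identity $\lambda \circ \eta^T S = e \circ S \eta^T$, then use axiom (\ref{eq:STU:SmuT:conv}) together with the monad law $\mu^T \circ \eta^T T = T$ to bootstrap ``$e$ fixes $S\eta^T$'' into ``$e$ is the identity''. Your derivation of the key identity via the interchange law and Lemma~\ref{lemma:STU:retr}, item~\ref{it:jeta} is a slightly shorter packaging of the paper's inline computation, but the argument is the same.
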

\begin{proof}
  We first notice that $\lambda \circ \eta^T S = e \circ S \eta^T$.
  Indeed:
  \begin{align*}
    \lambda \circ \eta^T S
    & = s \circ \mu^U \circ ji \circ \eta^T S \\
    & = s \circ \mu^U \circ j U \circ S i \circ \eta^T S
    & \text{by naturality of $j$} \\
    & = s \circ \mu^U \circ j U \circ \eta^T U \circ i
    & \text{by naturality of $\eta^T$} \\
    & = s \circ \mu^U \circ r U \circ \eta^S TU \circ \eta^T U \circ i
    & \text{by definition of $j$} \\
    & = s \circ \mu^U \circ r U \circ \eta^S \eta^T U \circ i
    & \text{by naturality of $\eta^S$} \\
    & = s \circ \mu^U \circ \eta^U U \circ i
    & \text{by (\ref{eq:STU:eta})} \\
    & = s \circ i
    & \text{since $U$ is a monad} \\
    & = s \circ r \circ S \eta^T
    & \text{by definition of $i$} \\
    & = e \circ S \eta^T
    & \text{by definition of $e$.}
  \end{align*}

  If $\lambda$ is a distributive law, then (\ref{eq:distr:eta:2})
  holds, so $\lambda \circ \eta^T S = S \eta^T$, and therefore
  $S \eta^T = e \circ S \eta^T$.  Then
  $S \mu^T \circ e T \circ S \eta^T T = S \mu^T \circ S \eta^T T$ is
  the identity (by the monad equations for $T$), but also
  $S \mu^T \circ e T \circ S \eta^T T = e \circ S \mu^T \circ S \eta^T
  T$ (by (\ref{eq:STU:SmuT:conv})) $= e$, since $T$ is a monad.
  Therefore $e$ is the identity, or equivalently, $r, s$ form an
  isomorphism.

  Conversely, if $e$ is the identity, then $\lambda \circ \eta^T S = e
  \circ S \eta^T$ means that $\lambda \circ \eta^T S = S \eta^T$,
  which is exactly (\ref{eq:distr:eta:2}); hence $\lambda$ is a
  distributive law.
\end{proof}

We also note that a distributing retraction defines the monad $U$ in a
unique way from $S$ and $T$.
\begin{proposition}
  \label{prop:STU:unique}
  With the same assumptions as in Proposition~\ref{prop:STU:wdistr},
  \begin{align*}
    \eta^U & \eqdef r \circ \eta^S \eta^T \\
    \mu^U & \eqdef r \circ \mu^S \mu^T \circ S \lambda T \circ ss.
  \end{align*}
\end{proposition}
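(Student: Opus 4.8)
The first identity is precisely axiom~(\ref{eq:STU:eta}) of Definition~\ref{defn:STU:retr}, so there is nothing to do there. For the second identity $\mu^U = r \circ \mu^S \mu^T \circ S\lambda T \circ ss$, the plan is to first prove the auxiliary equation $r \circ \mu^S \mu^T \circ S\lambda T = \mu^U \circ rr$ of $2$-cells $STST \to U$. Granting it, the claim follows immediately: $r \circ \mu^S \mu^T \circ S\lambda T \circ ss = \mu^U \circ rr \circ ss = \mu^U \circ (r\circ s)(r\circ s) = \mu^U \circ UU = \mu^U$, where the second equality is an instance of the interchange law and the last uses the retraction identity $r \circ s = U$.

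To prove the auxiliary equation, I would rewrite its left-hand side by a purely formal computation. Factor $\mu^S\mu^T = \mu^S T \circ SS\mu^T$ by the interchange law, so that $r \circ \mu^S T \circ SS\mu^T \circ S\lambda T$ may be simplified via~(\ref{eq:STU:muS}) to $\mu^U \circ ir \circ SS\mu^T \circ S\lambda T$. The interchange law rewrites $ir \circ SS\mu^T$ as $i\,(r \circ S\mu^T)$, so that~(\ref{eq:STU:muT}) turns $r \circ S\mu^T$ into $\mu^U \circ rj$; then repeated use of the interchange law, the naturality of $i$ (in the form $iU \circ S\mu^U = U\mu^U \circ iUU$) and the associativity of $\mu^U$ brings the expression to $\mu^U \circ \mu^U U \circ irj \circ S\lambda T$, with $irj \colon SSTT \to UUU$. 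Next, from the definition $\lambda = s \circ \mu^U \circ ji$ and $r \circ s = U$ we get $r \circ \lambda = \mu^U \circ ji$, whence $irj \circ S\lambda T = i\,(r\circ\lambda)\,j = i\,(\mu^U \circ ji)\,j$ by the interchange law. A final appeal to the associativity of $\mu^U$ (applied to the fourfold composite $ijij$, appropriately whiskered) rewrites $\mu^U \circ \mu^U U \circ i\,(\mu^U \circ ji)\,j$ as $\mu^U \circ \mu^U\mu^U \circ ijij = \mu^U \circ (\mu^U \circ ij)(\mu^U \circ ij)$, and this is $\mu^U \circ rr$ by Lemma~\ref{lemma:STU:retr}\ref{it:muij}, which says $\mu^U \circ ij = r$.

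The entire argument is a formal calculation in the $2$-category $\mathcal K$, so the only real obstacle is the bookkeeping of horizontal and vertical composites: one must whisker correctly and apply the interchange law, the naturality of $i$, the associativity of $\mu^U$ and the retraction identity $r \circ s = U$ at the right places. It is perhaps worth noting that the ``convexity'' axioms~(\ref{eq:STU:etaS:conv})--(\ref{eq:STU:muUS:conv}) are not used directly in this proof; what is needed of them is already packaged into Lemma~\ref{lemma:STU:retr}, whose proof does rely on~(\ref{eq:STU:etaS:conv}).
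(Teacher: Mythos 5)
Your proof is correct and follows essentially the same route as the paper's: both reduce the second identity to the auxiliary equation $r \circ \mu^S\mu^T \circ S\lambda T = \mu^U \circ rr$, established via (\ref{eq:STU:muS}), (\ref{eq:STU:muT}), the interchange law, naturality and associativity of $\mu^U$, the identity $r \circ \lambda = \mu^U \circ ji$, and finally $\mu^U \circ ij = r$ from Lemma~\ref{lemma:STU:retr}, before composing with $ss$ and using $r \circ s = U$. The only (immaterial) difference is that the paper first records $r \circ \mu^S\mu^T = \mu^U \circ \mu^U U \circ irj$ as a standalone step before whiskering with $S\lambda T$.
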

\begin{proof}
  The first equation is (\ref{eq:STU:eta}).  For the second one, we
  observe that:
  \begin{align*}
    r \circ \mu^S \mu^T
    & = r \circ \mu^S T \circ SS \mu^T
    & \text{by naturality of $\mu^S$} \\
    & = \mu^U \circ ir \circ SS \mu^T
    & \text{by (\ref{eq:STU:muS})} \\
    & = \mu^U \circ U r \circ i ST \circ SS \mu^T
    & \text{by naturality of $i$} \\
    & = \mu^U \circ U r \circ US \mu^T \circ i STT
    & \text{by naturality of $i$} \\
    & = \mu^U \circ U \mu^U \circ U rj \circ i STT
    & \text{by (\ref{eq:STU:muT})} \\
    & = \mu^U \circ U \mu^U \circ irj
    & \text{by naturality of $i$} \\
    & = \mu^U \circ \mu^U U \circ irj
    & \text{since $U$ is monad.}
  \end{align*}
  Then:
  \begin{align*}
    & r \circ \mu^S \mu^T \circ S \lambda T  \\
    & = \mu^U \circ \mu^U U \circ irj \circ S s T \circ S \mu^U T \circ S ji T
    & \text{by the above and def.\ of $\lambda$} \\
    & = \mu^U \circ \mu^U U \circ i U j \circ S \mu^U T \circ S ji T
    & \text{by the interchange law, $(r, s)$ is a retraction}
    \\
    & = \mu^U \circ \mu^U U \circ U \mu^U U \circ i UU j  \circ S ji T
    & \text{by the interchange law} \\
    & = \mu^U \circ \mu^U U \circ \mu^U UU \circ ijij 
    & \text{since $U$ is a monad} \\
    & = \mu^U \circ U \mu^U \circ \mu^U UU \circ ijij 
    & \text{since $U$ is a monad} \\
    & = \mu^U \circ \mu^U \mu^U \circ ijij 
    & \text{by naturality of $\mu^U$} \\
    & = \mu^U \circ rr
    & \text{by Lemma~\ref{lemma:STU:retr}, item~\ref{it:muij}.}
  \end{align*}
  Composing with $ss$ on the right, and since $r, s$ form a retraction,
  it follows that
  $r \circ \mu^S \mu^T \circ S \lambda T \circ ss = \mu^U$.
\end{proof}

\subsection{Every weak distributive law defines a distributing
  retraction}
\label{sec:every-weak-distr}

Given two monads $S$ and $T$ on the same 0-cell $\catc$ of a
2-category $\mathcal K$, and a weak distributive law $\lambda$ of $S$
over $T$, there is a combined monad $U$ that one obtains from $ST$ by
splitting idempotents \citep{Bohm:weak:monads}.  Let us recall the
construction.

An idempotent (on $\catc$) is a 2-cell $e \colon F \to F$ such that
$e \circ e = e$.  Given a retraction
$\xymatrix{F \ar@<1ex>[r]^r & G \ar@<1ex>[l]^s}$, $s \circ r$ is an
idempotent.  Conversely, given an idempotent $e$, a \emph{splitting}
of $e$ is a retraction $r, s$ such that $e = s \circ r$.

Given a weak distributive law of $S$ over $T$, we form the following
idempotent:
\begin{align}
  \label{eq:e}
  e & \eqdef S \mu^T \circ \lambda T \circ \eta^T ST
\end{align}
from $ST$ to $ST$.  This is an idempotent because of the following.
\begin{align*}
  e \circ e
  & = S \mu^T \circ \lambda T \circ \eta^T ST
    \circ S \mu^T \circ \lambda T \circ \eta^T ST
  \\
  & = S \mu^T \circ \lambda T \circ TS \mu^T
    \circ \eta^T STT \circ \lambda T \circ \eta^T ST
  & \text{by naturality of $\eta^T$} \\
  & = S \mu^T \circ ST \mu^T \circ \lambda TT
    \circ \eta^T STT \circ \lambda T \circ \eta^T ST
  & \text{by naturality of $\lambda$} \\
  & = S \mu^T \circ S \mu^T T \circ \lambda TT
    \circ \eta^T STT \circ \lambda T \circ \eta^T ST
  & \text{since $T$ is a monad} \\
  & = S \mu^T \circ S \mu^T T \circ \lambda TT
    \circ T \lambda T \circ \eta^T TST \circ \eta^T ST
  & \text{by naturality of $\eta^T$} \\
  & = S \mu^T \circ \lambda T \circ \mu^T ST
    \circ \eta^T TST \circ \eta^T ST
  & \text{by (\ref{eq:distr:mu:2})} \\
  & = S \mu^T \circ \lambda T \circ \eta^T ST
  & \text{since $T$ is a monad} \\
  & = e.
\end{align*}

Assuming a splitting $\xymatrix{ST \ar@<1ex>[r]^r & U \ar@<1ex>[l]^s}$
of $e$, the \emph{combined monad} $(U, \eta^U, \mu^U)$ is defined by:
\begin{align}
  \label{eq:etaU}
  \eta^U & \eqdef r \circ \eta^S \eta^T \\
  \label{eq:muU}
  \mu^U & \eqdef r \circ \mu^S \mu^T \circ S \lambda T \circ ss,
\end{align}
while $U$ itself is given by the splitting.

\begin{remark}
  \label{rem:U}
  In a category $\catc$ (a 0-cell in the 2-category $\Cat$), $e$ is a
  natural transformation from $ST$ to $ST$, and it is enough to find a
  \emph{objectwise splitting} of $e$, namely a collection of
  retractions
  $\xymatrix{STX \ar@<1ex>[r]^{r_X} & UX \ar@<1ex>[l]^{s_X}}$ such
  that $s_X \circ r_X = e_X$ for each object $X$.  This defines
  objects $UX$, and then $U$ extends to a functor (a 1-cell) by the
  requirement that for every morphism $f \colon X \to Y$,
  $Uf \eqdef r_Y \circ ST f \circ s_X$.  It is an easy exercise to
  show that $U$ is indeed a functor, and that $r$ and $s$ are natural
  transformations (2-cells) from $ST$ to $U$ and from $U$ to $ST$
  respectively.
\end{remark}

\begin{lemma}
  \label{lemma:lambda:aux}
  Let $S$ and $T$ be two monads on the same 0-cell $\catc$ of a
  2-category $\mathcal K$, and a weak distributive law $\lambda$ of
  $S$ over $T$.  Let $e$ be defined as in (\ref{eq:e}).  Then:
  \begin{enumerate}
  \item\label{it:vi} $e \circ \lambda = \lambda$;
  \item\label{it:vii} $e \circ \eta^S T = \eta^S T$;
  \item\label{it:viii} $e \circ S \mu^T = S \mu^T \circ e T$.
  \item\label{it:ix} $e \circ \mu^S T \circ S \lambda = \mu^S \circ S
    \lambda \circ e S$;
  \item\label{it:xii}
    $S \mu^T \circ \lambda T \circ T e = S \mu^T \circ \lambda T$.
  \end{enumerate}
\end{lemma}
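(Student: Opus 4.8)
The five items are all calculations starting from the definition $e = S\mu^T \circ \lambda T \circ \eta^T ST$ and repeatedly applying the three weak-distributive-law axioms (\ref{eq:distr:eta:1})--(\ref{eq:distr:mu:2}), the monad equations for $S$ and $T$, naturality, and the interchange law. They also build on the idempotence computation for $e$ already carried out in the excerpt, whose intermediate identities (especially the consequence of (\ref{eq:distr:mu:2}) used there) will be reused. I expect to handle them more or less in the stated order, since later items can lean on earlier ones.

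For item~\ref{it:vi}, $e \circ \lambda = S\mu^T \circ \lambda T \circ \eta^T ST \circ \lambda$; push $\eta^T ST$ past $\lambda$ by naturality of $\eta^T$ to get $S\mu^T \circ \lambda T \circ T\lambda \circ \eta^T TS$, then apply (\ref{eq:distr:mu:2}) to rewrite $S\mu^T \circ \lambda T \circ T\lambda = \lambda \circ \mu^T S$, leaving $\lambda \circ \mu^T S \circ \eta^T TS = \lambda$ by the monad unit law for $T$. Item~\ref{it:vii}: $e \circ \eta^S T = S\mu^T \circ \lambda T \circ \eta^T ST \circ \eta^S T$; slide $\eta^S$ through $\eta^T$ (naturality) and through $\lambda$ via (\ref{eq:distr:eta:1}), i.e. $\lambda \circ T\eta^S = \eta^S T$, then collapse the remaining $T$-unit/multiplication. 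Item~\ref{it:viii} is the direct analogue of the naturality-of-$\lambda$ step in the idempotence proof: expand $e \circ S\mu^T$, use naturality of $\eta^T$ to move $\eta^T$ rightward past $S\mu^T$, then naturality of $\lambda$ to turn $\lambda T \circ TS\mu^T$ into $ST\mu^T \circ \lambda TT$, reassociate the two $\mu^T$'s using associativity for $T$, and recognize the result as $S\mu^T \circ eT$. Item~\ref{it:ix}: starting from $e \circ \mu^S T \circ S\lambda$, the key move is to commute $\mu^S$ with the machinery; use naturality of $\mu^S$ to pull it outside, then apply (\ref{eq:distr:mu:1}) — which controls exactly $\lambda$ against $\mu^S$ — and naturality to reorganize into $\mu^S \circ S\lambda \circ eS$. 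Item~\ref{it:xii}: expand $S\mu^T \circ \lambda T \circ Te = S\mu^T \circ \lambda T \circ T S\mu^T \circ T\lambda T \circ T\eta^T ST$; use naturality of $\lambda$ on $\lambda T \circ TS\mu^T$, associativity of $T$ to merge the $\mu^T$'s, then (\ref{eq:distr:mu:2}) to collapse $S\mu^T \circ \lambda T \circ T\lambda$, and finally the $T$-unit law to kill the leftover $\eta^T$.

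The main obstacle is item~\ref{it:ix}, because it mixes $\mu^S$ (governed by (\ref{eq:distr:mu:1}), the "two $\lambda$'s" axiom) with the idempotent $e$, which itself is built from (\ref{eq:distr:mu:2}); keeping track of which copy of $\lambda$ each axiom application consumes, and matching the bookkeeping on both sides so that the stray $\eta^T ST$ inside $e$ lands in the right place, requires care. The other four are essentially single applications of one weak-distributive-law axiom sandwiched between naturality rewrites and a monad-law cancellation, and should go through routinely.
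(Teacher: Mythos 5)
Your plan matches the paper's proof essentially step for step: each item is the same sequence of naturality rewrites, a single application of (\ref{eq:distr:eta:1}) or (\ref{eq:distr:mu:2}) (or, for item~\ref{it:ix}, both (\ref{eq:distr:mu:1}) and (\ref{eq:distr:mu:2})), and a monad-law cancellation. The only cosmetic difference is in item~\ref{it:ix}, where the paper runs the computation from $\mu^S T \circ S\lambda \circ eS$ back to $e \circ \mu^S T \circ S\lambda$ rather than forward as you propose, and where you should be aware that an explicit application of (\ref{eq:distr:mu:2}) (whiskered under $S$) is needed in addition to (\ref{eq:distr:mu:1}).
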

In other words, reading ``$e \circ f = f$'' as ``$f$ is
convex-valued'', $e \circ f = f \circ e$ (or $= f \circ e T$, or
$= f \circ e S$) as ``$f$ preserves convexity'',
the first four items above state that $\lambda$ and $\eta^S T$ are
convex-valued, and that $S \mu^T$ and $\mu^S T \circ S \lambda$
(namely, the extension $\lambda^{\ext S}$) preserve convexity.

\begin{proof}
  \ref{it:vi}.
  \begin{align*}
    e \circ \lambda
    & = S \mu^T \circ \lambda T \circ \eta^T ST \circ \lambda
    & \text{by definition of $e$}
    \\
    & = S \mu^T \circ \lambda T \circ T \lambda \circ
      \eta^T TS
    & \text{by naturality of $\eta^T$} \\
    & = \lambda \circ \mu^T S \circ \eta^T TS
    & \text{by (\ref{eq:distr:mu:2})} \\
    & = \lambda
    & \text{since $T$ is a monad.}
  \end{align*}

  \ref{it:vii}.
  \begin{align*}
    e \circ \eta^S T
    & = S \mu^T \circ \lambda T \circ \eta^T ST \circ
      \eta^S T
    & \text{by definition of $e$}
    \\
    & = S \mu^T \circ \lambda T \circ T \eta^S T \circ
      \eta^T T
    & \text{by naturality of $\eta^T$} \\
    & = S \mu^T \circ \eta^S TT \circ \eta^T T
    & \text{by (\ref{eq:distr:eta:1})} \\
    & = S \mu^T \circ S \eta^T T \circ \eta^S T
    & \text{by naturality of $\eta^S$} \\
    & = \eta^S T
    & \text{since $T$ is a monad.}
  \end{align*}

  \ref{it:viii}.
  \begin{align*}
    e \circ S \mu^T
    & = S \mu^T \circ \lambda T \circ \eta^T ST \circ S
      \mu^T
    & \text{by definition of $e$}
    \\
    & = S \mu^T \circ \lambda T \circ 
      TS \mu^T \circ \eta^T STT
    & \text{by naturality of $\eta^T$} \\
    & = S \mu^T \circ ST \mu^T \circ \lambda TT \circ 
      \eta^T STT
    & \text{by naturality of $\lambda$} \\
    & = S \mu^T \circ S \mu^T T \circ \lambda TT \circ 
      \eta^T STT
    & \text{since $T$ is a monad} \\
    & = S \mu^T \circ e T
    & \text{by definition of $e$.}
  \end{align*}

  \ref{it:ix}.
  \begin{align*}
    & \mu^S T \circ S \lambda \circ e S \\
    & = \mu^S T \circ S \lambda \circ S \mu^T S
      \circ \lambda TS \circ \eta^T STS
    & \text{by definition of $e$} \\
    & = \mu^S T \circ  SS \mu^T \circ S \lambda T \circ ST \lambda
      \circ \lambda TS \circ \eta^T STS
    & \text{by (\ref{eq:distr:mu:2})} \\
    & = \mu^S T \circ
      SS \mu^T \circ S \lambda T \circ \lambda ST \circ TS \lambda
      \circ \eta^T STS
    & \text{by naturality of $\lambda$} \\
    & = \mu^S T \circ
      SS \mu^T \circ S \lambda T \circ \lambda ST \circ
      \eta^T SST \circ S \lambda
    & \text{by naturality of $\eta^T$} \\
    & = 
      S \mu^T \circ \mu^S TT \circ
      S \lambda T \circ \lambda ST \circ
      \eta^T SST \circ S \lambda
    & \text{by naturality of $\mu^S$} \\
    & = 
      S \mu^T \circ \lambda T \circ T \mu^S T \circ
      \eta^T SST \circ S \lambda
    & \text{by (\ref{eq:distr:mu:1})} \\
    & = 
      S \mu^T \circ \lambda T \circ
      \eta^T ST \circ \mu^S T \circ
      S \lambda
    & \text{by naturality of $\eta^T$} \\
    & = e \circ \mu^S T \circ S \lambda
    & \text{by definition of $e$.}
  \end{align*}

  \ref{it:xii}.
  \begin{align*}
    S \mu^T \circ \lambda T \circ T e
    & = S \mu^T \circ \lambda T \circ TS \mu^T
      \circ T \lambda T \circ T \eta^T ST
    & \text{by definition of $e$} \\
    & = S \mu^T  \circ ST \mu^T\circ \lambda TT
      \circ T \lambda T \circ T \eta^T ST
    & \text{by naturality of $\lambda$} \\
    & = S \mu^T  \circ S \mu^T T \circ \lambda TT
      \circ T \lambda T \circ T \eta^T ST
    & \text{since $T$ is a monad} \\
    & = S \mu^T  \circ \lambda T \circ \mu^T ST
      \circ T \eta^T ST
    & \text{by (\ref{eq:distr:mu:2})} \\
    & = S \mu^T  \circ \lambda T
    & \text{since $T$ is a monad.}
  \end{align*}
\end{proof}

\begin{lemma}
  \label{lemma:ABC}
  Let $S$ and $T$ be two monads on the same 0-cell $\catc$ of a
  2-category $\mathcal K$, and a weak distributive law $\lambda$ of
  $S$ over $T$.  Let $e$ be defined as in (\ref{eq:e}), and let us define:
  \begin{align}
    \label{eq:A}
    A & \eqdef \mu^S \mu^T \circ S \lambda T \colon STST \dto ST \\
    \label{eq:B}
    B & \eqdef A \circ ST s \colon STU \dto ST\\
    \label{eq:C}
    C & \eqdef A \circ s ST \colon UST \dto ST,
  \end{align}
  so that $\mu^U = r \circ A \circ ss$ by (\ref{eq:muU}), or
  equivalently $\mu^U = r \circ B \circ s U$, or equivalently
  $\mu^U = r \circ C \circ U s$.  Then:
  \begin{enumerate}
  \item\label{it:x} $A \circ e ST = e \circ A$;
  \item\label{it:xi} $B \circ e U = e \circ B$;
  \item\label{it:xvi} $A \circ ST \eta^S T = S \mu^T$.
  \end{enumerate}
\end{lemma}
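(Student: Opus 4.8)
The plan is to start from a more convenient expression for $A$. By the interchange law (equivalently, naturality of $\mu^S$) one has $\mu^S\mu^T = S\mu^T \circ \mu^S TT$, so that $A = \mu^S\mu^T \circ S\lambda T = S\mu^T \circ \mu^S TT \circ S\lambda T$; and since whiskering distributes over vertical composition, $\mu^S TT \circ S\lambda T = (\mu^S T \circ S\lambda)\,T = \lambda^{\ext S} T$, where $\lambda^{\ext S} \eqdef \mu^S T \circ S\lambda \colon STS \dto ST$ is the extension of $\lambda \colon TS \dto ST$ along the monad $S$ (the same map whose convexity-preservation is recorded in Lemma~\ref{lemma:lambda:aux}, item~\ref{it:ix}). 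Hence $A = S\mu^T \circ \lambda^{\ext S} T$, and the rest of the lemma is bookkeeping with this identity and Lemma~\ref{lemma:lambda:aux}.

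For item~\ref{it:x}: whiskering Lemma~\ref{lemma:lambda:aux}, item~\ref{it:ix} (which states that $\lambda^{\ext S}$ preserves convexity) by $T$ on the right gives $\lambda^{\ext S} T \circ eST = eT \circ \lambda^{\ext S} T$. Composing on the left with $S\mu^T$ and then invoking Lemma~\ref{lemma:lambda:aux}, item~\ref{it:viii} ($e \circ S\mu^T = S\mu^T \circ eT$, i.e.\ $S\mu^T$ preserves convexity) turns $A \circ eST = S\mu^T \circ \lambda^{\ext S} T \circ eST$ into $S\mu^T \circ eT \circ \lambda^{\ext S} T = e \circ S\mu^T \circ \lambda^{\ext S} T = e \circ A$. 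Item~\ref{it:xi} is then a one-liner: by the naturality property of the 2-cell $e$ applied to $s \colon U \dto ST$ one has $eST \circ ST s = ST s \circ eU$, whence $B \circ eU = A \circ ST s \circ eU = A \circ eST \circ ST s = e \circ A \circ ST s = e \circ B$, using item~\ref{it:x} at the penultimate step.

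For item~\ref{it:xvi}: from (\ref{eq:distr:eta:1}), $\lambda \circ T\eta^S = \eta^S T$, so, whiskering appropriately and applying $S$ on the left, $S\lambda T \circ ST\eta^S T = S\eta^S TT$; then the interchange law together with the left unit law $\mu^S \circ S\eta^S = S$ of the monad $S$ yields $A \circ ST\eta^S T = \mu^S\mu^T \circ S\eta^S TT = S\mu^T$.

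I do not anticipate a genuine obstacle: all the substantive content about $e$ computing and preserving convexity is already isolated in Lemma~\ref{lemma:lambda:aux}. The only step that is not purely mechanical is the rewriting $A = S\mu^T \circ \lambda^{\ext S} T$, and the one point demanding care is the order of the claims, since item~\ref{it:xi} must be proved after, and uses, item~\ref{it:x}.
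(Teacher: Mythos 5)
Your proof is correct and follows essentially the same route as the paper's: item~\ref{it:x} via naturality of $\mu^S$ to expose $\lambda^{\ext S}T$, then Lemma~\ref{lemma:lambda:aux} items~\ref{it:ix} and~\ref{it:viii}; item~\ref{it:xi} by naturality of $e$ plus item~\ref{it:x}; item~\ref{it:xvi} by (\ref{eq:distr:eta:1}) and the unit law for $S$. The only (harmless) difference is that you isolate the identity $A = S\mu^T \circ \lambda^{\ext S}T$ up front rather than unfolding it inline.
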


\begin{proof}
  \ref{it:x}.
    \begin{align*}
    A \circ e ST
    & = S \mu^T \circ \mu^S TT \circ S \lambda T
      \circ e ST
    & \text{by def.\ of $A$ and naturality of $\mu^S$} \\
    & = S \mu^T \circ e T \circ \mu^S TT \circ S \lambda T
    & \text{by Lemma~\ref{lemma:lambda:aux}, item~\ref{it:ix}} \\
    & = e \circ  S \mu^T \circ \mu^S TT \circ S \lambda T
    & \text{by Lemma~\ref{lemma:lambda:aux}, item~\ref{it:viii}} \\
    & = e \circ A
    & \text{by def.\ of $A$ and naturality of $\mu^S$.}
  \end{align*}

  \ref{it:xi}.
  \begin{align*}
    B \circ e U
    & = A \circ ST s \circ e U
    & \text{by definition of $B$} \\
    & = A \circ e ST \circ ST s
    & \text{by naturality of $e$} \\
    & = e \circ A \circ ST s = e \circ B
    & \text{by definition of $B$.}
  \end{align*}

    \ref{it:xvi}.
    \begin{align*}
      A \circ ST \eta^S T
      & = \mu^S \mu^T \circ S \lambda T
        \circ ST \eta^S T
      & \text{by definition of $A$} \\
      & = S \mu^T \circ \mu^S TT \circ S \lambda T
        \circ ST \eta^S T
      & \text{by naturality of $\mu^S$} \\
      & = S \mu^T \circ \mu^S TT \circ S \eta^S T
      & \text{by (\ref{eq:distr:eta:1})} \\
      & = S \mu^T
      & \text{since $S$ is a monad.}
    \end{align*}

\end{proof}

\begin{remark} 
  \label{rem:U:monad}
  We will not need item~\ref{it:x} of Lemma~\ref{lemma:ABC} in order
  to show that $r, s$ form a distributing retraction.  However, this
  is one step in one possible proof that $U$ is a monad (in its final
  step, see item~\ref{it:monad:3} below).  Contrarily to the
  construction of \citep{Bohm:weak:monads}, this proof does not require
  that $\mathcal K$ admits Eilenberg-Moore constructions for monads,
  nor that idempotent 2-cells in $\mathcal K$ split---only that the
  specific idempotent $e$ of (\ref{eq:e}) splits.  Since that
  alternate proof is a bit lengthy and does not bring much, we only
  sketch it here.  We show that:
  \begin{enumerate}[label=(\alph*)]
  \item\label{it:xii:bis}
    $\mu^S \mu^T \circ S \lambda T \circ ST e = \mu^S \mu^T \circ S
    \lambda T$ (naturality of $\mu^S$, Lemma~\ref{lemma:ABC}
    item~\ref{it:xii}, naturality of $\mu^S$);
  \item\label{it:xiii:prelim}
    $e \circ \mu^S T = \mu^S \mu^T \circ S \lambda T \circ \lambda ST
    \circ \eta^T SST$ (naturality of $\eta^T$, (\ref{eq:distr:mu:1}),
    naturality of $\mu^S$);
  \item\label{it:xiii} $e \circ \mu^S T \circ S e = e \circ \mu^S T$
    (\ref{it:xiii:prelim}, naturality of $\eta^T$, of $\lambda$, then
    \ref{it:xii:bis} and \ref{it:xiii:prelim});
  \item\label{it:xiv} $A \circ \eta^S \eta^T ST = e$
    (definition of $A$, naturality of $\mu^S$ and $\eta^S$, $S$ is
    a monad, definition of $e$);
  \item\label{it:xv} $B \circ \eta^S \eta^T U = s$
    (definition of $B$, naturality of $\eta^S \eta^T$, \ref{it:xiv},
    definition of $e$, $(r, s)$ is a retraction);
  \item\label{it:xvii} $A \circ ST \eta^S \eta^T$ is the identity
    (naturality of $\eta^S$, Lemma~\ref{lemma:ABC} item~\ref{it:xvi},
    $T$ is a monad);
  \item\label{it:xviii:prelim} $s \circ \eta^U = \eta^S \eta^T$
    (definition of $\eta^U$ (\ref{eq:etaU}), naturality of $\eta^S$,
    Lemma~\ref{lemma:lambda:aux} item~\ref{it:vii}, naturality of $\eta^S$);
  \item\label{it:xviii} $C \circ U s \circ U \eta^U = s$
    (\ref{it:xviii:prelim}, definition of $C$, naturality of $s$, \ref{it:xvii});
  \item\label{it:xix:1}
    $S \lambda T \circ ST s \circ A U = \mu^S S \mu^T T \circ D$,
    where
    $D \eqdef SS \lambda TT \circ S \lambda STT \circ STS \lambda T
    \circ STST s$ (definition of $A$, naturality of $\mu^S \mu^T$, of
    $\lambda$, of $\mu^S$, (\ref{eq:distr:mu:2}), naturality of
    $\mu^S$, of $\lambda$, of $\mu^S$, and definition of $D$);
  \item\label{it:xix:2}
    $S \lambda T \circ ST A \circ STST s = S \mu^S T \mu^T \circ D$
    (definition of $A$, naturality of $\mu^S$, of $\lambda$,
    (\ref{eq:distr:mu:1}), naturality of $\mu^S$, definition of $D$);
  \item\label{it:xix:3}
    $\mu^S \mu^T \circ \mu^S S \mu^T T = \mu^S \mu^T \circ S \mu^S T
    \mu^T$ (interchange law, $\mu^S$ and $\mu^T$ are monads);
  \item\label{it:xix} $B \circ A U = A \circ ST A \circ STST s$
    (definition of $B$, \ref{it:xix:1}, \ref{it:xix:2},
    \ref{it:xix:3}, definition of $A$);
  \item first monad law: $\mu^U \circ \eta^U U$ is the identity (write
    $\mu^U$ as $r \circ B \circ s U$, then use the definition of
    $\eta^U$, (\ref{eq:etaU}), $e = s \circ r$, Lemma~\ref{lemma:ABC}
    item~\ref{it:xi}, $r \circ e = r$, \ref{it:xv}, the fact that $(r, s)$ is a retraction);
  \item second monad law: $\mu^U \circ U \eta^U$ is the identity
    (write $\mu^U$ as $r \circ C \circ U s$, then use \ref{it:xviii}
    and the fact that $(r, s)$ is a retraction);
  \item\label{it:monad:3} third monad law:
    $\mu^U \circ \mu^U U = \mu^U \circ U \mu^U$ (simplify the
    left-hand side to $r \circ A \circ ST A \circ sss$ using
    $\mu^U = r \circ B \circ s U$, $\mu^U = r \circ A \circ ss$, the
    definition of $e$, Lemma~\ref{lemma:ABC} item~\ref{it:xi}, the
    fact that $(r, s)$ is a retraction, \ref{it:xix}, and the
    naturality of $ss$, and simplify the right-hand side to the same
    value using $\mu^U = r \circ B \circ ss$, the naturality of $s$,
    $e = s \circ r$, Lemma~\ref{lemma:ABC} item~\ref{it:x}, the
    interchange law and $e \circ s = s$).
    %
  \end{enumerate}
\end{remark}

\begin{proposition}
  \label{prop:wdistr:STU}
  Let $S$ and $T$ be two monads on the same 0-cell $\catc$ of a
  2-category $\mathcal K$, and a weak distributive law $\lambda$ of
  $S$ over $T$.  Let $e$ be defined as in (\ref{eq:e}), and let $\xymatrix{ST
    \ar@<1ex>[r]^r & U \ar@<1ex>[l]^s}$ be a splitting of $e$.
  Then:
  \begin{enumerate}
  \item\label{it:STU:1} $(U, \eta^U, \mu^U)$ is a monad, where $\eta^U$ and $\mu^U$
    are defined in (\ref{eq:etaU}) and (\ref{eq:muU}) respectively;
  \item\label{it:STU:2} $r, s$ form a distributing retraction of $ST$ over $U$;
  \item\label{it:STU:3} $\lambda$ is equal to $s \circ \mu^U \circ ji$.
  \end{enumerate}
\end{proposition}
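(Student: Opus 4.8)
The plan is to prove the three items in order, treating item~\ref{it:STU:1} as essentially known. For item~\ref{it:STU:1}: $U$ is a monad by B\"ohm's construction \cite{Bohm:weak:monads}, and --- since we only assume that the one idempotent $e$ of (\ref{eq:e}) splits, not that $\mathcal K$ has Eilenberg--Moore objects or splits all idempotents --- I would instead fill in the self-contained argument whose steps are listed in Remark~\ref{rem:U:monad}. Its only non-formal ingredients are the compatibilities of $A$, $B$, $C$ with $e$ from Lemma~\ref{lemma:ABC} and the identity $B \circ AU = A \circ STA \circ STSTs$, after which the three monad laws come out by writing $\mu^U$ alternately as $r \circ A \circ ss$, $r \circ B \circ sU$, $r \circ C \circ Us$ and using $r \circ e = r$ and $e \circ s = s$ (both from the fact that $(r,s)$ splits $e$).

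For item~\ref{it:STU:2} I would verify the six axioms of Definition~\ref{defn:STU:retr}. Three are immediate: (\ref{eq:STU:eta}) is the definition (\ref{eq:etaU}) of $\eta^U$, and (\ref{eq:STU:etaS:conv}), (\ref{eq:STU:SmuT:conv}) are items~\ref{it:vii} and~\ref{it:viii} of Lemma~\ref{lemma:lambda:aux}. For the other three the recurring move is that $r \circ e = r$ lets one insert an $e$ after any occurrence of $r$. For (\ref{eq:STU:muS}): expanding $ir = rr \circ S\eta^T ST$ and $\mu^U = r \circ A \circ ss$, interchange turns $ss \circ rr$ into $ee = eST \circ STe$, so Lemma~\ref{lemma:ABC}, item~\ref{it:x} ($A \circ eST = e \circ A$) and the identity $A \circ STe = A$ (Remark~\ref{rem:U:monad}) give $\mu^U \circ ir = r \circ A \circ S\eta^T ST = r \circ \mu^S T \circ Se$, the last step factoring off the common left $S$ and using $e = S\mu^T \circ \lambda T \circ \eta^T ST$; it then remains to see $r \circ \mu^S T \circ Se = r \circ \mu^S T$, for which inserting $e$ reduces matters to $e \circ \mu^S T \circ Se = e \circ \mu^S T$, and this follows from $e \circ \mu^S T = A \circ \lambda ST \circ \eta^T SST$ (which uses (\ref{eq:distr:mu:1})), naturality of $\eta^T$ and $\lambda$, and $A \circ STe = A$ once more. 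For (\ref{eq:STU:muT}): with $\mu^U = r \circ B \circ sU$, $B = A \circ STs$, and $s \circ j = e \circ \eta^S T = \eta^S T$ (Lemma~\ref{lemma:lambda:aux}, item~\ref{it:vii}), interchange reduces $\mu^U \circ rj$ to $r \circ A \circ eST \circ ST\eta^S T$, hence by Lemma~\ref{lemma:ABC}, items~\ref{it:x} and~\ref{it:xvi} ($A \circ ST\eta^S T = S\mu^T$), to $r \circ e \circ S\mu^T = r \circ S\mu^T$. For (\ref{eq:STU:muUS:conv}): I would first pin down the auxiliary map: using $sU \circ jU = (s \circ j)U = \eta^S TU$, naturality of $\eta^S$, the computation $A \circ \eta^S TST = S\mu^T \circ \lambda T$, and one use each of Lemma~\ref{lemma:lambda:aux}, items~\ref{it:viii} and~\ref{it:vi}, one gets $s \circ \mu^U \circ jU = S\mu^T \circ \lambda T \circ Ts$, whose $S$-extension, using $\mu^S T \circ SS\mu^T = \mu^S\mu^T$, is exactly $B$; then (\ref{eq:STU:muUS:conv}) reads $e \circ B = B \circ eU$, which is Lemma~\ref{lemma:ABC}, item~\ref{it:xi}.

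For item~\ref{it:STU:3}: unfolding, $s \circ \mu^U \circ ji = e \circ A \circ (ss \circ ji)$, and interchange gives $ss \circ ji = (s\circ j)(s\circ i) = (\eta^S T)(e \circ S\eta^T) = STe \circ STS\eta^T \circ \eta^S TS$. Killing the $STe$ by $A \circ STe = A$, then using $A \circ STS\eta^T = \mu^S T \circ S\lambda$ (factor off the common terms, unit law for $T$), then $S\lambda \circ \eta^S TS = \eta^S ST \circ \lambda$ (naturality of $\eta^S$) and $\mu^S T \circ \eta^S ST = ST$ (unit law for $S$), one is left with $s \circ \mu^U \circ ji = e \circ \lambda$, which is $\lambda$ by Lemma~\ref{lemma:lambda:aux}, item~\ref{it:vi}.

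The main obstacle is bookkeeping, not ideas: each of the three non-trivial axioms and item~\ref{it:STU:3} amounts to rewriting a composite containing $\mu^U$ as a composite built only from $S$, $T$, $\lambda$ and $e$, and the delicate point is to steer the calculation so that it lands on an identity already established in Lemmas~\ref{lemma:lambda:aux} and~\ref{lemma:ABC} or in Remark~\ref{rem:U:monad} --- most strikingly, recognising $(s \circ \mu^U \circ jU)^{\ext S}$ as $B$, which makes (\ref{eq:STU:muUS:conv}) collapse to a lemma one already has. Item~\ref{it:STU:1}, if carried out in full, is the longest piece, but it is purely mechanical once Lemma~\ref{lemma:ABC} is available.
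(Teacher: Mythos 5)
Your proposal is correct and follows essentially the same route as the paper: item~\ref{it:STU:1} is delegated to Remark~\ref{rem:U:monad}, the six axioms of Definition~\ref{defn:STU:retr} are verified from Lemmas~\ref{lemma:lambda:aux} and~\ref{lemma:ABC}, and your key observation that $(s \circ \mu^U \circ j U)^{\ext S} = B$ (via $s \circ \mu^U \circ j U = S\mu^T \circ \lambda T \circ Ts$) is exactly the paper's central claim in its proof of item~\ref{it:STU:2}. The remaining differences are cosmetic reroutings of individual computations (e.g.\ you establish (\ref{eq:STU:muS}) through $A \circ S\eta^T\, ST = \mu^S T \circ Se$ where the paper uses $B \circ S\eta^T U = \mu^S T \circ Ss$, and you compute item~\ref{it:STU:3} directly rather than via the auxiliary identity $e \circ S\eta^T = \lambda \circ \eta^T S$), all of which check out.
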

\begin{proof}
  \ref{it:STU:1}.  This is Remark~\ref{rem:U:monad}.

  \ref{it:STU:2}.  We first claim that
  $B \circ S \eta^T U = \mu^S T \circ S s$, where $B$ is defined as in
  Lemma~\ref{lemma:ABC}.  Indeed,
  \begin{align*}
    B \circ S \eta^T U
    & = \mu^S \mu^T \circ S \lambda T \circ ST s
      \circ S \eta^T U
    & \text{by definition of $B$} \\
    & = \mu^S T \circ SS \mu^T  \circ S \lambda T \circ ST s
      \circ S \eta^T U
    & \text{by naturality of $\mu^S$} \\
    & = \mu^S T \circ SS \mu^T  \circ S \lambda T
      \circ S \eta^T ST \circ S s
    & \text{by naturality of $\eta^T$} \\
    & = \mu^S T \circ S e \circ S s
    & \text{by definition of $e$ (\ref{eq:e})} \\
    & = \mu^S T \circ S s 
    & \text{since $e \circ s = s \circ r \circ s$.}
  \end{align*}

  Let $i \eqdef r \circ S \eta^T$ and $j \eqdef r \circ \eta^S T$.
  Our second claim is that $s \circ \mu^U \circ j U = S \mu^T \circ \lambda T \circ T s$.
  Indeed:
  \begin{align*}
    & s \circ \mu^U \circ j U \\
    & = s \circ r \circ B \circ s U \circ r U \circ
      \eta^S TU
    & \text{since $\mu^U = r \circ B \circ s U$ and by def.\ of $j$} \\
    & = e \circ B \circ e U \circ \eta^S TU
    & \text{since $e = s \circ r$} \\
    & = e \circ e \circ B \circ \eta^S TU
    & \text{by Lemma~\ref{lemma:ABC}, item~\ref{it:xi}} \\
    & = e \circ B \circ \eta^S TU
    & \text{since $e$ is idempotent} \\
    & = e \circ \mu^S \mu^T  \circ S \lambda T
      \circ ST s \circ \eta^S TU
    & \text{by definition of $B$} \\
    & = e \circ \mu^S \mu^T \circ
      \eta^S STT \circ \lambda T
      \circ T s
    & \text{by naturality of $\eta^S$} \\
    & = e \circ S \mu^T \circ \mu^S TT \circ
      \eta^S STT \circ \lambda T
      \circ T s
    & \text{by naturality of $\mu^S$} \\
    & = e \circ S \mu^T \circ \lambda T
      \circ T s
    & \text{since $S$ is a monad} \\
    & = S \mu^T \circ e \circ \lambda T
      \circ T s
    & \text{by Lemma~\ref{lemma:lambda:aux}, item~\ref{it:viii}} \\
    & = S \mu^T \circ \lambda T \circ T s
    & \text{by Lemma~\ref{lemma:lambda:aux}, item~\ref{it:vi}.}
  \end{align*}

  Our third claim is that $e \circ S \eta^T = \lambda \circ \eta^T
  S$.  Indeed:
  \begin{align*}
    e \circ S \eta^T
    & = S \mu^T \circ \lambda T \circ \eta^T ST \circ S
      \eta^T
    & \text{by definition of $e$} \\
    & = S \mu^T \circ \lambda T \circ TS \eta^T \circ
      \eta^T S
    & \text{by naturality of $\eta^T$} \\
    & = S \mu^T \circ ST \eta^T \circ \lambda \circ
      \eta^T S
    & \text{by naturality of $\lambda$} \\
    & = \lambda \circ \eta^T S
    & \text{since $\mu^T$ is a monad.}
  \end{align*}

  We can now verify the six laws of Definition~\ref{defn:STU:retr}.
  (\ref{eq:STU:eta}) is simply the definition of $\eta^U$, see
  (\ref{eq:etaU}).  (\ref{eq:STU:etaS:conv}) is
  Lemma~\ref{lemma:lambda:aux}, item~\ref{it:vii}.
  (\ref{eq:STU:SmuT:conv}) is Lemma~\ref{lemma:lambda:aux},
  item~\ref{it:viii}.  As far as (\ref{eq:STU:muS}) is concerned,
  \begin{align*}
    \mu^U \circ ir
    & = \mu^U \circ i U \circ S r
    & \text{by naturality of $i$} \\
    & = r \circ B \circ s U \circ r U \circ S \eta^T U
      \circ S r
    & \text{since $\mu^U = r \circ B \circ s U$ and by definition of $i$} \\
    & = r \circ B \circ e U \circ S \eta^T U
      \circ S r
    & \text{by definition of $e$} \\
    & = r \circ e \circ B \circ S \eta^T U
      \circ S r
    & \text{by Lemma~\ref{lemma:ABC}, item~\ref{it:xi}} \\
    & = r \circ B \circ S \eta^T U
      \circ S r
    & \text{since $r \circ e = r \circ s \circ r = r$} \\
    & = r \circ \mu^S T \circ S s \circ S r
    & \text{by our first claim above} \\
    & = r \circ \mu^S T \circ S e
    & \text{by definition of $e$} \\
    & = r \circ e \circ \mu^S T
    & \text{by Lemma~\ref{lemma:lambda:aux}, item~\ref{it:xii}} \\
    & = r \circ \mu^S T
    & \text{since $r \circ e = r$.}
  \end{align*}
  For (\ref{eq:STU:muT}), we compute:
  \begin{align*}
    \mu^U \circ rj
    & = r \circ A \circ ss \circ rj
    & \text{since $\mu^U = r \circ A \circ ss$} \\
    & = r \circ A \circ e (s \circ j)
    & \text{by the interchange law, and $s \circ r = e$} \\
    & = r \circ A \circ e (e \circ \eta^S T)
    & \text{since $s \circ j = s \circ r \circ \eta^S T= e \circ
      \eta^S T$} \\
    & = r \circ A \circ e \, \eta^S T
    & \text{by Lemma~\ref{lemma:lambda:aux}, item~\ref{it:vii}} \\
    & = r \circ A \circ ST \eta^S T \circ e T
    & \text{by naturality of $e$} \\
    & = r \circ S \mu^T \circ e T
    & \text{by Lemma~\ref{lemma:ABC}, item~\ref{it:xvi}} \\
    & = r \circ e \circ S \mu^T
    & \text{by Lemma~\ref{lemma:lambda:aux}, item~\ref{it:viii}} \\
    & = r \circ S \mu^T
    & \text{since $r \circ e = r \circ s \circ r = r$.}
  \end{align*}

  Finally, we deal with (\ref{eq:STU:muUS:conv}).   We have:
  \begin{align*}
    (s \circ \mu^U \circ j U)^{\ext S}
    & = \mu^S T \circ SS \mu^T \circ S \lambda T \circ ST
      s
    & \text{by our second claim above and def.\ of $\_^{\ext S}$} \\
    & = S \mu^T \circ \mu^S TT \circ S \lambda T
      \circ ST s
    & \text{by naturality of $\mu^S$,} \\
  \end{align*}
  and we use this on the first and last lines of the following
  derivation:
  \begin{align*}
    e \circ (s \circ \mu^U \circ j U)^{\ext S}
    & = e \circ S \mu^T \circ \mu^S TT \circ S \lambda T
      \circ ST s
    & \\
    & = S \mu^T \circ e T \circ \mu^S TT \circ S \lambda T
      \circ ST s
    & \text{by Lemma~\ref{lemma:lambda:aux}, item~\ref{it:viii}} \\
    & = S \mu^T \circ \mu^S TT \circ S \lambda T
      \circ e ST \circ ST s
    & \text{by Lemma~\ref{lemma:lambda:aux}, item~\ref{it:ix}} \\
    & = S \mu^T \circ \mu^S TT \circ S \lambda T
      \circ ST s \circ e U
    & \text{by naturality of $e$} \\
    & = (s \circ \mu^U \circ j U)^{\ext S} \circ e U,
  \end{align*}

  \ref{it:STU:3}. 
  \begin{align*}
    s \circ \mu^U \circ ji
    & = s \circ \mu^U \circ j U \circ T i
    & \text{by naturality of $j$} \\
    & = S \mu^T \circ \lambda T
      \circ T s \circ T r \circ TS \eta^T
    & \text{by the second claim above and the def.\ of $i$} \\
    & = S \mu^T \circ \lambda T
      \circ T e \circ TS \eta^T
    & \text{since $e = s \circ r$} \\
    & = S \mu^T \circ \lambda T \circ T\lambda \circ T \eta^T S
    & \text{by the third claim above} \\
    & = \lambda \circ \mu^T S  \circ T \eta^T S
    & \text{by (\ref{eq:distr:mu:2})} \\
    & = \lambda
    & \text{since $T$ is a monad.}
  \end{align*}
\end{proof}

Let us summarize.
\begin{theorem}
  \label{thm:STU:wdistr}
  Let $\catc$ be a 0-cell in a 2-category $\mathcal K$.  There is a
  one-to-one correspondence between
  \begin{enumerate}
  \item tuples $(S, T, U, r, s)$ where $S$, $T$, $U$ are monads on
    $\catc$ and $\xymatrix{ST \ar@<1ex>[r]^r & U \ar@<1ex>[l]^s}$ is a
    distributing retraction;
  \item tuples $(S, T, U, \lambda, r, s)$ where $S$ and $T$ are monads
    on $\catc$, $U$ is a 1-cell on $\catc$, $\lambda$ is a weak
    distributive law of $S$ over $T$, and
    $\xymatrix{ST \ar@<1ex>[r]^r & U \ar@<1ex>[l]^s}$ is a splitting
    of the idempotent
    $e \eqdef S \mu^T \circ \lambda T \circ \eta^T ST$.
  \end{enumerate}
  From 1 to 2, $\lambda \eqdef s \circ \mu^U \circ ji$, where
  $i \eqdef r \circ S \eta^T$ and $j \eqdef r \circ \eta^S T$.  From 2
  to 1, $\eta^U \eqdef r \circ \eta^S \eta^T$ and
  $\mu^U \eqdef r \circ \mu^S \mu^T \circ S \lambda T \circ ss$.
\end{theorem}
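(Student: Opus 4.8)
The plan is to present the two constructions in the statement as mutually inverse maps $\Phi$ (from item~1 to item~2) and $\Psi$ (from item~2 to item~1), most of the work having already been done. Given $(S,T,U,r,s)$ as in item~1, $\Phi$ returns $(S,T,U,\lambda,r,s)$ with $\lambda\eqdef s\circ\mu^U\circ ji$ (where $i\eqdef r\circ S\eta^T$, $j\eqdef r\circ\eta^S T$), forgetting the monad structure of $U$ and keeping only its underlying 1-cell. Given $(S,T,U,\lambda,r,s)$ as in item~2, $\Psi$ equips $U$ with $\eta^U\eqdef r\circ\eta^S\eta^T$ and $\mu^U\eqdef r\circ\mu^S\mu^T\circ S\lambda T\circ ss$, returning $(S,T,U,r,s)$.

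First I would check that $\Phi$ takes values in item~2. Proposition~\ref{prop:STU:wdistr} already gives that $\lambda$ is a weak distributive law of $S$ over $T$, so the only thing left is that $r,s$ split the associated idempotent $e_\lambda\eqdef S\mu^T\circ\lambda T\circ\eta^T ST$; since $r\circ s=U$ by hypothesis, this reduces to proving $s\circ r=e_\lambda$, and I expect this to be the main obstacle, being the one computation not already present above. Writing $e\eqdef s\circ r$, I would first simplify $\lambda T\circ\eta^T ST$: expanding $\lambda$, writing $ji=jU\circ Ti$, and using the naturality of $\eta^T$, the identity $j\circ\eta^T=\eta^U$ (Lemma~\ref{lemma:STU:retr}, item~\ref{it:jeta}) and the monad law $\mu^U\circ\eta^U U=U$, one obtains $\mu^U T\circ(ji)T\circ\eta^T ST=iT$, hence $\lambda T\circ\eta^T ST=sT\circ iT=(s\circ i)T=eT\circ S\eta^T T$, the last equality because $s\circ i=e\circ S\eta^T$ by definition of $i$. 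Composing on the left with $S\mu^T$, axiom~(\ref{eq:STU:SmuT:conv}) gives $S\mu^T\circ eT=e\circ S\mu^T$, and the monad law $\mu^T\circ\eta^T T=T$ then yields $e_\lambda=e\circ S\mu^T\circ S\eta^T T=e$, as desired.

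Next, $\Psi$ takes values in item~1 by Proposition~\ref{prop:wdistr:STU}: item~\ref{it:STU:1} gives that $(U,\eta^U,\mu^U)$ is a monad and item~\ref{it:STU:2} that $r,s$ form a distributing retraction of $ST$ onto $U$. Finally I would verify $\Psi\circ\Phi=\mathrm{id}$ and $\Phi\circ\Psi=\mathrm{id}$. For the first: from $(S,T,U,r,s)$ with $U$ a monad, $\Phi$ produces $\lambda=s\circ\mu^U\circ ji$, and $\Psi$ then re-equips $U$ with $r\circ\eta^S\eta^T$, which equals $\eta^U$ by axiom~(\ref{eq:STU:eta}), and with $r\circ\mu^S\mu^T\circ S\lambda T\circ ss$, which equals $\mu^U$ by Proposition~\ref{prop:STU:unique}; since $r$ and $s$ are unchanged, the original tuple is recovered. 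For the second: from $(S,T,U,\lambda,r,s)$, $\Psi$ builds $(U,\eta^U,\mu^U)$ and $\Phi$ then returns $s\circ\mu^U\circ ji$, which equals $\lambda$ by Proposition~\ref{prop:wdistr:STU}, item~\ref{it:STU:3}; moreover the idempotent $S\mu^T\circ\lambda T\circ\eta^T ST$ is computed from this unchanged $\lambda$, hence is the same, and the original tuple is recovered. This establishes the bijection.
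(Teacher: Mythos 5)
Your proposal is correct and follows the same overall strategy as the paper: the forward direction is Proposition~\ref{prop:STU:wdistr}, the backward direction is Proposition~\ref{prop:wdistr:STU} items~\ref{it:STU:1} and~\ref{it:STU:2}, and the two composites are identities by Proposition~\ref{prop:wdistr:STU} item~\ref{it:STU:3} and Proposition~\ref{prop:STU:unique}. The one place where you go beyond the paper's own (very terse) proof is the verification that the map from item~1 to item~2 actually lands in item~2, i.e.\ that $s \circ r$ coincides with the idempotent $S\mu^T \circ \lambda T \circ \eta^T ST$ built from the constructed $\lambda$; the paper does not carry out this computation anywhere (Proposition~\ref{prop:STU:wdistr} only shows that $\lambda$ is a weak distributive law), and it is genuinely needed for the correspondence to be well defined. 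Your derivation of it is correct: $\lambda T \circ \eta^T ST = (s \circ i) T = eT \circ S\eta^T T$ via naturality of $\eta^T$, $j \circ \eta^T = \eta^U$ and the unit law for $U$, and then (\ref{eq:STU:SmuT:conv}) together with $\mu^T \circ \eta^T T = T$ gives $S\mu^T \circ \lambda T \circ \eta^T ST = e$. So your write-up is, if anything, more complete than the paper's.
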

\begin{proof}
  This is Proposition~\ref{prop:STU:wdistr} from 1 to 2 and
  Proposition~\ref{prop:wdistr:STU}, items~\ref{it:STU:1}
  and~\ref{it:STU:2} from 2 to 1.  The two constructions are inverse
  of each other by Proposition~\ref{prop:wdistr:STU},
  item~\ref{it:STU:3} and by Proposition~\ref{prop:STU:unique}.
\end{proof}

\begin{longred}
  \subsection{Equivalent axioms}
  \label{sec:equivalent-axioms}

  In order to verify that a given retraction
  $\xymatrix{ST \ar@<1ex>[r]^r & U \ar@<1ex>[l]^s}$ is distributing, it
  is sometimes interesting to verify different, but equivalent
  axioms.  The following look like more complicated reformulations
  of (\ref{eq:STU:SmuT:conv}) and (\ref{eq:STU:muUS:conv}), but they
  will be easier to check.

  \begin{lemma}
    \label{lemma:STU:conv}
    Let $(S, \eta^S, \mu^S)$, $(T, \eta^T, \mu^T)$ and
    $(U, \eta^U, \mu^U)$ be three monads on a category $\catc$ and
    $\xymatrix{ST \ar@<1ex>[r]^r & U \ar@<1ex>[l]^s}$ be a natural
    retraction.  Let $e \eqdef s \circ r$.  In the presence of
    (\ref{eq:STU:muT}), (\ref{eq:STU:SmuT:conv}) is equivalent to:
    \begin{align}
      \label{eq:STU:SmuT:conv'}
      e \circ S \mu^T \circ e T
      & = S \mu^T \circ e T
    \end{align}
    In the presence of (\ref{eq:STU:muS}), (\ref{eq:STU:muUS:conv}) is
    equivalent to:
    \begin{align}
      \label{eq:STU:muUS:conv'}
      e \circ (s \circ \mu^U \circ j U)^{\ext S} \circ e U
      & = (s \circ \mu^U \circ j U)^{\ext S} \circ e U.
    \end{align}
  \end{lemma}
  \begin{proof}
    As usual, we let $i \eqdef r \circ S \eta^T \colon S \to U$ and
    $j \eqdef r \circ \eta^S T \colon T \to U$.  In the first case, we
    realize that:
    \begin{align*}
      r \circ S \mu^T \circ e T
      & = \mu^U \circ rj \circ e T
      & \text{by (\ref{eq:STU:muT})} \\
      & = \mu^U \circ (r \circ e)  j
      & \text{by the interchange law} \\
      & = \mu^U \circ rj
      & \text{since $r \circ e = r \circ s \circ r = r$} \\
      & = r \circ S \mu^T
      & \text{by (\ref{eq:STU:muT}).}
    \end{align*}
    Composing with $s$ on the left, we obtain that $e \circ S \mu^T
    \circ e T = e \circ S
    \mu^T$, from which the equivalence of (\ref{eq:STU:SmuT:conv}) and
    of (\ref{eq:STU:SmuT:conv'}) follows.

    In the second case, let
    $A \eqdef r \circ (s \circ \mu^U \circ j U)^{\ext S}$.  Then:
    \begin{align*}
      A
      & = r \circ \mu^S T \circ S s \circ S \mu^U \circ S j U \\
      & = \mu^U \circ ir \circ S s \circ S \mu^U \circ S j U
      & \text{by (\ref{eq:STU:muS})} \\
      & = \mu^U \circ i U \circ S \mu^U \circ S j U
      & \text{by the interchange law and since $r \circ s =
        U$} \\
      & = \mu^U \circ U \mu^U \circ ijU
      & \text{by naturality of $i$ and the interchange law} \\
      & = \mu^U \circ \mu^U U \circ ij U
      & \text{since $U$ is a monad} \\
      & = \mu^U \circ \mu^U U \circ irU \circ S \eta^S TU
      & \text{by definition of $i$ and the interchange law} \\
      & = \mu^U \circ r U \circ \mu^S TU \circ S \eta^S TU
      & \text{by (\ref{eq:STU:muT})} \\
      & = \mu^U \circ r U
      & \text{since $U$ is a monad.}
    \end{align*}
    Since $r$ is a retraction,
    $A \circ e U = A \circ s U \circ r U = \mu^U \circ r U = A$.
    Composing with $s$ on the left yields that
    $e \circ (s \circ \mu^U \circ j U)^{\ext S} \circ e U = e \circ (s
    \circ \mu^U \circ j U)^{\ext S}$; the equivalence between
    (\ref{eq:STU:muUS:conv}) and (\ref{eq:STU:muUS:conv'}) follows.
  \end{proof}
\end{longred}

\section{Smyth hyperspace, continuous valuations and superlinear
  previsions}
\label{sec:smyth-hypersp-cont}

For every topological space $X$, let $\Smyth X$ be the set of
non-empty compact saturated subsets of $X$.  (A subset is
\emph{saturated} if and only if it is the intersection of its open
neighborhoods, or equivalently if it is upwards closed in the
specialization preordering $\leq$ of $X$, defined as $x \leq y$ if and
only if every open neighborhood of $x$ contains $y$.  A space is $T_0$
if and only if its specialization preordering is antisymmetric.  We
refer to \citep{JGL-topology} for all questions topological.)  The
\emph{upper Vietoris} topology on that set has basic open subsets
$\Box U$ consisting of those non-empty compact saturated subsets of
$X$ that are included in $U$, where $U$ ranges over the open subsets
of $X$.  We write $\SV X$ for the resulting topological space.  Its
specialization ordering is reverse inclusion $\supseteq$.

The $\SV$ construction was studied by a number of people, starting
with Mike Smyth \citep{Smyth:power:pred}, and later by Andrea Schalk
\citep[Section~7]{Schalk:diss} who studied not only this, but also the
variant with the Scott topology, and a localic counterpart.  See also
\citep[Sections~6.2.2, 6.2.3]{AJ:domains} or
\citep[Section~IV-8]{GHKLMS:contlatt}, where the accent is rather on
the Scott topology of $\supseteq$.  In a computer-science setting,
$\SV$ is a model of \emph{demonic} non-determinism.

There is a monad $(\SV, \eta^\Smyth, \mu^\Smyth)$ on $\Topcat$, and
this will be our first monad ($S$).  For every continuous map
$f \colon X \to Y$, $\SV f$ maps every $Q \in \SV X$ to $\upc f [Q]$,
where $f [Q]$ denotes $\{f (x) \mid x \in Q\}$, and $\upc$ denotes
upward closure.  The unit $\eta^\Smyth$ is defined by
$\eta^\Smyth_X (x) \eqdef \upc x$ for every $x \in X$.  For every
continuous map $f \colon X \to \SV Y$, there is an extension
$f^\sharp \colon \SV X \to \SV Y$, defined by
$f^\sharp (Q) \eqdef \bigcup_{x \in Q} f (x)$.  The multiplication
$\mu^\Smyth$ is defined by
$\mu^\Smyth_X \eqdef \identity {\SV X}^\sharp$, namely
$\mu^\Smyth_X (\mathcal Q) \eqdef \bigcup \mathcal Q$
\citep[Proposition~7.21]{Schalk:diss}.  

Before we go to our second monad of interest, we need the following.
A \emph{dcpo} (short for \emph{directed complete partial order}) is a
poset $X$ in which every directed family $D$ has a supremum $\sup D$;
a family $D$ is \emph{directed} if and only if $D \neq \emptyset$ and
any two elements of $D$ have an upper bound in $D$.  A function
$f \colon X \to Y$ between dcpos is \emph{Scott-continuous} if and
only if it is monotonic and preserves suprema of directed families.  A
subset $U$ of $X$ is \emph{Scott-open} if and only if it is
upwards-closed and intersects every directed family $D$ such that
$\sup D \in U$.  The Scott-open sets form the \emph{Scott topology} on
$X$.  See \citep{GHKLMS:contlatt,AJ:domains,JGL-topology} for notions
of dcpos, Scott-continuous and other notions of domain theory.

Our second monad, $T$, is that of continuous valuations, a close
cousin of measures.  Let $\Open X$ denote the lattice of open subsets
of a space $X$.  This is in particular a dcpo; so is
$\creal \eqdef \Rp \cup \{\infty\}$, with its usual ordering, which
places $\infty$ above all non-negative real numbers.  A
\emph{continuous valuation} on a space $X$ is a map
$\nu \colon \Open X \to \creal$ that is \emph{strict}
($\nu (\emptyset)=0$), 
\emph{modular} (for all $U, V \in \Open X$,
$\nu (U) + \nu (V) = \nu (U \cup V) + \nu (U \cap V)$) and
Scott-continuous.  The latter means that $U \subseteq V$ implies
$\nu (U) \leq \nu (V)$ and that for every directed family
${(U_i)}_{i \in I}$ of open subsets of $X$,
$\nu (\bigcup_{i \in I} U_i) = \sup_{i \in I} \nu (U_i)$.
We say that $\nu$ is a \emph{probability} valuation if and only if
$\nu (X)=1$, and a \emph{subprobability} valuation if and only if
$\nu (X) \leq 1$.

Let $\Val X$ denote the space of continuous valuations on a space $X$,
with the following \emph{weak topology}.  That is defined by subbasic
open sets $[U > r] \eqdef \{\nu \in \Val X \mid \nu (U) > r\}$, where
$U \in \Open X$ and $r \in \Rp$.  We define its subspace $\Val_1 X$ of
probability valuations and $\Val_{\leq 1} X$ (subprobability)
similarly.  In general, we will write $\Val_\bullet X$, where
$\bullet$ is nothing, ``$\leq 1$'', or ``$1$''.

Continuous valuations are an alternative to measures that have become
popular in domain theory \citep{jones89,Jones:proba}.  The first
results that connected continuous valuations and measures are due to
\citet{saheb-djahromi:meas} and 
\citet{Lawson:valuation}.  More recent results include the following.
In one direction, every measure on the Borel $\sigma$-algebra of $X$
induces a continuous valuation on $X$ by restriction to the open sets,
if $X$ is hereditarily Lindel\"of (namely, if every directed family of
open sets contains a cofinal monotone sequence).  This is an easy
observation, and one half of Adamski's theorem
\citep[Theorem~3.1]{Adamski:measures}, which states that a space is
hereditary Lindel\"of if and only if every measure on its Borel
$\sigma$-algebra restricts to a continuous valuation on its open sets.
In the other direction, every continuous valuation on a space $X$
extends to a measure on the Borel sets provided that $X$ is an
LCS-complete space \citep[Theorem~1]{dBGLJL:LCScomplete}.  An
\emph{LCS-complete} space is a space homeomorphic to a $G_\delta$
subset of a locally compact sober space---$G_\delta$ meaning a
countable intersection of open sets.  Every continuous dcpo in its
Scott topology, being locally compact and sober, is LCS-complete.
Every Polish space, and more generally every quasi-Polish space, is
LCS-complete; see \citep{dBGLJL:LCScomplete}.

There is a monad $(\Val_\bullet, \eta, \mu)$ on $\Topcat$.  For every
continuous map $f \colon X \to Y$, $\Val_\bullet f$ sends every
$\nu \in \Val_\bullet X$ to the \emph{pushforward} continuous
valuation $f [\nu]$, defined by $f [\nu] (V) \eqdef \nu (f^{-1} (V))$
for every $V \in \Open Y$.  The unit $\eta$ is defined by
$\eta_X (x) \eqdef \delta_x$ for every $x \in X$, for every space $X$;
$\delta_x$ is the \emph{Dirac valuation} at $x$, and maps every open
subset $U$ of $X$ to $1$ if $x \in U$, to $0$ otherwise.  For every
continuous map $f \colon X \to \Val_\bullet Y$, its extension
$f^\dagger \colon \Val_\bullet X \to \Val_\bullet Y$ is defined by
$f^\dagger (\nu) (V) \eqdef \int_{x \in X} f (x) (V) \,d\nu$ for every
$\nu \in \Val_\bullet X$ and for every $V \in \Open Y$, and then
$\mu_X \eqdef \identity {\Val_\bullet X}^\dagger$.  Integration is
best defined by a Choquet formula \citep[Section~4]{Tix:bewertung}.
Explicitly, let $\Lform X$ be the space of all lower semicontinuous
functions $h$ from $X$ to $\creal$, that is, the set of all continuous
maps from $X$ to $\creal$, equipped with its Scott topology of the
pointwise ordering.  For every $h \in \Lform X$,
$\int_{x \in X} h (x) \,d\nu$ is defined as the indefinite Riemann
integral $\int_0^\infty \nu (h^{-1} (]t, \infty])) \,dt$.  The
following \emph{change of variable} formula follows immediately: for
every continuous map $f \colon X \to Y$, for every $h \in \Lform Y$,
for every $\nu \in \Val_\bullet X$,
$\int_{y \in Y} h (y) \,df[\nu] = \int_{x \in X} h (f (x)) \,d\nu$.
We also note that, for every $h \in \Lform X$, for every
$\xi \in \Val_\bullet {\Val_\bullet X}$,
\begin{align}
  \label{eq:chgvar}
  \int_{x \in X} h (x) \,d\mu_X (\xi)
  & = \int_{\nu \in \Val_\bullet X}
    \left(\int_{x \in X} h (x) \,d\nu\right)\,d\xi
\end{align}
See \citep[Hilfssatz~6.1]{kirch93} for a proof.

Integration allows one to give another presentation of the weak
topology on $\Val_\bullet X$: this is also the topology generated by
subbasic open sets
$[h > r] \eqdef \{\nu \in \Val_\bullet X \mid \int_{x \in X} h (x)
\,d\nu > r\}$, where $h \in \Lform X$ and $r \in \Rp$.  This was
proved in \citep[Theorem~3.3]{Jung:scs:prob} when $\bullet$ is ``$1$''
or ``$\leq 1$'', but the same argument also shows it when $\bullet$ is
nothing, see \citep[Satz~8.5]{kirch93} or \citep[Lemma~4.9]{kirch93}.

Our third monad, $U$, is the monad of superlinear previsions of
\citep{Gou-csl07}, except that this paper studied it on the category of
dcpos and Scott-continuous maps instead of $\Topcat$; the study of
superlinear previsions on $\Topcat$ was initiated in
\citep{JGL-mscs16}, but it was not stated that they formed a monad
there, although, as we will see below, this is pretty elementary.

A \emph{prevision} on $X$ is a Scott-continuous functional
$F \colon \Lform X \to \creal$ such that $F (ah)=aF(h)$ for all
$a \in \Rp$ and $h \in \Lform X$.  It is \emph{linear} (resp.,
\emph{superlinear}, \emph{sublinear}) if and only if
$F (h+h')=F (h)+F (h')$ (resp., $\geq$, $\leq$) for all
$h, h' \in \Lform X$.  It is \emph{subnormalized} (resp.,
\emph{normalized}) iff $F (\one+h) \leq 1+F (h)$ (resp., $=$) for
every $h \in \Lform X$, where $\one$ is the constant function with
value $1$.  We write $\Pred_\DN X$ for the space of superlinear
previsions on $X$, $\Pred_\DN^{\leq 1} X$ for its subspace of
subnormalized superlinear previsions, and $\Pred_\DN^1 X$ for its
subspace of normalized superlinear previsions; synthetically, we use
the notation $\Pred_\DN^\bullet X$.  Similarly, we denote by
$\Pred_\Nature^\bullet X$ the corresponding spaces of linear
previsions and by $\Pred_\AN^\bullet X$ the corresponding spaces of
sublinear previsions.  The topology on each of those spaces is
generated by subbasic open sets
$[h > r] \eqdef \{F \in \Pred_\DN^\bullet X \text{ (resp.,
  $\Pred_\Nature^\bullet X$, $\Pred_\AN^\bullet X$)} \mid F (h) >
r\}$, $h \in \Lform X$, $r \in
\Rp$. 

There is a $\Pred_\DN^\bullet$ endofunctor on $\Topcat$, and its
action on morphisms $f \colon X \to Y$ is given by
$\Pred_\DN^\bullet f (F) = (h \in \Lform Y \mapsto F (h \circ f))$.
The $\Pred_\Nature^\bullet$ and $\Pred_\AN^\bullet$ endofunctors are
defined similarly.  Those are all functor parts of monads.  Those
monads were made explicit in \citep[Proposition~2]{Gou-csl07}, on the
category of dcpos (resp., pointed dcpos) and Scott-continuous maps,
except that spaces of previsions were given the Scott topology of the
pointwise ordering.  The formulae for unit and multiplication is the
same below.
In order to verify the monad equations, we will rely on Ernest Manes'
equivalent characterization through units and extension operations
\citep{Manes:algth}: a monad $S$ is equivalently given by objects $SX$
and morphisms $\eta^S_X \colon X \to SX$, one for each object $X$,
plus an extension $f^{\ext S} \colon SX \to SY$ for every morphism
$f \colon X \to SY$, satisfying:
\begin{enumerate}[label=(\roman*),leftmargin=*]
\item\label{it:Manes:eta} ${(\eta^S_X)}^{\ext S} = \identity {SX}$;
\item\label{it:Manes:dagger:1} for every morphism
  $f \colon X \to SY$, $f^{\ext S} \circ \eta^S_X = f$;
\item\label{it:Manes:dagger:2} for all morphisms $f \colon X \to SY$
  and $g \colon Y \to SZ$,
  $g^{\ext S} \circ f^{\ext S} = {(g^{\ext S} \circ f)}^{\ext S}$.
\end{enumerate}
Then $S f$ can be recovered, for every morphism $f \colon X \to Y$, as
$(\eta^S_Y \circ f)^{\ext S}$, and $\mu^S_X$ as
$\identity {SX}^{\ext S}$.

\begin{proposition}
  \label{prop:DN:monad}
  Let $\bullet$ be nothing, ``$\leq 1$'', or ``$1$''.  There is a
  monad
  $(\Pred_\DN^\bullet, \allowbreak \eta^\DN, \allowbreak \mu^\DN)$
  (resp.,
  $(\Pred_\Nature^\bullet, \allowbreak \eta^\Nature, \allowbreak
  \mu^\Nature)$,
  $(\Pred_\AN^\bullet, \allowbreak \eta^\AN, \allowbreak \mu^\AN)$) on
  $\Topcat$, whose unit and multiplication are defined at every
  topological space $X$ by:
  \begin{itemize}
  \item for every $x \in X$, $\eta^\DN_X (x)$ (resp.,
    $\eta^\Nature_X$, $\eta^\AN_X$)
    $\eqdef (h \in \Lform X \mapsto h (x))$,
  \item for every
    $\mathcal F \in \Pred_\DN^\bullet {\Pred_\DN^\bullet X}$ (resp.,
    in $\Pred_\Nature^\bullet {\Pred_\Nature^\bullet X}$),
    $\mu^\DN_X (\mathcal F)$ (resp., $\mu^\Nature_X$, $\mu^\AN_X$)
    $\eqdef (h \in \Lform X \mapsto \mathcal F (F \mapsto F (h)))$,
    where $F$ ranges over $\Pred_\DN^\bullet X$, resp.\
    $\Pred_\Nature^\bullet X$, $\Pred_\AN^\bullet X$.
  \end{itemize}
  For every continuous map $f \colon X \to \Pred_\DN^\bullet Y$
  (resp., $\Pred_\Nature^\bullet Y$, $\Pred_\AN^\bullet Y$), its
  extension $f^{\ext \DN}$ is defined by
  $f^{\ext \DN} (F) \eqdef (h \in \Lform Y \mapsto F (x \in X \mapsto f (x)
  (h)))$ (resp., and the same formula for $f^{\ext \Nature}$, $f^{\ext \AN}$).
\end{proposition}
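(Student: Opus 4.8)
The plan is to invoke Manes' characterization of a monad through a unit and an extension operation, recalled just above, so that no associativity diagram needs to be chased directly. I would take $\Pred_\DN^\bullet X$ to be the indicated space of (subnormalized, normalized) superlinear previsions with its weak topology, and $\eta^\DN_X$ and $f^{\ext\DN}$ as in the statement; then the verification splits into two parts: first, that these data are well defined and continuous, and second, that they satisfy Manes' three identities \ref{it:Manes:eta}--\ref{it:Manes:dagger:2}. I would only treat $\Pred_\DN^\bullet$, since the cases of $\Pred_\Nature^\bullet$ and $\Pred_\AN^\bullet$ are word for word the same, replacing the superlinearity inequality $F(h+h') \geq F(h)+F(h')$ by an equality, resp.\ by the reverse inequality, wherever it occurs.

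For the first part, I would argue as follows. The map $\eta^\DN_X(x) = (h \mapsto h(x))$ is linear (hence superlinear), positively homogeneous, Scott-continuous (directed suprema in $\Lform X$ are computed pointwise), and normalized, since $(\one+h)(x) = 1+h(x)$; so it lies in $\Pred_\DN^\bullet X$ for each choice of $\bullet$, and $\eta^\DN_X \colon X \to \Pred_\DN^\bullet X$ is continuous because ${(\eta^\DN_X)}^{-1}([h>r]) = h^{-1}(]r,\infty])$, which is open for $h \in \Lform X$. For the extension of a continuous map $f \colon X \to \Pred_\DN^\bullet Y$: fixing $h \in \Lform Y$, the evaluation $G \mapsto G(h)$ is lower semicontinuous on $\Pred_\DN^\bullet Y$ by definition of the weak topology, so $x \mapsto f(x)(h)$ belongs to $\Lform X$ and $f^{\ext\DN}(F)(h) = F(x \mapsto f(x)(h))$ is meaningful. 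Positive homogeneity and Scott-continuity of $f^{\ext\DN}(F)$ in $h$ are then inherited, pointwise, from the corresponding properties of each $f(x)$ and of $F$; for Scott-continuity one uses that $x \mapsto f(x)(\sup_i h_i) = \sup_i (x \mapsto f(x)(h_i))$ because each $f(x)$ is Scott-continuous, and then applies Scott-continuity of $F$. Superlinearity follows from the pointwise inequality $f(x)(h+h') \geq f(x)(h)+f(x)(h')$ together with monotonicity and superlinearity of $F$, and the (sub)normalization of $f^{\ext\DN}(F)$ follows from that of each $f(x)$ and of $F$ in the same manner. Finally $f^{\ext\DN} \colon \Pred_\DN^\bullet X \to \Pred_\DN^\bullet Y$ is continuous because ${(f^{\ext\DN})}^{-1}([h>r]) = [g>r]$, where $g \eqdef (x \mapsto f(x)(h)) \in \Lform X$.

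For the second part, I would simply unfold the definitions, which involves no topology. Identity \ref{it:Manes:eta}: ${(\eta^\DN_X)}^{\ext\DN}(F)(h) = F(x \mapsto h(x)) = F(h)$. Identity \ref{it:Manes:dagger:1}: $f^{\ext\DN}(\eta^\DN_X(x))(h) = f(x)(h)$. Identity \ref{it:Manes:dagger:2}, for $g \colon Y \to \Pred_\DN^\bullet Z$: both $g^{\ext\DN}(f^{\ext\DN}(F))(h)$ and ${(g^{\ext\DN} \circ f)}^{\ext\DN}(F)(h)$ collapse to $F\big(x \mapsto f(x)(y \mapsto g(y)(h))\big)$. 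So $\Pred_\DN^\bullet$ is a monad in Manes' sense; unwinding the recipe at the end of the recollection then shows that its functorial action is $\Pred_\DN^\bullet f(F) = {(\eta^\DN_Y \circ f)}^{\ext\DN}(F) = (h \mapsto F(h \circ f))$ and that $\mu^\DN_X = {(\identity{\Pred_\DN^\bullet X})}^{\ext\DN}$ sends $\mathcal F$ to $(h \mapsto \mathcal F(F \mapsto F(h)))$ --- where $F \mapsto F(h)$ lies in $\Lform {\Pred_\DN^\bullet X}$, again by definition of the weak topology --- which are exactly the formulas in the statement.

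The one part requiring real care is the first: checking that $f^{\ext\DN}(F)$ is again a superlinear (resp.\ linear, sublinear) prevision of the correct normalization level, and that all the maps in play are continuous for the weak topologies. Once this bookkeeping is done, Manes' identities are purely formal, so I do not anticipate a genuine obstacle; and the whole argument is uniform across the super-/sub-linear variants and across the choices of $\bullet$.
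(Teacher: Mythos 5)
Your proposal is correct and follows essentially the same route as the paper: both rely on Manes' unit-plus-extension characterization, verify well-definedness and continuity of $\eta^\DN_X$ and $f^{\ext\DN}$ via preimages of the subbasic opens $[h>r]$, derive superlinearity and (sub)normalization of $f^{\ext\DN}(F)$ from the pointwise properties of the $f(x)$ together with monotonicity of $F$, and then check the three Manes identities by direct unfolding. No gaps.
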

\begin{proof}
  Unit.  The functional $(h \in \Lform X \mapsto h (x))$ is a linear
  prevision, hence also a superlinear and a sublinear prevision, as
  one checks easily.  It is also normalized, hence subnormalized.  The
  map $\eta^\DN_X$ is continuous, since the inverse image of $[h > r]$
  is $h^{-1} (]r, \infty])$.

  Extension.  For every $h \in \Lform X$, the map
  $x \in X \mapsto f (x) (h)$ is lower semicontinuous, since the
  inverse image of $]r, \infty]$ is $f^{-1} ([h > r])$.  Hence
  $F (x \in X \mapsto f (x) (h))$ makes sense.  The map
  $(h \in \Lform Y \mapsto F (x \in X \mapsto f (x) (h)))$ itself is
  Scott-continuous, because $F$ and $f (x)$ are.  Hence
  $f^{\ext \DN} (F)$ is well-defined.  We have
  $(f^{\ext \DN})^{-1} ([h > r]) = [(x \in X \mapsto f (x) (h)) > r]$, so
  $f^{\ext \DN}$ is continuous.  Additionally, when $F$ and every $f (x)$
  is superlinear (resp., linear, sublinear), then
  $f^{\ext \DN} (F) (h+h') \geq F (x \in X \mapsto f (x) (h) + f (x) (h'))
  \geq f^{\ext \DN} (F) (h) + f^{\ext \DN} (F) (h')$ (resp. $=$, $\leq$); we
  have implicitly used the fact that $F$, being continuous, is
  monotonic.  When $\bullet$ is ``$\leq 1$'', namely if $F$ and every
  $f (x)$is subnormalized, then
  $f^{\ext \DN} (F) (\one + h) \leq F (x \in X \mapsto 1+ f (x) (h)) \leq 1
  + f^{\ext \DN} (F) (h)$, so $f^{\ext \DN} (F)$ is subnormalized; similarly, if
  $\bullet$ is ``$1$'', then $f^{\ext \DN} (F)$ is normalized.

  The formulation for multiplication follows immediately.  We also
  check that the action of the functor on morphisms $f$ gives back the
  expected formula $F \mapsto (h \in \Lform Y \mapsto F (h \circ
  f))$.  Finally, we check Manes' equations.  We only deal with $\DN$,
  as the $\AN$ and $\Nature$ cases are identical.
  \begin{itemize}
  \item[\ref{it:Manes:eta}] $(\eta^\DN_X)^{\ext \DN}$ maps $F$ to
    $(h \in \Lform X \mapsto F (x \in X \mapsto \eta^\DN_X (x) (h))) =
    F$.
  \item[\ref{it:Manes:dagger:1}] for every continuous map
    $f \colon X \to \Pred_\DN^\bullet Y$, for every $x \in X$,
    $(f^{\ext \DN} \circ \eta^\DN_X) (x) = (h \in \Lform Y \mapsto
    \eta^\DN_X (x) (x \in X \mapsto f (x) (h))) = f (x)$.
  \item[\ref{it:Manes:dagger:2}] for all continuous maps
    $f \colon X \to \Pred_\DN^\bullet Y$ and
    $g \colon Y \to \Pred_\DN^\bullet Z$, for every
    $F \in \Pred_\DN^\bullet X$, for every $h \in \Lform Z$,
    $(g^{\ext \DN} \circ f^{\ext \DN}) (F) (h) = f^{\ext \DN} (F) (y \in Y \mapsto g
    (y) (h)) = F (x \in X \mapsto f (x) (y \in Y \mapsto g (y) (h)))$,
    while
    $(g^{\ext \DN} \circ f)^{\ext \DN} (F) (h) = F (x \in X \mapsto (g^{\ext \DN}
    \circ f) (x) (h)) = F (x \in X \mapsto f (x) (y \in Y \mapsto g
    (y) (h)))$.
  \end{itemize}
%

\end{proof}

There is a homeomorphism between $\Pred_\Nature^\bullet X$ and
$\Val_\bullet X$, for every space $X$: in one direction,
$\nu \in \Val_\bullet X$ is mapped to its \emph{integration
  functional} $(h \in \Lform X \mapsto \int_{x \in X} h (x) \,d\nu)$,
and in the other direction any $G \in \Pred_\Nature^\bullet X$ is
mapped to $(U \in \Open X \mapsto G (\chi_U))$, where $\chi_U$ is the
characteristic map of $U$.  We note in particular that
$\int_{x \in X} \chi_U (x) \,d\nu = \nu (U)$.  This was first proved
by 
\citet[Satz~8.1]{kirch93}, under the additional assumption
that $X$ is core-compact; Tix later observed that this assumption was
unnecessary \citep[Satz~4.16]{Tix:bewertung}.

\begin{lemma}
  \label{lemma:V=PN}
  Let $f_X \colon \Val_\bullet X \to \Pred_\Nature^\bullet X$ map
  $\nu$ to $(h \in \Lform X \mapsto \int_{x \in X} h (x) \,d\nu)$.
  Then $f$ is a monad isomorphism from $\Val_\bullet$ onto
  $\Pred_\Nature^\bullet$.
\end{lemma}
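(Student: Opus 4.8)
The plan is to build on the homeomorphism $f_X \colon \Val_\bullet X \to \Pred_\Nature^\bullet X$ recalled just before the statement (Kirch \cite[Satz~8.1]{kirch93}, Tix \cite[Satz~4.16]{Tix:bewertung}). Since each component $f_X$ is already a homeomorphism, in particular a bijection, it suffices to check that $f = (f_X)_X$ is a morphism of monads from $\Val_\bullet$ to $\Pred_\Nature^\bullet$: the inverses $f_X^{-1}$ then assemble into a morphism of monads in the other direction, so $f$ is a monad isomorphism. I phrase the morphism condition through Manes' presentation, exactly as in Proposition~\ref{prop:DN:monad}, so that it reduces to two checks: (i) $f_X \circ \eta_X = \eta^\Nature_X$ for every space $X$, and (ii) $f_Y \circ g^\dagger = (f_Y \circ g)^{\ext\Nature} \circ f_X$ for every continuous map $g \colon X \to \Val_\bullet Y$, where $g \mapsto g^\dagger$ is the extension of $\Val_\bullet$ and $\_^{\ext\Nature}$ that of $\Pred_\Nature^\bullet$. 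Taking $g = \eta_Y \circ \phi$ in (ii) recovers naturality of $f$ for free, so (i) and (ii) are all that is needed.

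For (i), fix $x \in X$ and $h \in \Lform X$. By the Choquet formula, $\int_{y \in X} h (y)\,d\delta_x = \int_0^\infty \delta_x (h^{-1} (]t,\infty]))\,dt$; the integrand is $1$ when $h (x) > t$ and $0$ otherwise, so this Riemann integral equals $h (x)$. Hence $f_X (\eta_X (x)) = f_X (\delta_x) = (h \in \Lform X \mapsto h (x)) = \eta^\Nature_X (x)$. For (ii), fix a continuous $g \colon X \to \Val_\bullet Y$, a valuation $\nu \in \Val_\bullet X$, and $h \in \Lform Y$. Unfolding the right-hand side with the definition of $\_^{\ext\Nature}$, $(f_Y \circ g)^{\ext\Nature} (f_X (\nu)) (h) = f_X (\nu)\big(x \in X \mapsto f_Y (g (x)) (h)\big) = f_X (\nu)\big(x \in X \mapsto \int_{y \in Y} h (y)\,dg (x)\big)$; the argument is lower semicontinuous in $x$ (its inverse image of $]r,\infty]$ is $g^{-1} ([h > r])$, just as in the proof of Proposition~\ref{prop:DN:monad}), so this equals $\int_{x \in X}\big(\int_{y \in Y} h (y)\,dg (x)\big)\,d\nu$. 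For the left-hand side, write $g^\dagger = \mu_Y \circ \Val_\bullet g$, so that $f_Y (g^\dagger (\nu)) (h) = \int_{y \in Y} h (y)\,d\mu_Y (\Val_\bullet g (\nu))$; by the iterated-integral identity~(\ref{eq:chgvar}) this equals $\int_{\rho \in \Val_\bullet Y}\big(\int_{y \in Y} h (y)\,d\rho\big)\,d(\Val_\bullet g (\nu))$, and then, by the change of variable formula applied to $g$ and to the lower semicontinuous map $\rho \mapsto \int_{y \in Y} h (y)\,d\rho$ on $\Val_\bullet Y$, this is $\int_{x \in X}\big(\int_{y \in Y} h (y)\,dg (x)\big)\,d\nu$. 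The two sides coincide, establishing (ii).

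The only step beyond routine unfolding is the use of the iterated-integral identity~(\ref{eq:chgvar}) (Kirch \cite[Hilfssatz~6.1]{kirch93}) together with the change of variable formula recalled earlier; these are precisely what makes the two double integrals in (ii) match, and I expect this Fubini-type identity to be the main point rather than any genuine obstacle, since both ingredients are already available. With (i) and (ii) in hand, $f$ is a morphism of monads, and since every $f_X$ is a homeomorphism, $f$ is a monad isomorphism from $\Val_\bullet$ onto $\Pred_\Nature^\bullet$.
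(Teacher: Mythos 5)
Your proof is correct and follows essentially the same route as the paper: both reduce the statement to checking that the known objectwise homeomorphism $f_X$ is a monad morphism, with the unit case being the computation $\int h\,d\delta_x = h(x)$ and the multiplication case resting on the iterated-integral identity~(\ref{eq:chgvar}) together with the change of variable formula. The only (cosmetic) difference is that you verify the morphism condition in Manes/Kleisli form against extensions, whereas the paper checks $f \circ \mu = \mu^\Nature \circ ff$ directly; the two are equivalent and the underlying computation is the same.
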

\begin{proof}
  For every space $X$, $f_X$ is a bijection, and the inverse image of
  $[h > r]$ (seen as a subbasic open subset of
  $\Pred_\Nature^\bullet X$) is
  $[h > r] = \{\nu \in \Val_\bullet X \mid \int_{x \in X} h (x) \,d\nu
  > r\}$, showing that it is a homeomorphism.  It remains to see that
  $f$ is a monad homomorphism, namely that
  $f \circ \eta = \eta^\Nature$ and
  $f \circ \mu = \mu^\Nature \circ ff$.  For the first one,
  let $x \in X$, then
  $(f_X \circ \eta_X) (x) = f_X (\delta_x) = (h \in \Lform X \mapsto
  \int_{x' \in X} h (x') \,d\delta_x) = (h \in \Lform X \mapsto h (x))
  = \eta^\Nature_X$.  For the second one, for every $\xi \in
  \Val_\bullet \Val_\bullet X$, for every $h \in \Lform X$,
  \begin{align*}
    (f_X \circ \mu_X) (\xi) (h)
    & = \int_{x \in X} h (x) \,d\mu_X (\xi)
      = \int_{\nu \in \Val_\bullet X} \left(\int_{x \in X} h (x) \,d\nu\right) \,d\xi,
  \end{align*}
  while:
  \begin{align*}
    (\mu^\Nature_X \circ (ff)_X) (\xi) (h)
    &= (ff)_X (\xi) (F \mapsto F (h)) \\
    & = \Pred_\Nature^\bullet f_X (f_{\Val_\bullet X} (\xi)) (F
      \mapsto F (h)) \\
    & = f_{\Val_\bullet X} (\xi) ((F \mapsto F (h)) \circ f_X) \\
    & = f_{\Val_\bullet X} (\xi) (\nu \mapsto f_X (\nu) (h))
    = \int_{\nu \in \Val_\bullet X} f_X (\nu) (h) \,d\xi,
  \end{align*}
  which is the same thing.
\end{proof}

We will also need the following later on.
\begin{lemma}
  \label{lemma:muT:lin}
  Let $\bullet$ be nothing, ``$\leq 1$'' or ``$1$'', and $X$ be a
  topological space.  Then $\mu^\Nature_X$ is \emph{affine}: for all
  $\mathcal G_1, \mathcal G_2 \in \Pred_\Nature^\bullet
  \Pred_\Nature^\bullet X$, for every $a \in [0, 1]$,
  $\mu^\Nature_X (a \mathcal G_1 + (1-a) \mathcal G_2) = a
  \mu^\Nature_X (\mathcal G_1) + (1-a) \mu^\Nature_X (\mathcal G_2)$.
\end{lemma}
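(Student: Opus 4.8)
The plan is to reduce the claim to the pointwise definition of affine combinations of functionals, after first checking that $a\mathcal G_1 + (1-a)\mathcal G_2$ is a legitimate element of $\Pred_\Nature^\bullet \Pred_\Nature^\bullet X$, so that $\mu^\Nature_X$ can even be applied to it. For this last point I would observe that a conical combination — a fortiori an affine combination — of linear previsions is again a linear prevision: positive homogeneity and additivity are visibly preserved by pointwise sums and by multiplication by a fixed nonnegative scalar, and Scott-continuity is preserved because addition on $\creal$ and multiplication by a fixed element of $\Rp$ both preserve directed suprema (using the directedness of the argument family to merge the two suprema into one). The normalization constraint encoded by $\bullet$ is also preserved: if $\mathcal G_i(\one + g) \le 1 + \mathcal G_i(g)$ for $i=1,2$ (resp.\ with $=$), then $(a\mathcal G_1 + (1-a)\mathcal G_2)(\one + g) = a\,\mathcal G_1(\one + g) + (1-a)\,\mathcal G_2(\one + g) \le a(1 + \mathcal G_1(g)) + (1-a)(1 + \mathcal G_2(g)) = 1 + (a\mathcal G_1 + (1-a)\mathcal G_2)(g)$ (resp.\ with $=$), using $a, 1-a \ge 0$ and $a + (1-a) = 1$.

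Once this is in place, I would fix $h \in \Lform X$ and evaluate both sides of the claimed identity at $h$. Writing $\phi_h$ for the functional $F \mapsto F(h)$ on $\Pred_\Nature^\bullet X$ — which lies in $\Lform(\Pred_\Nature^\bullet X)$ since $\phi_h^{-1}(]r,\infty]) = [h > r]$ is open — the left-hand side is, by definition of $\mu^\Nature_X$, the value $\big(a\mathcal G_1 + (1-a)\mathcal G_2\big)(\phi_h)$, which by the pointwise definition of the affine combination equals $a\,\mathcal G_1(\phi_h) + (1-a)\,\mathcal G_2(\phi_h) = a\,\mu^\Nature_X(\mathcal G_1)(h) + (1-a)\,\mu^\Nature_X(\mathcal G_2)(h)$. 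This is exactly the value of $a\,\mu^\Nature_X(\mathcal G_1) + (1-a)\,\mu^\Nature_X(\mathcal G_2)$ at $h$, again by the pointwise definition of affine combinations; since $h$ was arbitrary, the two functionals coincide.

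The only genuine bookkeeping is in the first step — verifying that affine combinations stay inside $\Pred_\Nature^\bullet \Pred_\Nature^\bullet X$, and in particular that Scott-continuity survives — and none of it is hard; the identity itself is a one-line unfolding. One could alternatively derive the lemma more abstractly from Lemma~\ref{lemma:V=PN}, transporting it to the statement that $\mu_X$ on $\Val_\bullet \Val_\bullet X$ is affine, which follows from the change-of-variable formula (\ref{eq:chgvar}) together with the linearity of integration; but the direct argument above is shorter and avoids having to track the homeomorphism.
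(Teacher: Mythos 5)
Your proof is correct and its core computation — evaluating both sides at an arbitrary $h \in \Lform X$ and using the pointwise definition of the affine combination to pull $a$ and $1-a$ out of $\mathcal G_i(G \mapsto G(h))$ — is exactly the paper's one-line argument. The extra verification that $a\mathcal G_1 + (1-a)\mathcal G_2$ is a legitimate element of $\Pred_\Nature^\bullet\Pred_\Nature^\bullet X$ is sound bookkeeping that the paper leaves implicit, but it does not change the route.
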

\begin{proof}
  Letting $G$ range implicitly over $\Pred_\Nature^\bullet X$ and $h$
  over $\Lform X$, we have
  $\mu^\Nature_X (a \mathcal G_1 + (1-a) \mathcal G_2) = (h \mapsto
  (\alpha \mathcal G_1 + (1-a) \mathcal G_2) (G \mapsto G (h))) = (h
  \mapsto \alpha \mathcal G_1 (G \mapsto G (h)) + (1-a) \mathcal G_2
  (G \mapsto G (h))) = a \mu^\Nature_X (\mathcal G_1) + (1-a)
  \mu^\Nature_X (\mathcal G_2)$.
\end{proof}

\subsection{The distributing retraction $r_\DN$, $s_\DN^\bullet$}
\label{sec:r_dn-s_dnbullet-form}

We need to use a few results from \citep{JGL-mscs16}.  For every space
$X$, there is a map
$r_\DN \colon \SV \Pred_\Nature^\bullet X \to \Pred_{\DN}^\bullet X$,
defined by $r_\DN (Q) (h) \eqdef \min_{G \in Q} G (h)$, and a map
$s_\DN^\bullet \colon \Pred_{\DN}^\bullet X \to \SV
\Pred_\Nature^\bullet X$ defined by
$s_\DN^\bullet (F) \eqdef \{G \in \Pred_\Nature^\bullet X \mid G \geq
F\}$; $G \geq F$ abbreviates for every $h \in \Lform X$,
$G (h) \geq F (h)$.  Both are continuous, and they form a retraction:
$r_\DN \circ s_\DN^\bullet = \identity {\Pred_{\DN}^\bullet X}$.  This
is the contents of Proposition~3.22 of \citep{JGL-mscs16}.

Additionally, $r_\DN$ and $s_\DN^\bullet$ restrict to a homeomorphism
between $\Pred_\DN^\bullet X$ and the subspace
$\SV^{cvx} \Pred_\Nature^\bullet X$ of $\SV \Pred_\Nature^\bullet X$
consisting of its convex elements \citep[Theorem~ 4.15]{JGL-mscs16}.
In particular, $s_\DN^\bullet$ only takes convex values, and in order
to show that two elements $\mathcal Q$, $\mathcal Q'$ of
$\SV \Pred_\Nature^\bullet X$ are equal, it suffices to show that they
are convex and have the same image under $r_\DN$.

One result that is missing in \citep{JGL-mscs16} is the following.
\begin{lemma}
  \label{lemma:rDP:nat}
  The transformations $r_\DN$ and $s_\DN^\bullet$ are natural.
\end{lemma}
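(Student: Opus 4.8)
The plan is to verify naturality of $r_\DN$ and $s_\DN^\bullet$ directly by unwinding the definitions, using the change-of-variable formula and the homeomorphism $\Pred_\Nature^\bullet X \cong \Val_\bullet X$ of Lemma~\ref{lemma:V=PN} only where it clarifies what the functorial action on morphisms does. Concretely, fix a continuous map $f \colon X \to Y$; I must show that $r_\DN \circ \SV \Pred_\Nature^\bullet f = \Pred_\DN^\bullet f \circ r_\DN$ as maps $\SV \Pred_\Nature^\bullet X \to \Pred_\DN^\bullet Y$, and dually $s_\DN^\bullet \circ \Pred_\DN^\bullet f = \SV \Pred_\Nature^\bullet f \circ s_\DN^\bullet$ as maps $\Pred_\DN^\bullet X \to \SV \Pred_\Nature^\bullet Y$.

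For the $r_\DN$ square, recall $\SV \Pred_\Nature^\bullet f$ sends $Q \in \SV \Pred_\Nature^\bullet X$ to $\upc (\Pred_\Nature^\bullet f)[Q]$, and $\Pred_\Nature^\bullet f$ sends $G$ to $(h \in \Lform Y \mapsto G(h \circ f))$. So for $h \in \Lform Y$,
\begin{align*}
  (r_\DN \circ \SV \Pred_\Nature^\bullet f)(Q)(h)
  &= \min_{G' \in \upc(\Pred_\Nature^\bullet f)[Q]} G'(h)
  = \min_{G \in Q} (\Pred_\Nature^\bullet f)(G)(h)
  = \min_{G \in Q} G(h \circ f),
\end{align*}
where the middle equality uses that the minimum of an upward-closed set (in the pointwise order, evaluated at the fixed $h$) is attained on the generating set, together with monotonicity of each $G$; the right-hand side is exactly $r_\DN(Q)(h \circ f) = \Pred_\DN^\bullet f (r_\DN(Q))(h)$. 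One subtle point deserving a line of care: one must check that the minimum defining $r_\DN$ really is attained (so that "$\min$" is legitimate and the reduction to generators is valid); this is part of the content of \cite{JGL-mscs16} that $r_\DN$ is well-defined, and I would simply cite it rather than reprove compactness arguments.

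For the $s_\DN^\bullet$ square, unwind both sides applied to $F \in \Pred_\DN^\bullet X$: the left gives $\{G' \in \Pred_\Nature^\bullet Y \mid G' \geq \Pred_\DN^\bullet f(F)\} = \{G' \mid \forall h \in \Lform Y,\ G'(h) \geq F(h \circ f)\}$, while the right gives $\upc\{(\Pred_\Nature^\bullet f)(G) \mid G \in \Pred_\Nature^\bullet X,\ G \geq F\}$. The inclusion $\supseteq$ is immediate: if $G \geq F$ then $(\Pred_\Nature^\bullet f)(G)(h) = G(h \circ f) \geq F(h \circ f)$, and upward closure preserves this. The reverse inclusion $\subseteq$ is the real obstacle: given $G'$ on $Y$ with $G'(h) \geq F(h \circ f)$ for all $h$, I must produce $G \in \Pred_\Nature^\bullet X$ with $G \geq F$ and $(\Pred_\Nature^\bullet f)(G) \leq G'$ (so $G'$ lies above a pushforward, as required for membership in the $\upc$ on the right). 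Using Lemma~\ref{lemma:V=PN} to replace linear previsions by valuations, this amounts to: given a (sub/probability) valuation $\nu'$ on $Y$ dominating the pushforward "$f[\cdot]$ of the convex-hull data of $F$", find a valuation $\nu$ on $X$ above the corresponding data of $F$ with $f[\nu] \leq \nu'$. I expect this to follow from the same minimax / Hahn--Banach-style separation machinery that underlies the results of \cite{JGL-mscs16}, in particular Theorem~4.15 there (the homeomorphism onto convex elements) combined with the explicit description of $s_\DN^\bullet$; concretely, I would argue that both sides are convex compact saturated subsets of $\Pred_\Nature^\bullet Y$ and have the same image under $r_\DN$ (namely $\Pred_\DN^\bullet f(F)$, using the already-established $r_\DN$ square and $r_\DN \circ s_\DN^\bullet = \mathrm{id}$), hence are equal by the last sentence of the paragraph preceding Lemma~\ref{lemma:rDP:nat}. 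This last move --- reducing the $s$ square to the already-proven $r$ square plus the convexity-plus-same-image uniqueness principle --- is the clean way to finish, and sidesteps having to construct the witness $\nu$ by hand; the only thing left to check is that $\SV \Pred_\Nature^\bullet f (s_\DN^\bullet(F))$ is convex, which holds because $\Pred_\Nature^\bullet f$ is affine (it is even linear: $(\Pred_\Nature^\bullet f)(aG_1 + bG_2)(h) = aG_1(h\circ f) + bG_2(h \circ f)$) and the image of a convex set under an affine map, then closed upward, stays convex.
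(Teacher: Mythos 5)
Your proposal is correct and follows essentially the same route as the paper: the $r_\DN$ square is verified by direct computation of $\min_{G \in Q} G(h \circ f)$ on both sides, and the $s_\DN^\bullet$ square is settled by showing both sides are convex (the left because $s_\DN^\bullet$ is convex-valued, the right because the upward closure of the image of a convex set under the affine map $\Pred_\Nature^\bullet f$ is convex) and have the same image under $r_\DN$, invoking the uniqueness principle from \cite[Theorem~4.15]{JGL-mscs16} together with the already-proved $r_\DN$ square and $r_\DN \circ s_\DN^\bullet = \mathrm{id}$. The detour through a hypothetical witness construction via separation arguments is unnecessary, as you yourself conclude; the paper goes straight to the convexity-plus-same-image argument.
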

\begin{proof}
  Let $f \colon X \to Y$ be any continuous map.  For $r_\DN$, we show
  that for every $Q \in \SV \Pred_\Nature^\bullet X$, for every
  $h \in \Lform Y$,
  $r_{\DN} (\SV (\Pred_\DN^\bullet f) (Q)) (h) = \Pred_\DN^\bullet f
  (r_{\DN} (Q)) (h)$.  The left-hand side is equal to
  $\min_{G' \in \SV \Pred_\DN^\bullet (f) (Q)} G' (h) = \min_{G' \in
    \upc \{\Pred_\DN^\bullet (f) (G) \mid G \in Q\}} \allowbreak G'
  (h) = \min_{G' \in \{\Pred_\DN^\bullet (f) (G) \mid G \in Q\}}
  \allowbreak G' (h) = \min_{G \in Q} \Pred_\DN^\bullet f (G) (h)$,
  while the right-hand side is
  $r_{\DN} (Q) (h \circ f) = \min_{G \in Q} G (h \circ f)$, and those
  are equal.

  For $s_\DN^\bullet$, we must show that for every
  $F \in \Pred_\DN^\bullet X$,
  $s_{\DN}^\bullet (\Pred_\DN^\bullet (f) (F)) = \SV \Pred_\DN^\bullet
  (f) \allowbreak (s_{\DN}^\bullet (F))$.  We apply the trick
  described above: we show that both sides are convex, and have the
  same image under $r_\DN$.  The left-hand side is convex because
  $s_\DN^\bullet$ only takes convex values.  For the right-hand side,
  we consider any two elements $G'_1$, $G'_2$ of
  $\SV \Pred_\DN^\bullet (f) (s_{\DN}^\bullet (F))$, and
  $r \in [0, 1]$.  By definition, there are elements $G_1$, $G_2$ of
  $s_{\DN}^\bullet (F)$ such that
  $\Pred_\DN^\bullet f (G_1) \leq G'_1$ and
  $\Pred_\DN^\bullet f (G_2) \leq G'_2$.  Since $s_{\DN}^\bullet (F)$
  is convex, $rG_1+ (1-r)G_2 \in s_{\DN}^\bullet (F)$.  It is easy to
  see that
  $r G'_1 + (1-r) G'_2 \geq r \Pred_\DN^\bullet f (G_1) + (1-r)
  \Pred_\DN^\bullet f (G_2) = \Pred_\DN^\bullet f (rG_1 + (1-r) G_2)$,
  so
  $r G'_1 + (1-r) G'_2 \in \SV (\Pred_\DN^\bullet f) (s_{\DN}^\bullet
  (F))$.  It remains to see that
  $s_{\DN}^\bullet (\Pred_\DN^\bullet (f) (F)) = \SV \Pred_\DN^\bullet
  (f) \allowbreak (s_{\DN}^\bullet (F))$ have the same image under
  $r_\DN$.  The image of the left-hand side is
  $\Pred_\DN^\bullet (f) (F)$, while that of the right-hand side is
  $\Pred_\DN^\bullet f (r_{\DN} (s_{\DN}^\bullet (F))$ (by naturality
  of $r_\DN$), hence to $\Pred_\DN^\bullet f (F)$.
\end{proof}

\begin{theorem}
  \label{thm:DN:STU}
  $\xymatrix{\SV \Val_\bullet \ar@<1ex>[r]^{r_\DN} &
    \Pred_{\DN}^\bullet \ar@<1ex>[l]^{s_\DN^\bullet}}$ is a
  distributing retraction on $\Topcat$.
\end{theorem}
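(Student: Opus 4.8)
The plan is to verify, one by one, the six laws of Definition~\ref{defn:STU:retr}. Via the monad isomorphism $f$ of Lemma~\ref{lemma:V=PN} it is harmless to replace $\Val_\bullet$ by $\Pred_\Nature^\bullet$, so throughout I take $S \eqdef \SV$, $T \eqdef \Pred_\Nature^\bullet$, $U \eqdef \Pred_\DN^\bullet$, $r \eqdef r_\DN$, $s \eqdef s_\DN^\bullet$, $e \eqdef s_\DN^\bullet \circ r_\DN$; that $r \circ s = U$ and that $r, s$ are natural is already known (\cite[Proposition~3.22]{JGL-mscs16} and Lemma~\ref{lemma:rDP:nat}). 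I will repeatedly use two elementary facts: (i) for a \emph{nonempty} compact saturated subset $Q$ of a space $Z$ and $k \in \Lform Z$, the infimum $\min_{z \in Q} k(z)$ is attained (a $\creal$-valued lower semicontinuous map attains its minimum on a nonempty compact space, by the finite-intersection-property form of compactness) and is unchanged when $Q$ is replaced by any subset with the same upward closure, since $k$ is monotone for the specialization preorder; hence $r_\DN$ of a compact saturated set of previsions depends only on that set's lower envelope; and (ii) the upward closure (for the pointwise ordering) of a convex set $C$ of previsions is again convex, since $G_1 \geq F_1$ and $G_2 \geq F_2$ with $F_1, F_2 \in C$ give $aG_1 + (1-a)G_2 \geq aF_1 + (1-a)F_2 \in C$. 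From (i) I first compute the abbreviations of Definition~\ref{defn:STU:retr}: $i_X(Q) = (h \mapsto \min_{x \in Q} h(x))$, and $j_X$ is the inclusion $\Pred_\Nature^\bullet X \hookrightarrow \Pred_\DN^\bullet X$.

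The three laws involving $r$ alone are then direct computations, each side being evaluated at a point and at an $h \in \Lform X$, and the horizontal composites $ir$, $rj$ being unwound through the interchange law and the naturality of $i$, $j$, $r$: law (\ref{eq:STU:eta}) becomes $\min_{G \geq \eta^\Nature_X(x)} G(h) = h(x) = \eta^\DN_X(x)(h)$; both sides of (\ref{eq:STU:muS}) reduce to $\min_{Q \in \mathcal Q}\min_{G \in Q} G(h) = \min_{G \in \bigcup\mathcal Q} G(h)$; and both sides of (\ref{eq:STU:muT}) reduce to $\min_{\mathcal G \in \mathcal Q}\mathcal G(G \mapsto G(h))$. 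Law (\ref{eq:STU:etaS:conv}) is immediate: $e_X(\eta^\Smyth_{\Pred_\Nature^\bullet X}(G)) = s_\DN^\bullet(r_\DN(\upc G)) = s_\DN^\bullet(G) = \upc G$.

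The substantive part is the pair of convexity-preservation laws (\ref{eq:STU:SmuT:conv}) and (\ref{eq:STU:muUS:conv}). First I identify the relevant $1$-cells concretely: $S\mu^T = \SV\mu^\Nature_X$ sends $\mathcal Q$ to $\upc\{\mu^\Nature(\mathcal G) \mid \mathcal G \in \mathcal Q\}$; and, using the extension formula $g^{\ext S}(Q) = \bigcup_{a \in Q} g(a)$ for $S = \SV$, the $1$-cell $(s \circ \mu^U \circ jU)^{\ext S}$ sends $\mathcal Q \in \SV\Pred_\Nature^\bullet\Pred_\DN^\bullet X$ to $\bigcup_{\mathcal G \in \mathcal Q}\{G \in \Pred_\Nature^\bullet X \mid \forall h,\ G(h) \geq \mathcal G(F \mapsto F(h))\}$, that is, to the upward closure in $\Pred_\Nature^\bullet X$ of the image of $\mathcal Q$ under the affine map $\mathcal G \mapsto \mu^\DN_X(j_{\Pred_\DN^\bullet X}(\mathcal G)) = (h \mapsto \mathcal G(F \mapsto F(h)))$. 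I then invoke the principle recalled after \cite[Theorem~4.15]{JGL-mscs16}: two elements of $\SV\Pred_\Nature^\bullet X$ that are convex and have the same image under $r_\DN$ are equal. Equality of $r_\DN$-images: on the left-hand sides the outer $e$ is harmless, since $r_\DN \circ e = r_\DN$; on the right-hand sides one uses $\mathcal Q \subseteq e(\mathcal Q)$ together with the defining inequality $\mathcal G(k) \geq \min_{\mathcal G' \in \mathcal Q}\mathcal G'(k)$ for $\mathcal G \in e(\mathcal Q)$ and arbitrary $k$, instantiated at the test function $k = (G \mapsto G(h))$, resp.\ $k = (F \mapsto F(h))$ --- which is exactly $\mu^\Nature(\mathcal G)(h)$, resp.\ $\mu^\DN_X(j_{\Pred_\DN^\bullet X}(\mathcal G))(h)$ --- so the minimum over $\mathcal Q$ and the minimum over $e(\mathcal Q)$ agree. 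Convexity: the left-hand sides have the form $e_X(\cdots)$, hence are convex because $s_\DN^\bullet$ is convex-valued; the right-hand sides are upward closures of affine images ($\mu^\Nature$, resp.\ $\mu^\DN_X \circ j_{\Pred_\DN^\bullet X}$, affine by the computation of Lemma~\ref{lemma:muT:lin}) of the convex set $e(\mathcal Q)$, hence convex by fact (ii). This yields both equalities.

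The main obstacle is this last paragraph --- more precisely, unwinding the unwieldy $1$-cell $(s \circ \mu^U \circ jU)^{\ext S}$ correctly, and recognizing that its compatibility with $e$ reduces, through the affineness of the valuation and prevision multiplications and the $\SV^{cvx}$-homeomorphism of \cite{JGL-mscs16}, to the two elementary observations (i) and (ii). The remaining verifications are bookkeeping with the interchange law and with the minimum-formula for $r_\DN$.
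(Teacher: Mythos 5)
Your proof is correct and follows essentially the same route as the paper's: pass to $\Pred_\Nature^\bullet$ via Lemma~\ref{lemma:V=PN}, compute $i$ and $j$ explicitly, verify (\ref{eq:STU:eta})--(\ref{eq:STU:muT}) by the min-formula for $r_\DN$, and handle the three remaining laws through the characterization of $e = s_\DN^\bullet \circ r_\DN$ as the convex-hull idempotent coming from the $\SV^{cvx}$-homeomorphism. If anything, you are more thorough than the paper on (\ref{eq:STU:SmuT:conv}) and (\ref{eq:STU:muUS:conv}): the paper reads these as bare convexity-preservation statements and checks only that, whereas you also verify the equality of $r_\DN$-images, which is what is needed to obtain the stated equations on non-convex inputs via the ``convex with equal $r_\DN$-image'' principle.
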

\begin{proof}
  Using Lemma~\ref{lemma:V=PN}, we replace $\Val_\bullet$ by the
  isomorphic monad $\Pred_\Nature^\bullet$.
  Naturality is by Lemma~\ref{lemma:rDP:nat}.

  Let $S \eqdef \SV$, $T \eqdef \Pred_\Nature^\bullet$,
  $U \edef \Pred_\DN^\bullet$, $r \eqdef r_\DN$,
  $s \eqdef s_\DN^\bullet$.  We first observe that for every space
  $Y$, $j_Y \eqdef r_Y \circ \eta^S_{TY}$ is defined by: for every
  $G \in TY$, $j_Y (G) = r (\eta^S_{TY} (G)) = r (\upc G) = G \in UY$;
  hence $j_Y$ is simply the inclusion of $TY$ into $UY$.

  We claim that the function $i_Y \eqdef r_Y \circ S \eta^T_Y$ maps
  every $Q \in \SV Y$ to
  $(h \in \Lform Y \mapsto \min_{y \in Q} h (y))$.  Indeed,
  $i_Y (Q) (h) = r_Y (\SV \eta_Y (Q)) (h) = \min_{y \in Q, G \geq
    \eta_Y (y)} G (h) = \min_{y \in Q} h (y)$.  Note that we equate
  $\eta_Y (y) = \delta_y$ with $h \in \Lform Y \mapsto h (y)$,
  something we will continue to do.

  It remains to check the six laws of Definition~\ref{defn:STU:retr}.

  (\ref{eq:STU:eta}) For every $x \in X$,
  $(r \circ \eta^S \eta^T)_X (x) = r (\upc \delta_x)$ is the prevision
  that maps $h \in \Lform X$ to
  $\min_{G \in \upc \delta_x} G (h) = \delta_x (h) = h (x)$, namely
  $\eta^\DN_X (x) = \eta^U_X (x)$.

  
  (\ref{eq:STU:muS}) For every $\mathcal Q \in SST X$, $(r_X \circ
  \mu^S_{TX}) (\mathcal Q)$ maps every $h \in \Lform X$ to
  $r_X (\bigcup \mathcal Q) (h) = \min_{G \in \bigcup \mathcal Q}
  G (h) = \min_{Q \in \mathcal Q} \min_{G \in Q} G (h)$, while:
  \begin{align*}
    (\mu^U_X \circ (ir)_X) (\mathcal Q) (h)
    & = \mu^\DN_X ((ir)_X (\mathcal Q)) (h) \\
    & = (ir)_X (\mathcal Q) (F \mapsto F (h)) \\
    & = \Pred_\DN^\bullet r (i_{\SV
      \Pred_\Nature^\bullet X} (\mathcal Q)) (F \mapsto F (h)) \\
    & = i_{\SV \Pred_\Nature^\bullet X} (\mathcal Q)
      ((F \mapsto F (h)) \circ r) \\
    & = i_{\SV \Pred_\Nature^\bullet X} (\mathcal Q)
      (F \mapsto F (h)) \circ r) \\
    & = \min_{Q \in \mathcal Q} ((F \mapsto F (h)) \circ r) (Q)
    = \min_{Q \in \mathcal Q} r (Q) (h)
    = \min_{Q \in \mathcal Q} \min_{G \in Q} G (h).
  \end{align*}
  Therefore $r \circ \mu^S T = \mu^U \circ ir$.

  (\ref{eq:STU:muT}) For every $\mathcal Q \in STT X$,
  $(r_X \circ S \mu^T_X) (\mathcal Q)$ maps every $h \in \Lform X$ to
  $r_X (\upc \mu^T_X [\mathcal Q]) (h) = \min_{G \in \upc \mu^T_X
    [\mathcal Q]} G (h) = \min_{\mathcal G \in \mathcal Q}
  \mu^\Nature_X (\mathcal G) (h) = \min_{\mathcal G \in \mathcal Q}
  \mathcal G (G \mapsto G (h))$, while, keeping the convention that
  $G$ ranges over $\Pred_\Nature^\bullet X$ and $F$ over
  $\Pred_\DN^\bullet X$:
  \begin{align*}
    (\mu^U_X \circ (rj)_X) (\mathcal Q) (h)
    & = \mu^\DN_X ((rj)_X (\mathcal Q)) (h) \\
    & = (rj)_X (\mathcal Q) (F \mapsto F (h)) \\
    & = \Pred_\DN^\bullet j (r_{\Pred_\Nature^\bullet X} (\mathcal Q))
      (F \mapsto F (h)) \\
    & = r_{\Pred_\Nature^\bullet X} (\mathcal Q) ((F \mapsto F (h))
      \circ j) \\
    & = r_{\Pred_\Nature^\bullet X} (\mathcal Q) (G \mapsto G (h))
    & \text{since $j$ is mere inclusion} \\
    & = \min_{\mathcal G \in \mathcal Q} \mathcal G (G \mapsto G (h)).
  \end{align*}
  Therefore $r \circ S \mu^T = \mu^U \circ rj$.

  For the remaining three laws, we recall that $r$ and $s$ define a
  homeomorphism between $\Pred_\DN^\bullet X$ and
  $\SV^{cvx} \Pred_\Nature^\bullet X$ \citep[Theorem~ 4.15]{JGL-mscs16}.
  Letting $e \eqdef s \circ r$, it follows that $e (Q) = Q$ if and
  only if $Q$ is convex.

  (\ref{eq:STU:etaS:conv}) states that $\eta^S_{TX}$ always returns a
  convex element of $ST X = \SV {\Pred^\bullet_\Nature X}$, and
  indeed, for every $G \in TX$, $\eta^S_{TX} = \upc G$ is convex.

  (\ref{eq:STU:SmuT:conv}).  \red{Using Lemma~\ref{lemma:STU:conv}, we
    verify (\ref{eq:STU:SmuT:conv'}) instead.  This states that
    $e_X (S \mu^T_X (e_{TX} (\mathcal Q))) = S\mu^T (e_{TX} (\mathcal
    Q))$, namely that $S\mu^T (e_{TX} (\mathcal Q))$ is convex, for
    every $\mathcal Q \in STTX$.  But $e_{TX} (\mathcal Q)$ is convex,
    and $\mu^T_X$ is affine by Lemma~\ref{lemma:muT:lin}, so the image
    $\mu^T_X [e_{TX} (\mathcal Q)]$ is convex, and therefore also its
    upward closure
    $\upc \mu^T_X [e_{TX} (\mathcal Q)] = S \mu^T_X (e_{TX} (\mathcal
    Q))$.  }


  (\ref{eq:STU:muUS:conv}).  \red{Using Lemma~\ref{lemma:STU:conv}, we
    verify (\ref{eq:STU:muUS:conv'}) instead.  Let
    $f \eqdef s_X \circ \mu^U_X \circ j_{UX}$.  We need to show that
    $e \circ f^{\ext S} \circ e U = f^{\ext S} \circ e U$.  As in the
    previous case, it suffices to show that $f^{\ext S}$ maps} convex
  sets to convex
  sets.  
  For every $\mathcal G \in TUX$,
  $f (\mathcal G) = s_X (\mu^U_X (\mathcal G)) = \{G \in TX \mid G
  \geq (h \in \Lform X \mapsto \mu^U_X (\mathcal G))\} = \{G \in TX
  \mid G \geq (h \in \Lform X \mapsto \mathcal G (F \in UX \mapsto F
  (h)))\}$.  It follows that for every $\mathcal Q \in STUX$,
  $f^{\ext S} (\mathcal Q) = \{G \in TX \mid \exists \mathcal G \in
  \mathcal Q, G \geq (h \in \Lform X \mapsto \mathcal G (F \in UX
  \mapsto F (h)))\}$.  Now, if $\mathcal Q$ is convex, then for all
  $G_1, G_2 \in f^{\ext S} (\mathcal Q)$, for every $a \in [0, 1]$,
  $a G_1 + (1-a) G_2$ is also in $f^{\ext S}$: by definition,
  $G_1 \geq (h \in \Lform X \mapsto \mathcal G_1 (F \in UX \mapsto F
  (h)))$ for some $\mathcal G_1 \in \mathcal Q$,
  $G_2 \geq (h \in \Lform X \mapsto \mathcal G_2 (F \in UX \mapsto F
  (h)))$ for some $\mathcal G_2 \in \mathcal Q$, and therefore
  $a G_1 + (1-a) G_2 \geq (h \in \Lform X \mapsto \mathcal G (F \in UX
  \mapsto F (h)))$, where
  $\mathcal G \eqdef a \mathcal G_1 + (1-a) \mathcal G_2 \in \mathcal
  Q$.
\end{proof}

\subsection{The weak distributive law $\lambda^\DN$}
\label{sec:weak-distr-law}

We elucidate the corresponding weak distributive law.
\begin{proposition}
  \label{prop:DN:wdistr}
  The weak distributive law associated with the distributing
  retraction of Theorem~\ref{thm:AN:STU} is given at each object $X$
  of $\Topcat$ by:
  \begin{align}
    \nonumber
    \lambda^\DN_X & \colon \Val_\bullet \SV X \to \SV \Val_\bullet
                       X \\
    \label{lambda:sharp:h}
                     & \quad \xi \mapsto \{\nu \in \Val_\bullet X \mid \forall
                       h \in \Lform X, \int_{x \in X} h (x) \,d\nu \geq \int_{Q \in \SV X}
                       \min_{x \in Q} h (x) \,d\xi\} \\
    \label{lambda:sharp:U}
                     & \qquad\quad =
                       \{\nu \in \Val_\bullet X \mid \nu (U) \geq \xi (\Box
                       U) \text{ for every open subset $U$ of $X$}\}.
  \end{align}
\end{proposition}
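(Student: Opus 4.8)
The plan is to invoke Theorem~\ref{thm:STU:wdistr}: the weak distributive law attached to a distributing retraction $r, s$ of $ST$ onto $U$ is $\lambda = s \circ \mu^U \circ ji$, with $i \eqdef r \circ S \eta^T$ and $j \eqdef r \circ \eta^S T$. Here $S = \SV$ and $U = \Pred_\DN^\bullet$, while $T$ plays the role of $\Val_\bullet$, which by Lemma~\ref{lemma:V=PN} we identify with $\Pred_\Nature^\bullet$ via the monad isomorphism $f$; at the end we conjugate by $f$ and $\SV f$ to turn the resulting transformation $\Pred_\Nature^\bullet \SV \to \SV \Pred_\Nature^\bullet$ into one of type $\Val_\bullet \SV \to \SV \Val_\bullet$. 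The proof of Theorem~\ref{thm:DN:STU} already records the ingredients: $j_Y$ is the inclusion of $TY$ into $UY$, $i_Y(Q) = (h \in \Lform Y \mapsto \min_{x \in Q} h (x))$, $\mu^\DN_X(\mathcal F)(h) = \mathcal F(F \mapsto F(h))$, and $s_\DN^\bullet(F) = \{G \in \Pred_\Nature^\bullet X \mid G \geq F\}$.

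First I would compute $\mu^U \circ ji$ at $X$. Writing $ji = jU \circ Ti$, and using the action of $\Pred_\Nature^\bullet$ on the morphism $i_X \colon \SV X \to \Pred_\DN^\bullet X$, a direct unfolding shows that, for $\xi \in \Val_\bullet \SV X$ identified with the linear prevision $k \mapsto \int_{Q \in \SV X} k(Q) \, d\xi$, the element $(\mu^U \circ ji)_X(\xi)$ is the superlinear prevision $F_\xi$ with $F_\xi(h) = \int_{Q \in \SV X} \min_{x \in Q} h (x) \, d\xi$ for $h \in \Lform X$. (The integrand $Q \mapsto \min_{x \in Q} h (x)$ is lower semicontinuous on $\SV X$, since $\{Q \mid \min_{x \in Q} h (x) > t\} = \Box(h^{-1}(]t, \infty]))$ is upper-Vietoris-open, so this integral makes sense.) Applying $s_\DN^\bullet$ and transporting back along $f$ then produces exactly the set displayed in (\ref{lambda:sharp:h}).

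It remains to identify (\ref{lambda:sharp:h}) with (\ref{lambda:sharp:U}). For $\subseteq$, specialize $h$ to the characteristic map $\chi_U$ of an open set $U$: then $\int_{x \in X} \chi_U (x) \, d\nu = \nu(U)$, while $\min_{x \in Q} \chi_U (x)$ equals $1$ exactly when $Q \in \Box U$, so $\int_{Q \in \SV X} \min_{x \in Q} \chi_U (x) \, d\xi = \xi(\Box U)$. For $\supseteq$, apply the Choquet formula to both integrals to write each as $\int_0^\infty (\cdots) \, dt$; the point is that for each $t$ one has $\{Q \in \SV X \mid \min_{x \in Q} h (x) > t\} = \Box(h^{-1}(]t, \infty]))$ --- because a lower semicontinuous function attains its minimum on a non-empty compact set, and $h^{-1}(]t, \infty])$ is open --- so the hypothesis $\nu(h^{-1}(]t, \infty])) \geq \xi(\Box(h^{-1}(]t, \infty])))$ yields the required inequality of integrands pointwise in $t$, and integrating over $t$ concludes.

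The only bookkeeping hazard --- and it is minor --- is keeping track of the identification $\Val_\bullet \cong \Pred_\Nature^\bullet$ through the horizontal composite $ji$ and the multiplication $\mu^\DN$, together with the small continuity check that makes the integral $\int_{Q \in \SV X} \min_{x \in Q} h (x) \, d\xi$ legitimate. Everything else is a mechanical substitution of the formulas already established in the proof of Theorem~\ref{thm:DN:STU}.
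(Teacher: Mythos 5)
Your proposal is correct and follows essentially the same route as the paper: unfold $\lambda = s \circ \mu^U \circ ji$ using the descriptions of $i$, $j$, $\mu^\DN$ and $s_\DN^\bullet$ from the proof of Theorem~\ref{thm:DN:STU}, check that $Q \mapsto \min_{x \in Q} h(x)$ is lower semicontinuous via $\Box h^{-1}(]t,\infty])$, and then pass between (\ref{lambda:sharp:h}) and (\ref{lambda:sharp:U}) by specializing to characteristic maps in one direction and by the Choquet formula in the other. The only (immaterial) difference is that you expand $ji$ as $jU \circ Ti$ where the paper uses $Ui \circ jS$.
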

\begin{proof}
  As in Theorem~\ref{thm:DN:STU}, we replace $\Val_\bullet$ by the
  isomorphic monad $\Pred_\Nature^\bullet$.  Relying on
  (\ref{eq:lambda}),
  $\lambda^\DN_X = s_\DN^\bullet \circ \mu^\DN_X \circ (ji)_X$,
  where $i$ and $j$ are computed as in the proof of
  Theorem~\ref{thm:DN:STU}, namely $j_Y$ is the inclusion of
  $\Pred_\Nature^\bullet Y$ into $\Pred_\DN^\bullet Y$ and $i_Y$ maps
  every $Q \in \SV Y$ to
  $(h \in \Lform Y \mapsto \min_{x \in Q} h (x))$, for every space
  $Y$.  For every $\mathcal G \in \Pred_\Nature^\bullet \SV X$, for
  every $H \in \Lform \Pred_\DN^\bullet X$,
  $(ji)_X (\mathcal G) (H) = \Pred_\DN^\bullet i (j_{\SV X} (\mathcal
  G)) (H) = j_{\SV X} (\mathcal G) (H \circ i) = \mathcal G (H \circ
  i)$.  Therefore, for every $h \in \Lform X$,
  $(\mu^\DN_X \circ (ji)_X) (\mathcal G) (h) = (ji)_X (\mathcal G) (F
  \mapsto F (h)) = \mathcal G (Q \mapsto i (Q) (h)) = \mathcal G (Q
  \mapsto \min_{x \in Q} h (x))$.  Hence
  $\lambda^\DN_X (\mathcal G)$ is the collection of
  $G \in \Pred_\Nature^\bullet X$ such that
  $G (h) \geq \mathcal G (Q \mapsto \min_{x \in Q} h (x))$ for every
  $h \in \Lform X$.  Through the isomorphism
  $\Val_\bullet \cong \Pred_\Nature^\bullet$, for every
  $\xi \in \Val_\bullet \SV X$,
  $\lambda^\DN_X (\xi) = \{\nu \in \Val_\bullet X \mid \forall h
  \in \Lform X, \int_{x \in X} h (x) \,d\nu \geq \int_{Q \in \SV X}
  \min_{x \in Q} h (x) \,d\xi\}$.

  Let us show (\ref{lambda:sharp:U}).  For every
  $\nu \in \lambda^\DN_X (\xi)$, by taking the characteristic map
  $\chi_U$ of an arbitrary open subset $U$ of $X$ (that is indeed in
  $\Lform X$), $\int_{x \in X} h (x) \,d\nu = \nu (U)$, while
  $\int_{Q \in \SV X} \min_{x \in Q} h (x) \,d\xi = \int_{Q \in \SV X}
  \chi_{\Box U} (Q) \,d\xi = \xi (\Box U)$, so
  $\nu (U) \geq \xi (\Box U)$ for every open subset $U$ of $X$.
  Conversely, we assume that $\nu (U) \geq \xi (\Box U)$ for every
  open subset $U$ of $X$.  For every $h \in \Lform X$, let
  $h_* \colon \SV X \to \creal$ map $Q$ to $\min_{x \in Q} h (x)$.
  Then $h_* \in \Lform {\SV X}$, something that we have left implicit
  until now: the min is attained because $Q$ is compact and non-empty,
  and for every $t \in \Rp$,
  $h_*^{-1} (]t, \infty]) = \Box h^{-1} (]t, \infty])$.  Using the
  Choquet formula,
  $\int_{Q \in \SV X} h_* (Q) \,d\xi = \int_0^\infty \xi (h_*^{-1}
  (]t, \infty]))\, dt = \int_0^\infty \xi (\Box h^{-1} (]t,
  \infty]))\, dt \leq \int_0^\infty \nu (h^{-1} (]t, \infty]))\, dt =
  \int_{x \in X} h (x) \,d\nu$.  Since $h$ is arbitrary,
  $\nu \in \lambda^\DN_X (\xi)$.
\end{proof}

\begin{remark}
  \label{rem:DN:wdistr} 
  When $\xi$ is a finite linear combination
  $\sum_{i=1}^n a_i \delta_{Q_i}$, where furthermore each $Q_i$ is the
  upward closure of a non-empty finite set
  $E_i \eqdef \{x_{ij} \mid 1\leq j\leq m_i\}$, then we claim that:
  \begin{align}
    \label{lambda:sharp:fin}
    \lambda^\DN_X (\xi)
    &= \upc \conv \left\{\sum_{i=1}^n a_i \delta_{x_{i
      f(i)}} \mid f \in \prod_{i=1}^n \{1, \cdots, m_i\}\right\},
  \end{align}
  where $\conv$ denotes \emph{convex hull}, namely the smallest convex
  superset of its argument (equivalently, $\conv E$ is the collection
  of all \emph{convex combinations} $\sum_{k=1}^p b_k x_k$ where
  $p \geq 1$, $b_k \in \Rp$, $x_k \in E$ and $\sum_{k=1}^p b_k=1$),
  and $\prod_{i=1}^n \{1, \cdots, m_i\}$ is the set of choice
  functions, mapping each index $i \in \{1, \cdots, n\}$ to an index
  $j \in \{1, \cdots, m_i\}$.  In order to see this, we use
  Proposition~\ref{prop:DN:wdistr}: $\nu \in \lambda^\DN_X (\xi)$ if
  and only if its integration functional
  $G \eqdef \int_{x \in X} \_ (x) \,d\nu$ is such that $G (h)$ is
  larger than or equal to
  $\int_{Q \in \SV X} \min_{x \in Q} h (x) \,d\xi = \sum_{i=1}^n a_i
  \min_{j=1}^{m_i} h (x_{ij})$ for every $h \in \Lform X$.  Now
  $\sum_{i=1}^n a_i \min_{j=1}^{m_i} h (x_{ij}) = \min_{f \in
    \prod_{i=1}^n \{1, \cdots, m_i\}} \sum_{i=1}^n a_i h (x_{i f (i)})
  = r_\DN (\mathcal E)$, where $\mathcal E$ is the collection of
  (integration functionals of) the continuous valuations
  $\sum_{i=1}^n a_i \delta_{x_{i f(i)}}$, and where $f$ ranges over
  $\prod_{i=1}^n \{1, \cdots, m_i\}$.  Hence
  $\lambda^\DN_X (\xi) = s_\DN^\bullet (r_\DN (\mathcal E))$.  By
  \citep[Proposition~4.20]{JGL-mscs16}, $s_\DN^\bullet \circ r_\DN$
  maps every compact saturated subset $Q$ of $\Val_\bullet X$ to its
  compact saturated convex hull, namely to the smallest compact
  saturated convex set that contains it; when $Q$ is finite, this
  coincides with $\upc \conv Q$
  \citep[Lemma~4.10~(c)]{Keimel:topcones2}, so
  $\lambda^\DN_X (\xi) = \upc \conv \mathcal E$.
\end{remark}

\subsection{The algebras of $\Pred_\DN^\bullet$}
\label{sec:algebr-pred_dnb}

Given two monads $S$ and $T$ on the same category $\catc$, a weak
distributive law $\lambda$ of $S$ over $T$, the algebras
$\gamma \colon UX \to X$ of the combined monad $U$ on a given object
$X$ of $\catc$ are in one-to-one correspondence with the pairs of
algebras $(\alpha, \beta)$ of an $S$-algebra $\alpha \colon SX \to X$
and a $T$-algebra $\beta \colon TX \to X$ such that the following
diagram commutes.
\begin{align}
  \label{eq:lambda:distr}
  \xymatrix{
  TSX \ar[r]^{\lambda_X}
  \ar[d]_{T\alpha}
  & STX \ar[r]^{S\beta} & SX \ar[d]^{\alpha}
  \\
  TX \ar[rr]_{\beta} && X
                        }
\end{align}
In one direction, $\gamma$ is obtained as
$\alpha \circ S \beta \circ s_X$, where $r, s$ form a splitting of the
idempotent (\ref{eq:e}).  See \citep[Lemma~14]{Garner:weak:distr} or
\citep[Proposition~3.7]{Bohm:weak:monads}.  Additionally, the morphisms
of $U$-algebras are exactly the morphisms in $\catc$ that are both
$S$-algebra and $T$-algebra morphisms.

The algebras of the $\SV$ monad were elucidated by Schalk on the
category $\Topcat_0$ of $T_0$ topological spaces
\citep[Proposition~7.23]{Schalk:diss}: for every $T_0$ topological space
$X$, there is a $\SV$-algebra structure on $X$ if and only if every
non-empty compact saturated subset $Q$ of $X$ has an greatest lower
bound $\bigwedge Q$ with respect to the specialization ordering and
the map $\bigwedge \colon \SV X \to X$ is continuous.  Then
$\bigwedge$ is itself the unique $\SV$-algebra on $X$.  Additionally,
the morphisms of $\SV$-algebras are those continuous maps $f$ that
preserve such infima, in the sense that
$\bigwedge f [Q] = f (\bigwedge Q)$ for every non-empty compact
saturated set $Q$.

Putting up together everything we know, we therefore obtain the following.
\begin{theorem}
  \label{thm:DN:alg}
  For every $T_0$ topological space $X$, there is a structure of a
  $\Pred_\DN^\bullet$-algebra on $X$ if and only if every non-empty
  compact saturated subset $Q$ of $X$ has a greatest lower bound
  $\bigwedge Q$ and $\bigwedge \colon \SV X \to X$ is continuous, and
  there is a $\Val_\bullet$-algebra structure
  $\beta \colon \Val_\bullet X \to X$ such that
  $\beta \circ \Val_\bullet \bigwedge = \bigwedge \circ \SV \beta
  \circ \lambda^\DN_X$.

  If $\bigwedge$ exists and is continuous, then there is a one-to-one
  correspondence between $\Val_\bullet$-algebra structures $\beta$ on
  $X$ such that
  $\beta \circ \Val_\bullet \bigwedge = \bigwedge \circ \SV \beta
  \circ \lambda^\DN_X$ and $\Pred_\DN^\bullet$-algebra structures
  $\gamma$ on $X$, given by
  $\gamma \eqdef \bigwedge \circ \SV \beta \circ s_\DN^\bullet$.

  The morphisms $f$ of $\Pred_\DN^\bullet$-algebras are the morphisms
  of underlying $\Val_\bullet$-algebras that preserve infima of
  non-empty compact saturated subsets $Q$.
\end{theorem}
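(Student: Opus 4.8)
The plan is to assemble three ingredients: the general description, recalled around diagram~(\ref{eq:lambda:distr}), of the algebras of a combined monad in terms of a weak distributive law; Schalk's characterization of $\SV$-algebras on $T_0$ spaces \cite[Proposition~7.23]{Schalk:diss}; and the identification of $\Pred_\DN^\bullet$ as the combined monad of the weak distributive law $\lambda^\DN$ of $\SV$ over $\Val_\bullet$. Concretely, I would first transport everything along the monad isomorphism $\Val_\bullet \cong \Pred_\Nature^\bullet$ of Lemma~\ref{lemma:V=PN}, so that a $\Val_\bullet$-algebra is literally a $\Pred_\Nature^\bullet$-algebra and $\lambda^\DN_X$ is the transformation computed in Proposition~\ref{prop:DN:wdistr}. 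By Theorem~\ref{thm:DN:STU} the pair $r_\DN, s_\DN^\bullet$ is a distributing retraction of $\SV\Pred_\Nature^\bullet$ onto $\Pred_\DN^\bullet$; by Theorem~\ref{thm:STU:wdistr} it is therefore a splitting of the idempotent~(\ref{eq:e}) attached to $\lambda = \lambda^\DN$, and the combined monad it determines is exactly $(\Pred_\DN^\bullet, \eta^\DN, \mu^\DN)$, with the role of $s_X$ in the general formula for $\gamma$ played by $s_\DN^\bullet$.

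Second, I would feed this into the general correspondence: $\Pred_\DN^\bullet$-algebra structures $\gamma$ on $X$ are in bijection with pairs $(\alpha, \beta)$ consisting of a $\SV$-algebra $\alpha \colon \SV X \to X$ and a $\Val_\bullet$-algebra $\beta \colon \Val_\bullet X \to X$ for which the square~(\ref{eq:lambda:distr}) commutes, the bijection sending $(\alpha, \beta)$ to $\alpha \circ \SV\beta \circ s_\DN^\bullet$. Now invoke Schalk: when $X$ is $T_0$, a $\SV$-algebra on $X$ exists at all iff every non-empty compact saturated $Q$ has a greatest lower bound $\bigwedge Q$ and $\bigwedge \colon \SV X \to X$ is continuous, and in that case $\alpha = \bigwedge$ is the only one. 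Substituting $\alpha = \bigwedge$, the commuting square~(\ref{eq:lambda:distr}) unwinds to the identity $\bigwedge \circ \SV\beta \circ \lambda^\DN_X = \beta \circ \Val_\bullet\bigwedge$, and the formula for $\gamma$ becomes $\bigwedge \circ \SV\beta \circ s_\DN^\bullet$. This yields both assertions: a $\Pred_\DN^\bullet$-algebra exists iff $\bigwedge$ exists, is continuous, and admits a compatible $\beta$; and once $\bigwedge$ is available, the pairs $(\bigwedge, \beta)$ with $\beta$ compatible are just the compatible $\beta$'s, so the bijection descends to the stated one between such $\beta$ and $\Pred_\DN^\bullet$-algebra structures $\gamma = \bigwedge \circ \SV\beta \circ s_\DN^\bullet$.

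Third, for morphisms the general theory says that the morphisms of combined-monad algebras are precisely the morphisms of the ambient category that are simultaneously $\SV$-algebra morphisms and $\Val_\bullet$-algebra morphisms. Schalk's characterization identifies the $\SV$-algebra morphisms with the continuous maps $f$ satisfying $\bigwedge f[Q] = f(\bigwedge Q)$ for every non-empty compact saturated $Q$, i.e. those preserving infima of non-empty compact saturated subsets. Intersecting, the $\Pred_\DN^\bullet$-algebra morphisms are exactly the $\Val_\bullet$-algebra morphisms that preserve such infima, which is the last claim.

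The proof is essentially bookkeeping once these pieces are in place; the one point that genuinely needs care is the hinge in the first paragraph, namely confirming that the retraction $r_\DN, s_\DN^\bullet$ of Theorem~\ref{thm:DN:STU} really is the splitting of the idempotent~(\ref{eq:e}) used in the general algebra correspondence, and that the monad split off by it coincides with $\Pred_\DN^\bullet$ equipped with its given $\eta^\DN$, $\mu^\DN$ and not merely with an isomorphic copy -- both of which are delivered by the correspondence of Theorem~\ref{thm:STU:wdistr} (passing from a distributing retraction to a weak distributive law together with a splitting) and the uniqueness of $\eta^U, \mu^U$ in Proposition~\ref{prop:STU:unique}. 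A secondary thing to watch is that the isomorphism $\Val_\bullet \cong \Pred_\Nature^\bullet$ is carried through consistently, so that the compatibility square is really the one written with $\lambda^\DN_X$ as in Proposition~\ref{prop:DN:wdistr} and with genuine $\Val_\bullet$-algebras $\beta$.
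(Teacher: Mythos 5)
Your proposal is correct and follows essentially the same route as the paper, which states Theorem~\ref{thm:DN:alg} as a direct assembly of the general algebra correspondence for combined monads (diagram~(\ref{eq:lambda:distr})), Schalk's characterization of $\SV$-algebras on $T_0$ spaces, and the identification of $\Pred_\DN^\bullet$ as the combined monad via Theorems~\ref{thm:DN:STU} and~\ref{thm:STU:wdistr}. Your attention to the hinge point --- that $r_\DN, s_\DN^\bullet$ splits the idempotent~(\ref{eq:e}) and that the split-off monad is $\Pred_\DN^\bullet$ with its given unit and multiplication, as guaranteed by Proposition~\ref{prop:STU:unique} --- is exactly the justification the paper leaves implicit in ``putting up together everything we know.''
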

There is no definitive characterization of $\Val_\bullet$-algebras on
$\Topcat$.  When $\bullet$ is nothing, one of the best known results
can be found in \citep{GLJ:bary}.  A structure of
$\Val_\bullet$-algebra on $X$ implies a structure of a (topological,
sober, weakly locally convex) \emph{cone} on $X$.  (See
\citet{Keimel:topcones2} for an introduction to semitopological and
topological cones.  Briefly, a cone is a set with an addition
operation $+$ and scalar multiplication by non-negative real numbers
that obey the laws of vector spaces that can be expressed without
minus signs.)  More precise characterizations of
$\Val_\bullet$-algebras were obtained in subcategories of topological
spaces in \citep{GLJ:Valg}.  The $\SV$-algebras on subcategories of
locally compact sober spaces are the continuous characterized by
Schalk as the continuous inf-semilattices
\citep[Proposition~7.28]{Schalk:diss}, namely the continuous dcpos that
have a Scott-continuous binary infimum operation $\wedge$, with their
Scott topology.

On a continuous dcpo $X$, every continuous valuation is
point-continuous in the sense of Reinhold Heckmann
\citep[Theorem~4.1]{heckmann96}, so the monad $\Val$ coincides with
the monad $\Val_p$ of point-continuous valuations on the category
$\Cont$ of continuous dcpo and (Scott-)continuous maps.  The
$\Val_p$-algebras $\beta \colon \Val_p X \to X$ are the weakly locally
convex sober topological cones $X$, and $\beta$ is uniquely determined
by the formula
$\beta (\sum_{i=1}^n a_i \delta_{x_i}) \eqdef \sum_{i=1}^n a_i x_i$
\citep[Theorem~5.11]{GLJ:bary}; the morphisms of $\Val_p$-algebras are
the linear continuous maps \citep[Theorem~5.12]{GLJ:bary}.  By a
theorem of Claire Jones
\citep[Corollary~5.4]{Jones:proba}, the \emph{simple valuations}
$\sum_{i=1}^n \delta_{x_i}$ form a basis of the dcpo $\Val X$
($=\Val_p X$), which implies that continuous maps from $\Val X$ to any
space are uniquely determined by their values on simple valuations.

Using Remark~\ref{rem:DN:wdistr}, we obtain the following.  A
\emph{d-cone} is a cone with a dcpo structure that makes addition and
scalar multiplication (Scott-)continuous.  A continuous d-cone is one
whose underlying dcpo is continuous.  Every continuous d-cone is
topological \citep[Corollary 6.9~(c)]{Keimel:topcones2}, sober
\citep[Proposition 8.2.12 $(b)$]{JGL-topology}, and locally convex
\citep[Lemma 6.12]{Keimel:topcones2} hence weakly locally convex
\citep[Proposition 3.10]{GLJ:Valg}.  A \emph{continuous inf-semilattice
  d-cone} is a continuous d-cone that is also a continuous
inf-semilattice.
\begin{corollary}
  \label{corl:DN:alg:lcsober}
  The $\Pred_\DN$-algebras in the category of locally compact sober
  spaces are the continuous inf-semilattice d-cones in which scalar
  multiplication and addition distribute over binary infima.  The maps
  of $\Pred_\DN$-algebras are the linear, inf-preserving
  Scott-continuous maps.
\end{corollary}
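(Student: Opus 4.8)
The plan is to combine Theorem~\ref{thm:DN:alg} with the descriptions of $\SV$-algebras and $\Val$-algebras recalled above, and to unwind the compatibility condition $\beta \circ \Val \bigwedge = \bigwedge \circ \SV \beta \circ \lambda^\DN_X$ of Theorem~\ref{thm:DN:alg} (for $\bullet$ nothing) into the two distributivity laws. First I would observe that, for a locally compact sober space $X$, carrying a $\Pred_\DN$-algebra is equivalent to the conjunction of three conditions: (a) $X$ is a continuous inf-semilattice, with $\bigwedge \colon \SV X \to X$ its unique $\SV$-algebra structure --- by Theorem~\ref{thm:DN:alg} the map $\bigwedge$ must exist and be continuous, and \cite[Proposition~7.28]{Schalk:diss} identifies such spaces with the continuous inf-semilattices, so in particular $X$ then carries its Scott topology and is a continuous dcpo; (b) $X$ is a continuous d-cone, with $\beta \colon \Val X \to X$ the barycenter map $\sum_i a_i \delta_{x_i} \mapsto \sum_i a_i x_i$ --- since $X$ is a continuous dcpo, $\Val X = \Val_p X$ by \cite[Theorem~4.1]{heckmann96}, and a $\Val_p$-algebra structure on a continuous dcpo amounts to a continuous d-cone structure, given by this $\beta$, by \cite[Theorem~5.11]{GLJ:bary} and the facts recalled above; and (c) the square of Theorem~\ref{thm:DN:alg} commutes. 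Conversely, a continuous inf-semilattice d-cone is locally compact (being a continuous dcpo) and sober (being a continuous d-cone), and carries both structures $\bigwedge$ and $\beta$; so the statement reduces to proving that (c) holds if and only if scalar multiplication and addition distribute over binary infima, together with the matching description of morphisms.

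For the equivalence in (c), I would first reduce to a finitary identity. In a continuous inf-semilattice $X$, the finitely generated compact saturated sets $\upc E$, with $E$ a finite nonempty subset, form a basis of $\SV X$: given $Q \in \SV X$ contained in an open set $U$, for each $x \in Q$ choose $y \ll x$ with $\upc y \subseteq U$, note that $x$ lies in the interior of $\upc y$, and extract a finite subcover to obtain a finite $E$ with $Q \subseteq \mathrm{int}(\upc E) \subseteq \upc E \subseteq U$; hence $\upc E$ is way below $Q$, and these sets form a directed family with intersection $Q$. Consequently every $Q \in \SV X$ is a directed supremum of such $\upc E$'s, so by Jones's theorem \cite[Corollary~5.4]{Jones:proba} every simple valuation on $\SV X$ is a directed supremum of simple valuations of the form $\xi = \sum_{i=1}^n a_i \delta_{\upc E_i}$ with $E_i = \{x_{ij} \mid 1 \le j \le m_i\}$, and, being determined by their values on a directed-sup-dense set, continuous maps out of $\Val \SV X$ are determined by their values on such $\xi$. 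It then suffices to compute both sides of the square at such a $\xi$. On the left, $\Val \bigwedge$ pushes $\xi$ forward to $\sum_i a_i \delta_{\bigwedge E_i}$, and $\beta$ sends this to $\sum_i a_i \bigwedge_j x_{ij}$. On the right, Remark~\ref{rem:DN:wdistr} gives $\lambda^\DN_X(\xi) = \upc \conv\{\sum_i a_i \delta_{x_{if(i)}} \mid f \in \prod_{i=1}^n \{1,\dots,m_i\}\}$; since $\beta$ is linear, $\SV \beta$ turns this into $\upc \conv\{\sum_i a_i x_{if(i)} \mid f\}$, and applying $\bigwedge$ --- together with the fact that $\bigwedge \conv S = \bigwedge S$ for a finite nonempty subset $S$ of a d-cone, because $\sum_k b_k(\bigwedge S) = \bigwedge S \le \sum_k b_k s_k$ by the cone axioms and monotonicity of $+$ and $\cdot$ --- yields $\bigwedge_{f \in \prod_i \{1,\dots,m_i\}} \sum_i a_i x_{if(i)}$. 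So (c) is equivalent to the family of identities $\sum_i a_i \bigwedge_j x_{ij} = \bigwedge_{f} \sum_i a_i x_{if(i)}$, ranging over all such finite data.

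This family of identities is in turn equivalent to the two distributivity laws: taking $n=1$ and $m_1=2$ gives $a(x\wedge y) = ax \wedge ay$, and taking $n=2$, $m_1=2$, $m_2=1$ gives $(x\wedge y)+z = (x+z)\wedge(y+z)$; conversely, granting both laws, the general identity follows by first distributing each scalar $a_i$ over its inner infimum $\bigwedge_j x_{ij}$ and then distributing the outer sum over the resulting infima, both by routine induction from the binary case. For the morphisms, Theorem~\ref{thm:DN:alg} identifies the $\Pred_\DN$-algebra morphisms with the $\Val$-algebra morphisms preserving infima of nonempty compact saturated sets; the former are the linear continuous maps by \cite[Theorem~5.12]{GLJ:bary}, i.e.\ the linear Scott-continuous maps between the underlying continuous dcpos. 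Finally, a Scott-continuous map $f$ between continuous inf-semilattices that preserves binary infima automatically preserves $\bigwedge Q$ for every $Q \in \SV X$: writing $Q$ as the directed supremum in $\SV X$ of the $\upc E \ll Q$, one has $\bigwedge Q = \sup^\uparrow \bigwedge E$ and $\bigwedge f[Q] = \bigwedge(\SV f(Q)) = \sup^\uparrow \bigwedge f[E]$ by Scott-continuity of $\bigwedge$ and $\SV f$, while $f(\bigwedge Q) = \sup^\uparrow f(\bigwedge E) = \sup^\uparrow \bigwedge f[E]$ since $f$ is Scott-continuous and preserves finite infima. Hence the $\Pred_\DN$-algebra morphisms are exactly the linear, inf-preserving, Scott-continuous maps.

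I expect the main obstacle to be the density reduction in the second paragraph: one must verify carefully that the finitely generated compact saturated sets form a basis of $\SV X$ and combine this with Jones's basis theorem to bring the a priori infinitary commutation of the square into the finitary form covered by Remark~\ref{rem:DN:wdistr}; once that is in place, the cone-theoretic manipulations ($\bigwedge \conv S = \bigwedge S$, and distributing sums and scalars over finite infima) are routine.
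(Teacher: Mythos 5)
Your proposal is correct and follows essentially the same route as the paper: reduce the commuting square to simple valuations $\sum_i a_i \delta_{\upc E_i}$ with $E_i$ finite, evaluate both sides via Remark~\ref{rem:DN:wdistr} using $\SV\beta(\upc\conv\mathcal E)=\upc\conv\beta[\mathcal E]$ and $\bigwedge\upc\conv E=\bigwedge E$, and identify the resulting family of identities with the two distributivity laws. You merely spell out two steps the paper leaves implicit --- the density argument showing that finitely generated compact saturated sets suffice, and the equivalence between preserving binary infima and preserving infima of arbitrary non-empty compact saturated sets for the morphism statement --- both of which are correct.
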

\begin{proof}
  The condition
  $\beta \circ \Val_\bullet \bigwedge = \bigwedge \circ \SV \beta
  \circ \lambda^\DN_X$ reduces to the fact that both sides coincide on
  simple valuations $\xi \eqdef \sum_{i=1}^n a_i \delta_{Q_i}$, where
  each $Q_i$ is the upward closure of a non-empty finite set
  $E_i \eqdef \{x_{ij} \mid 1\leq j\leq m_i\}$.  Then
  $\beta (\Val_\bullet \bigwedge (\xi)) = \beta (\sum_{i=1}^n a_i
  \delta_{\bigwedge_{j=1}^{m_i} x_{ij}}) = \sum_{i=1}^n a_i
  \bigwedge_{j=1}^{m_i} x_{ij}$, while:
  \begin{align*}
    \bigwedge (\SV \beta (\lambda^\DN_X (\xi)))
    & = \bigwedge \left(\SV \beta \left(\upc \conv \left\{\sum_{i=1}^n a_i \delta_{x_{i
      f(i)}} \mid f \in \prod_{i=1}^n \{1, \cdots,
      m_i\}\right\}\right)\right),
  \end{align*}
  by Remark~\ref{rem:DN:wdistr}.  For every finite set $\mathcal E$ of
  simple valuations, we realize that
  $\SV \beta (\upc \conv \mathcal E) = \upc \conv \beta [\mathcal E]$.
  Indeed, an element $y$ of the left-hand side is larger than or equal
  to $\beta (x)$ for some $x \in \upc \conv \mathcal E$, hence larger
  than or equal to the image of some convex combination of elements of
  $\mathcal E$.  But $\beta$ preserves convex combinations of simple
  valuations, so $y$ is larger than or equal to a convex combination
  of elements of $\beta [\mathcal E]$.  Conversely, every convex
  combination of elements of $\beta [\mathcal E]$ is the image under
  $\beta$ of a convex combination of elements of $\mathcal E$, so
  $\conv \beta [\mathcal E] \subseteq \SV \beta (\upc \conv \mathcal
  E)$, and therefore
  $\upc \conv \beta [\mathcal E] \subseteq \SV \beta (\upc \conv
  \mathcal E)$.

  Additionally, for every non-empty finite subset $E$ of points,
  $\bigwedge \upc \conv E = \bigwedge E$.  We have
  $\bigwedge \upc \conv E \leq \bigwedge E$ since
  $E \subseteq \upc \conv E$.  Conversely, let $x \eqdef \bigwedge E$.
  For every convex combination $\sum_{k=1}^p b_k x_k$ of elements of
  $E$, $x \leq x_k$ for every $k$, so
  $x \leq 1 . x = (\sum_{k=1}^p b_k) x = \sum_{k=1}^p b_k x \leq
  \sum_{k=1}^p b_k x_k$.  Hence
  $x \leq \bigwedge \conv E = \bigwedge \upc \conv E$.

  It follows that:
  \begin{align*}
    \bigwedge (\SV \beta (\lambda^\DN_X (\xi)))
    & = \bigwedge \beta \left[\left\{\sum_{i=1}^n a_i \delta_{x_{i
      f(i)}} \mid f \in \prod_{i=1}^n \{1, \cdots,
      m_i\}\right\}\right] \\
    & = \bigwedge_{f \in \prod_{i=1}^n \{1, \cdots, m_i\}}
      \sum_{i=1}^n a_i x_{i f (i)}.
  \end{align*}
  Hence the condition
  $\beta \circ \Val_\bullet \bigwedge = \bigwedge \circ \SV \beta
  \circ \lambda^\DN_X$ reduces to:
  \begin{align*}
    \sum_{i=1}^n a_i \bigwedge_{j=1}^{m_i} x_{ij}
    & = \bigwedge_{f \in \prod_{i=1}^n \{1, \cdots, m_i\}} \sum_{i=1}^n a_i x_{i f (i)}    
  \end{align*}
  for every
  $\xi \eqdef \sum_{i=1}^n a_i \delta_{\upc \{x_{ij} \mid 1\leq j\leq
    m_i\}}$.  When $n=1$, this states that scalar multiplication
  distributes over non-empty finite infima
  ($a_1 \bigwedge_{j=1}^{m_1} x_{1j} = \bigwedge_{j=1}^{m_1} a_1
  x_{1j}$), or equivalently over binary infima.  When $a_i=1$ for
  every $i$, this states that non-empty finite sums distributes over
  non-empty finite infima, or equivalently, addition distributes over
  binary infima.  Conversely, it is easy to see that if both scalar
  multiplication and addition distribute over non-empty finite infima,
  then
  $\sum_{i=1}^n a_i \bigwedge_{j=1}^{m_i} x_{ij} = \bigwedge_{f \in
    \prod_{i=1}^n \{1, \cdots, m_i\}} \sum_{i=1}^n a_i x_{i f (i)}$
  holds.
\end{proof}

\section{Hoare hyperspace, continuous valuations and sublinear
  previsions}
\label{sec:hoare-hypersp-cont}

Just as $\SV$ is a model of demonic non-determinism, the Hoare
powerdomain is a model of \emph{angelic} non-determinism, and is
obtained as follows.  For every topological space $X$, let $\Hoare X$
be the set of non-empty closed subsets of $X$.  The \emph{lower
  Vietoris} topology on that set has subbasic open subsets $\Diamond
U$ consisting of those non-empty closed subsets that intersect $U$,
for each open subset $U$ of $X$.  We write $\HV X$ for the resulting
topological space.  Its specialization ordering is (ordinary)
inclusion $\subseteq$.

The $\HV$ construction was studied by 
\citet[Section~6]{Schalk:diss}, as well as a localic variant and a
variant with the Scott topology.  See also \citep[Sections~6.2.2,
6.2.3]{AJ:domains} or \citep[Section~IV-8]{GHKLMS:contlatt}.

There is a monad $(\HV, \eta^\Hoare, \mu^\Hoare)$ on $\Topcat$.  For
every continuous map $f \colon X \to Y$, $\HV f$ maps every
$C \in \HV X$ to $cl (f [C])$.  The unit $\eta^\Hoare$ is defined by
$\eta^\Hoare_X (x) \eqdef \dc x$ for every $x \in X$, where $\dc$
denotes downward closure (equivalently, $cl (\{x\})$, where $cl$ is
topological closure).  For every continuous map
$f \colon X \to \HV Y$, there is an extension
$f^\flat \colon \HV X \to \HV Y$, defined by
$f^\flat (C) \eqdef cl (\bigcup_{x \in C} f (x))$.  The multiplication
$\mu^\Hoare$ is defined by
$\mu^\Hoare_X \eqdef \identity {\HV X}^\flat$, namely
$\mu^\Hoare_X (\mathcal C) \eqdef cl (\bigcup \mathcal C)$.

\subsection{The distributing retraction $r_\AN$, $s_\AN^\bullet$}
\label{sec:distr-retr-r_an}

Let $\bullet$ be nothing, ``$\leq 1$'', or ``$1$''.  We will see that
there is a distributing retraction of $\HV \Val_\bullet$ over the
monad $\Pred_\AN^\bullet$ of sublinear previsions.  This is very
similar to the case of $\SV$, but the analogues of $r_\DN$ and
$s_\DN^\bullet$ are only known to enjoy the required properties on a
full subcategory of $\Topcat$ consisting of spaces that were called
\emph{$\AN_\bullet$-friendly} in \citep{JGL:mscs16:errata}.  (This
repairs some mistakes in the original paper \citep{JGL-mscs16}.  See
also \citep{JGL:mscs16:rev} for a version of the latter that
incorporates the errata.)  The definition of $\AN_\bullet$-friendly
spaces is technical, and we will omit it here.  We will merely say
that every locally compact 
space is $\AN_\bullet$-friendly.
A space $X$ is \emph{locally compact} if and only if every point has a
base of compact (equivalently, compact saturated) neighborhoods.  We
do no assume Hausdorffness or even the $T_0$ property.

For every $\AN_\bullet$-friendly space $X$, there is a function
$r_\AN \colon \HV \Pred_\Nature^\bullet X \to \Pred_{\AN}^\bullet X$,
defined by $r_\AN (C) (h) \eqdef \sup_{G \in C} \allowbreak G (h)$,
and a map
$s_\AN^\bullet \colon \Pred_{\AN}^\bullet X \to \HV
\Pred_\Nature^\bullet X$ defined by
$s_\AN^\bullet (F) \eqdef \{G \in \Val_\bullet X \mid G \leq F\}$.
Both are continuous, and
$r_\AN \circ s_\AN^\bullet = \identity {\Pred_{\AN}^\bullet X}$.  This
is the content of Corollary~3.12 of \citep{JGL-mscs16}, see also the
errata \citep[Corollary~3.12, repaired]{JGL:mscs16:errata}.

\begin{definition}
  \label{defn:Top:flat}
  Let $\Topcat^\flat$ be any full subcategory of $\Topcat$ consisting
  of $\AN_\bullet$-friendly spaces and that is closed under the three
  functors $\HV$, $\Pred_\Nature^\bullet$ and $\Pred_\AN^\bullet$.
\end{definition}
The category of locally compact spaces fits if $\bullet$ is nothing or
``$\leq 1$''.  If $\bullet$ is ``$1$'', then the category of locally
compact, compact spaces fits.  We argue as follows.  Those categories
consist of $\AN_\bullet$-friendly spaces.  For every locally compact
space $X$, $\HV X$ is locally compact, by \citep[Proposition
6.11]{Schalk:diss}, and $\Val_\bullet X$ is locally compact if
$\bullet$ is nothing or ``$\leq 1$'' (see 
\citep[Theorem~4.1]{JGL:vlc}).  When
$\bullet$ is ``$1$'', $\Val_\bullet X$ is locally compact and compact
if $X$ locally compact and compact \citep[Theorem~4.1]{JGL:vlc}, and
$\HV X$ is also locally compact, as we have seen, and compact: the
proof of \citep[Proposition 6.11]{Schalk:diss} shows in particular that
$\Diamond K \eqdef \{C \in \HV X \mid C \cap K \neq \emptyset\}$ is
compact for every compact subset $K$ of $X$, and the claim follows by
taking $K$ equal to $X$.  Finally, any retract of a locally compact
(resp., compact) space is locally compact (resp., compact); this
follows from the arguments developed in the proof of \citep[Proposition
2.17]{Jung:scs:prob}, and is elementary.  Since $\Pred_\AN X$ is a
retract of $\HV \Pred_\Nature^\bullet X$ for every
$\AN_\bullet$-friendly space through $r_\AN$ and $s^\bullet_\AN$, this
entails that the category of locally compact (resp., and compact)
spaces is also closed under $\Pred_\AN^\bullet$.

We work with $\HV$ as we did with $\SV$.  We only need the following
two auxiliary lemmas.
\begin{lemma}
  \label{lemma:sup:cl}
  For every topological space $X$, for every $h \in \Lform X$, for
  every non-empty subset $A$ of $X$,
  $\sup_{x \in cl (A)} h (x) = \sup_{x \in A} h (x)$.
\end{lemma}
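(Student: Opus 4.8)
The plan is to prove the two inequalities $\sup_{x \in A} h (x) \le \sup_{x \in cl (A)} h (x)$ and $\sup_{x \in cl (A)} h (x) \le \sup_{x \in A} h (x)$ separately. The first is immediate and uses nothing about $h$: since $A \subseteq cl (A)$, the supremum taken over the larger index set dominates the one taken over the smaller.

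For the reverse inequality I would set $r \eqdef \sup_{x \in A} h (x) \in \creal$. If $r = \infty$ there is nothing to prove, so assume $r \in \Rp$. It then suffices to show that $h (x) \le r$ for every $x \in cl (A)$. The key point is that lower semicontinuity of $h$ means precisely that the superlevel set $U \eqdef h^{-1} (]r, \infty])$ is open in $X$ (since $]r, \infty]$ is Scott-open in $\creal$ for $r \in \Rp$).

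Now I would argue by contradiction. Suppose $h (x_0) > r$ for some $x_0 \in cl (A)$. Then $x_0 \in U$, and $U$ is an open neighbourhood of a point of $cl (A)$, so $U$ must meet $A$; pick $y \in U \cap A$. But then $h (y) > r = \sup_{x \in A} h (x) \ge h (y)$, a contradiction. Hence $h (x) \le r$ for all $x \in cl (A)$, so $\sup_{x \in cl (A)} h (x) \le r = \sup_{x \in A} h (x)$, which finishes the proof.

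There is no genuine obstacle here; the only things to be slightly attentive to are the trivial case $r = \infty$ and the observation that $]r, \infty]$ is Scott-open, which is exactly the instance of lower semicontinuity one needs. The whole argument is just the definition of topological closure (every open neighbourhood of a point of $cl (A)$ meets $A$) combined with the defining property of lower semicontinuous functions (their superlevel sets $h^{-1} (]t, \infty])$ are open).
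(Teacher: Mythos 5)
Your proof is correct and follows essentially the same route as the paper: both arguments rest on the two facts that the superlevel sets $h^{-1}(]t,\infty])$ are open and that any open set meeting $cl(A)$ must meet $A$. The only cosmetic difference is that the paper argues directly with an arbitrary threshold $t < \sup_{x \in cl(A)} h(x)$, whereas you fix $r = \sup_{x \in A} h(x)$ and argue by contradiction (which also obliges you to dispose of the case $r=\infty$ separately); the substance is identical.
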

\begin{proof}
  The $\geq$ direction is clear, and in the $\leq$ direction, it
  suffices to show that for every $t < \sup_{x \in cl (A)} h (x)$,
  there is an $x \in A$ such that $t < h (x)$.  If
  $t < \sup_{x \in cl (A)} h (x)$, then $t < h (x)$ for some
  $x \in cl (A)$, namely the open set $h^{-1} (]t, \infty])$
  intersects $cl (A)$.  But any open set that intersects $cl (A)$ must
  intersect $A$.  Hence there is an $x \in A$ such that
  $x \in h^{-1} (]t, \infty])$, namely such that $t < h (x)$.
\end{proof}

\begin{lemma}
  \label{lemma:cl:conv}
  The closure of a convex subset of $\Pred_\AN^\bullet X$ \red{(resp.\
    $\Pred_\Nature^\bullet X$)} is convex.
\end{lemma}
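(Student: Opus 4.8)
The plan is to exhibit, for each $a \in [0,1]$, the convex-combination operation
$m_a \colon \Pred_\AN^\bullet X \times \Pred_\AN^\bullet X \to \Pred_\AN^\bullet X$,
$(F_1, F_2) \mapsto a F_1 + (1-a) F_2$, as a continuous map between topological spaces, and then to push closures through it. This map is well-defined into $\Pred_\AN^\bullet X$, because a pointwise convex combination of sublinear previsions is again one: positive homogeneity, Scott-continuity and sublinearity are all preserved by such combinations, as is the (sub)normalization clause when $\bullet$ is ``$1$'' or ``$\leq 1$''. For $a \in \{0,1\}$, $m_a$ is a projection and there is nothing to prove, so from now on we assume $0 < a < 1$.

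First I would check that $m_a$ is continuous. Since the topology on $\Pred_\AN^\bullet X$ is generated by the subbasic opens $[h > r]$ with $h \in \Lform X$ and $r \in \Rp$, it suffices to see that $m_a^{-1} ([h > r]) = \{(F_1, F_2) \mid a F_1 (h) + (1-a) F_2 (h) > r\}$ is open in the product topology. I would write this set as the union, over all $p \in \real$, of the rectangles $\{F_1 \mid a F_1 (h) > p\} \times \{F_2 \mid (1-a) F_2 (h) > r - p\}$: each factor is either the whole space (when the displayed bound is negative) or a subbasic open of the form $[h > \cdot]$, hence open in either case, and equality with the preimage holds because $a F_1 (h) + (1-a) F_2 (h) > r$ precisely when one can insert a real number $p$ with $a F_1 (h) > p$ and $(1-a) F_2 (h) > r - p$. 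Then I would use that $m_a (C \times C) \subseteq C$ whenever $C$ is convex, together with $cl (C \times C) = cl (C) \times cl (C)$ in the product topology and the continuity of $m_a$, to conclude
\[
  m_a (cl (C) \times cl (C)) = m_a (cl (C \times C)) \subseteq cl (m_a (C \times C)) \subseteq cl (C),
\]
which says exactly that $cl (C)$ is closed under forming $a F_1 + (1-a) F_2$ for $F_1, F_2 \in cl (C)$. Letting $a$ vary over $[0,1]$ yields the lemma.

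The only step requiring a little care is the continuity of $m_a$, and inside it the bookkeeping around the value $\infty$ when realizing $m_a^{-1} ([h > r])$ as a union of rectangles: if, say, $F_1 (h) = \infty$, one takes $p$ arbitrarily large, so that $r - p$ is negative and $(1-a) F_2 (h) \ge 0 > r - p$ holds automatically; the symmetric case is handled the same way, and when both values are finite one simply splits the slack $a F_1 (h) + (1-a) F_2 (h) - r > 0$ into two halves. A more hands-on route avoids the product space altogether — given $F_1, F_2 \in cl (C)$ and a basic open neighbourhood $\bigcap_{k=1}^n [h_k > r_k]$ of $a F_1 + (1-a) F_2$, one picks $G_1, G_2 \in C$ close enough to $F_1, F_2$ and notes $a G_1 + (1-a) G_2 \in C$ lies in that neighbourhood — but this variant has to treat separately the coordinates where $F_1 (h_k)$ or $F_2 (h_k)$ vanishes, which makes the argument via continuity of $m_a$ preferable.
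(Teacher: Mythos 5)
Your proof is correct, but it takes a genuinely different route from the paper's. The paper deliberately follows Keimel's argument for \emph{semitopological} cones: it only establishes \emph{separate} continuity of $(F_1,F_2) \mapsto aF_1 + (1-a)F_2$ (computing $f(\_,F_2)^{-1}([h>r])$ explicitly), and then runs a two-step perturbation argument --- assuming $f(F_1,F_2)$ lands in the open complement $V$ of $cl(A)$, it first replaces $F_2$ by some $F'_2 \in A$ with $f(F_1,F'_2) \in V$, then $F_1$ by some $F'_1 \in A$ with $f(F'_1,F'_2) \in V$, contradicting convexity of $A$. You instead prove \emph{joint} continuity of $m_a$, which is indeed available here: your decomposition of $m_a^{-1}([h>r])$ as $\bigcup_{p \in \real} \{F_1 \mid aF_1(h) > p\} \times \{F_2 \mid (1-a)F_2(h) > r-p\}$ is valid (each factor is the whole space or a subbasic open $[h > \cdot]$ since $0 < a < 1$, and your case analysis on $\infty$ and on splitting the finite slack is exactly what is needed), after which the conclusion follows from the standard facts $m_a(cl(B)) \subseteq cl(m_a(B))$ and $cl(C \times C) = cl(C) \times cl(C)$. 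Your route is shorter and yields a stronger intermediate fact (joint continuity of convex combinations for the weak topology, essentially because addition on $\creal$ with the Scott topology is jointly continuous); the paper's route is more robust, in that it survives in settings where only separate continuity holds (e.g.\ Scott topologies on prevision spaces), which is presumably why the author kept the semitopological-cone argument verbatim.
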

\begin{proof}
  The closure of a convex subset of a semitopological cone is also
  convex \citep[Lemma~4.10~(a)]{Keimel:topcones2}, but
  $\Pred_\AN^\bullet X$ {(resp.\ $\Pred_\Nature^\bullet X$)} is not a
  cone unless $\bullet$ is nothing.  The proof argument is however the
  same, and would generalize to an appropriate notion of
  semitopological barycentric algebra.  Let
  $C \eqdef \Pred_\AN^\bullet X$ {(resp.\ $\Pred_\Nature^\bullet X$)}
  and $A$ be a convex subset of $C$.  Given $a \in [0, 1]$, we define
  a function $f \colon C \times C \to C$ by
  $f (F_1, F_2) \eqdef a F_1 + (1-a) F_2$.  This is separately
  continuous: for example, $f (\_, F_2)^{-1} ([h > r])$ is equal to
  $[h > (r - (1-a) F_2 (h)) / a]$ if $a \neq 0$ and
  $(1-a) F_2 (h) \leq r$, to the whole of $C$ if $(1-a) F_2 (h) > r$
  or if $a=0$.  Let $V$ be the (open) complement of $cl (A)$.  For the
  sake of contradiction, let us assume that there are two elements
  $F_1, F_2 \in cl (A)$ such that $f (F_1, F_2) \not\in cl (A)$.
  Since $f$ is continuous in its second argument,
  $f (F_1, \_)^{-1} (V)$ is an open neighborhood of $F_2$; hence it
  intersects $cl (A)$, and therefore also $A$, say at $F'_2$.  Then
  $f (F_1, F'_2) \in V$.  Since $f$ is continuous in its first
  argument, $f (\_, F'_2)^{-1} (V)$ is an open neighborhood of $F_1$;
  hence it intersects $cl (A)$, and therefore also $A$, say at $F'_1$.
  Then $f (F'_1, F'_2) \in V$.  But since $F'_1, F'_2 \in A$ and $A$
  is convex, $f (F'_1, F'_2) \in A$, which directly contradicts
  $f (F'_1, F'_2) \in V$.
\end{proof}

The maps $r_\AN$ and $s_\AN^\bullet$ restrict to a homeomorphism
between $\Pred_\AN^\bullet X$ and the subspace
$\HV^{cvx} \Pred_\Nature^\bullet X$ of $\HV \Pred_\Nature^\bullet X$
consisting of its convex elements \citep[Theorem~ 4.11]{JGL-mscs16},
see also the errata \citep[Corollary~4.11,
repaired]{JGL:mscs16:errata}.  In particular, $s_\AN^\bullet$ only
takes convex values, and in order to show that two elements
$\mathcal C$, $\mathcal C'$ of $\HV \Pred_\Nature^\bullet X$ are
equal, it suffices to show that they are convex and have the same
image under $r_\AN$.
\begin{lemma}
  \label{lemma:rAP:nat}
  The transformations $r_\AN$ and $s_\AN^\bullet$ are natural on
  $\Topcat^\flat$.
\end{lemma}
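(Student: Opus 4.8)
The plan is to mimic the proof of Lemma~\ref{lemma:rDP:nat}, replacing $\SV$ by $\HV$, upward closure by topological closure, and $\min$ by $\sup$; working inside $\Topcat^\flat$ guarantees that every space occurring below is $\AN_\bullet$-friendly, so that all the facts recalled from \cite{JGL-mscs16} and its errata remain available. Fix a morphism $f \colon X \to Y$ of $\Topcat^\flat$, and recall that for $h \in \Lform Y$ one has $h \circ f \in \Lform X$ and $\Pred_\Nature^\bullet f (G)(h) = G (h \circ f)$ for every $G \in \Pred_\Nature^\bullet X$.

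For $r_\AN$, I would evaluate both sides of the naturality square at an arbitrary $C \in \HV \Pred_\Nature^\bullet X$ and $h \in \Lform Y$. The $\Pred_\AN^\bullet f$-then-$r_\AN$ composite gives $r_\AN (C)(h \circ f) = \sup_{G \in C} G (h \circ f)$. The $\HV \Pred_\Nature^\bullet f$-then-$r_\AN$ composite gives $\sup_{G' \in cl (\{\Pred_\Nature^\bullet f (G) \mid G \in C\})} G'(h)$; since the functional $G' \mapsto G'(h)$ is lower semicontinuous on $\Pred_\Nature^\bullet Y$ (its inverse image of $]t,\infty]$ being the subbasic open set $[h > t]$), Lemma~\ref{lemma:sup:cl} lets me drop the closure, and this becomes $\sup_{G \in C} \Pred_\Nature^\bullet f (G)(h) = \sup_{G \in C} G (h \circ f)$. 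The two agree, so $r_\AN$ is natural.

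For $s_\AN^\bullet$, I would use the same device as in Lemma~\ref{lemma:rDP:nat}: since $r_\AN$ and $s_\AN^\bullet$ restrict to a homeomorphism between $\Pred_\AN^\bullet Y$ and the convex elements of $\HV \Pred_\Nature^\bullet Y$, it suffices to show that, for every $F \in \Pred_\AN^\bullet X$, the two closed sets $s_\AN^\bullet (\Pred_\AN^\bullet f (F))$ and $\HV \Pred_\Nature^\bullet f (s_\AN^\bullet (F))$ are convex and have the same image under $r_\AN$. The former is convex because $s_\AN^\bullet$ only takes convex values. For the latter, $s_\AN^\bullet (F)$ is convex, its direct image under the affine map $\Pred_\Nature^\bullet f$ is again convex, and the closure of that image is convex: this is the analogue of Lemma~\ref{lemma:cl:conv} with $\Pred_\Nature^\bullet Y$ in place of $\Pred_\AN^\bullet Y$, and its proof applies verbatim, since it uses only the separate continuity of $(F_1, F_2) \mapsto a F_1 + (1-a) F_2$ for $a \in [0,1]$ (and $\Pred_\Nature^\bullet Y$ is a cone when $\bullet$ is nothing, a semitopological barycentric algebra otherwise). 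Finally, $r_\AN (s_\AN^\bullet (\Pred_\AN^\bullet f (F))) = \Pred_\AN^\bullet f (F)$ because $r_\AN \circ s_\AN^\bullet$ is the identity, whereas $r_\AN (\HV \Pred_\Nature^\bullet f (s_\AN^\bullet (F))) = \Pred_\AN^\bullet f (r_\AN (s_\AN^\bullet (F))) = \Pred_\AN^\bullet f (F)$ by the naturality of $r_\AN$ just proved together with the retraction equation; hence the two images coincide, and $s_\AN^\bullet$ is natural.

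I expect the only genuine difficulty to be this last convexity bookkeeping for $s_\AN^\bullet$: one must make sure that convexity may be checked inside $\HV \Pred_\Nature^\bullet Y$ rather than inside $\Pred_\AN^\bullet Y$, and that passing to the topological closure — which is how $\HV$ acts on morphisms — preserves it. That is precisely where the $\Pred_\Nature^\bullet$-analogue of Lemma~\ref{lemma:cl:conv} is needed; everything else is the same routine computation as in the demonic case.
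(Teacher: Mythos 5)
Your proposal is correct and follows essentially the same route as the paper: pointwise evaluation plus Lemma~\ref{lemma:sup:cl} for $r_\AN$, and for $s_\AN^\bullet$ the ``both sides convex and equal under $r_\AN$'' trick, with Lemma~\ref{lemma:cl:conv} (whose proof indeed transfers verbatim to $\Pred_\Nature^\bullet Y$) handling the closure. Your explicit remarks that the evaluation functional $G' \mapsto G'(h)$ is lower semicontinuous and that the closure lives in $\Pred_\Nature^\bullet Y$ rather than $\Pred_\AN^\bullet Y$ are slightly more careful than the paper's wording, but the argument is the same.
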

\begin{proof}
  Let $f \colon X \to Y$ be any continuous map.  For $r_\AN$, we show
  that for every $C \in \HV \Pred_\Nature^\bullet X$, for every
  $h \in \Lform Y$,
  $r_{\AN} (\HV (\Pred_\AN^\bullet f) (C)) (h) = \Pred_\AN^\bullet f
  (r_{\AN} (C)) (h)$.  The left-hand side is equal to
  $\sup_{G' \in \HV \Pred_\AN^\bullet (f) (C)} G' (h) = \sup_{G' \in
    cl (\{\Pred_\AN^\bullet (f) (G) \mid G \in C\})} \allowbreak G'
  (h) = \sup_{G' \in \{\Pred_\AN^\bullet (f) (G) \mid G \in C\}}
  \allowbreak G' (h)$ (by Lemma~\ref{lemma:sup:cl})
  $= \sup_{G \in C} \Pred_\AN^\bullet f (G) (h)$, while the right-hand
  side is equal to
  $r_{\AN} (C) (h \circ f) = \sup_{G \in C} G (h \circ f)$, and those
  are equal.

  For $s_\AN^\bullet$, we must show that for every
  $F \in \Pred_\AN^\bullet X$,
  $s_{\AN}^\bullet (\Pred_\AN^\bullet (f) (F)) = \HV \Pred_\AN^\bullet
  (f) \allowbreak (s_{\AN}^\bullet (F))$.  We apply the trick
  described above: we show that both sides are convex, and have the
  same image under $r_\AN$.  The left-hand side is convex because
  $s_\AN^\bullet$ only takes convex values.  The right-hand side is
  the closure of
  $A \eqdef \{\Pred_\AN^\bullet (f) (G) \mid G \in s_{\AN}^\bullet
  (F)\}$.  Hence, using Lemma~\ref{lemma:cl:conv}, it suffices to show
  that $A$ is convex.  We consider any two elements $G'_1$, $G'_2$ of
  $A$, and $r \in [0, 1]$.  By definition, there are elements $G_1$,
  $G_2$ of $s_{\AN}^\bullet (F)$ such that
  $\Pred_\AN^\bullet f (G_1) = G'_1$ and
  $\Pred_\AN^\bullet f (G_2) = G'_2$.  Since $s_{\AN}^\bullet (F)$ is
  convex, $rG_1+ (1-r)G_2 \in s_{\AN}^\bullet (F)$, and therefore
  $r G'_1 + (1-r) G'_2 = r \Pred_\AN^\bullet f (G_1) + (1-r)
  \Pred_\AN^\bullet f (G_2) = \Pred_\AN^\bullet f (rG_1 + (1-r) G_2)$
  is in $A$.  It remains to see that
  $s_{\AN}^\bullet (\Pred_\AN^\bullet (f) (F)) = \HV \Pred_\AN^\bullet
  (f) \allowbreak (s_{\AN}^\bullet (F))$ have the same image under
  $r_\AN$.  The image of the left-hand side is
  $\Pred_\AN^\bullet (f) (F)$, while that of the right-hand side is
  $\Pred_\AN^\bullet f (r_{\AN} (s_{\AN}^\bullet (F))$ (by naturality
  of $r_\AN$), hence to $\Pred_\AN^\bullet f (F)$.
\end{proof}


\begin{theorem}
  \label{thm:AN:STU}
  $\xymatrix{\HV \Val_\bullet \ar@<1ex>[r]^{r_\AN} &
    \Pred_{\AN}^\bullet \ar@<1ex>[l]^{s_\AN^\bullet}}$ is a
  distributing retraction on $\Topcat^\flat$.
\end{theorem}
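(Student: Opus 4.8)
The plan is to run the proof of Theorem~\ref{thm:DN:STU} almost verbatim, replacing $\SV$ by $\HV$, minima by suprema, and upward closures by topological closures, with everything taking place inside $\Topcat^\flat$. This last point is legitimate because, by Definition~\ref{defn:Top:flat}, $\Topcat^\flat$ is closed under $\HV$, $\Pred_\Nature^\bullet$ and $\Pred_\AN^\bullet$, so every intermediate space occurring in the argument is again $\AN_\bullet$-friendly, hence a place where $r_\AN$, $s_\AN^\bullet$ are defined and enjoy the properties recalled above. First I would use Lemma~\ref{lemma:V=PN} to replace $\Val_\bullet$ by the isomorphic monad $\Pred_\Nature^\bullet$, and set $S \eqdef \HV$, $T \eqdef \Pred_\Nature^\bullet$, $U \eqdef \Pred_\AN^\bullet$, $r \eqdef r_\AN$, $s \eqdef s_\AN^\bullet$; naturality of $r$ and $s$ on $\Topcat^\flat$ is Lemma~\ref{lemma:rAP:nat}.

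Next I would identify the auxiliary maps $i$ and $j$ of Definition~\ref{defn:STU:retr}. As in the Smyth case, $j_Y \eqdef r_Y \circ \eta^S_{TY}$ sends $G \in \Pred_\Nature^\bullet Y$ to $r_\AN(\dc G) = G$, since $G$ is the greatest element of $\dc G$ and $r_\AN$ takes a supremum; so $j_Y$ is simply the inclusion of $\Pred_\Nature^\bullet Y$ into $\Pred_\AN^\bullet Y$. And $i_Y \eqdef r_Y \circ \HV \eta^T_Y$ sends $C \in \HV Y$ to the functional $h \in \Lform Y \mapsto \sup_{G \in cl \{\delta_y \mid y \in C\}} G (h)$, which by Lemma~\ref{lemma:sup:cl} equals $h \mapsto \sup_{y \in C} h (y)$.

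Then I would verify the six laws of Definition~\ref{defn:STU:retr}. Law~(\ref{eq:STU:eta}) is immediate: $(r \circ \eta^S \eta^T)_X (x) = r_\AN (\dc \delta_x)$ is the prevision $h \mapsto \delta_x (h) = h (x)$, i.e.\ $\eta^\AN_X (x)$. For (\ref{eq:STU:muS}) and (\ref{eq:STU:muT}) I would unfold both sides on a generic element of $\HV \HV TX$, resp.\ $\HV TTX$, exactly as in Theorem~\ref{thm:DN:STU}; the only extra ingredient is Lemma~\ref{lemma:sup:cl}, used to discard the topological closures introduced by $\mu^\Hoare$ and by $\HV \mu^T$ before the suprema are taken, after which both sides of (\ref{eq:STU:muS}) reduce to $h \mapsto \sup_{C \in \mathcal C} \sup_{G \in C} G (h)$ and both sides of (\ref{eq:STU:muT}) to $h \mapsto \sup_{\mathcal G \in \mathcal C} \mathcal G (G \mapsto G (h))$. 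For the last three laws I would recall that $r_\AN$ and $s_\AN^\bullet$ restrict to a homeomorphism between $\Pred_\AN^\bullet X$ and $\HV^{cvx} \Pred_\Nature^\bullet X$ \cite[Theorem~4.11]{JGL-mscs16}, so that, writing $e \eqdef s \circ r$, one has $e (\mathcal C) = \mathcal C$ exactly when $\mathcal C$ is convex. Then (\ref{eq:STU:etaS:conv}) holds because $\eta^S_{TX} (G) = \dc G$ is convex; (\ref{eq:STU:SmuT:conv}) holds because $\mu^\Nature_X$ is affine (Lemma~\ref{lemma:muT:lin}) and the closure of a convex set is convex (Lemma~\ref{lemma:cl:conv}), so $\HV \mu^T$ carries convex closed sets to convex closed sets; and (\ref{eq:STU:muUS:conv}) follows, as in the Smyth case, by computing $(s_X \circ \mu^U_X \circ j_{UX})^{\ext S}$ explicitly: its value on $\mathcal C \in \HV TUX$ is the closure of $\{G \in TX \mid \exists \mathcal G \in \mathcal C,\ G \leq (h \mapsto \mathcal G (F \mapsto F (h)))\}$, and the set under the closure is convex whenever $\mathcal C$ is (using that $\mathcal G \mapsto (h \mapsto \mathcal G (F \mapsto F (h)))$ is affine), so Lemma~\ref{lemma:cl:conv} applies once more.

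The work is again essentially bookkeeping, and the only point distinguishing it from the Smyth argument --- the main obstacle --- is that the suprema defining $r_\AN$ must be traded against topological closures at several spots (rather than those closures simply being upward closures that are transparent to $\min$), which is precisely the role of Lemma~\ref{lemma:sup:cl}, and that convexity must survive those closures, which is Lemma~\ref{lemma:cl:conv}. Neither introduces a genuine new difficulty once $\Topcat^\flat$ has been fixed so that every space involved remains $\AN_\bullet$-friendly.
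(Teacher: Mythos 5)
Your proposal is correct and follows essentially the same route as the paper's own proof: replace $\Val_\bullet$ by $\Pred_\Nature^\bullet$, identify $j$ as the inclusion and $i(C)$ as $h\mapsto\sup_{y\in C}h(y)$, check the first three laws by direct computation with Lemma~\ref{lemma:sup:cl} absorbing the closures, and check the last three via the characterization of $e$ as convexity, Lemma~\ref{lemma:muT:lin} and Lemma~\ref{lemma:cl:conv}. (If anything, you are slightly more explicit than the paper in noting that Lemma~\ref{lemma:cl:conv} is also needed for (\ref{eq:STU:SmuT:conv}), since $\HV\mu^T$ introduces a closure.)
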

\begin{proof}
  The proof is similar to that of Theorem~\ref{thm:DN:STU}, replacing
  $\SV$ by $\HV$, $\DN$ by $\AN$, and $\min$ by $\sup$.  

  Using Lemma~\ref{lemma:V=PN}, we replace $\Val_\bullet$ by the
  isomorphic monad $\Pred_\Nature^\bullet$.  Naturality is by
  Lemma~\ref{lemma:rAP:nat}.  Let $S \eqdef \HV$,
  $T \eqdef \Pred_\Nature^\bullet$, $U \edef \Pred_\AN^\bullet$,
  $r \eqdef r_\AN$, $s \eqdef s_\AN^\bullet$.  For every space $Y$,
  let $j_Y \eqdef r_Y \circ \eta^S_{TY}$; for every $G \in TY$,
  $j_Y (G) = r (\eta^S_{TY} (G)) = r (\dc G) = G \in UY$; hence $j_Y$
  is the inclusion of $TY$ into $UY$.
  
  Let $i_Y \eqdef r_Y \circ S \eta^T_Y$.  We verify that
  $i_Y (C) = (h \in \Lform Y \mapsto \sup_{y \in C} h (y))$ for every
  $C \in \HV Y$.  Equating $\eta_Y (y) = \delta_y$ with
  $h \in \Lform Y \mapsto h (y)$, we have
  $i_Y (C) (h) = r_Y (\HV \eta_Y (C)) (h) = \sup_{G \in cl (\eta_Y
    [C])} G (h) = \sup_{G \in \eta_Y [C]} G (h)$ (by
  Lemma~\ref{lemma:sup:cl})
  $= \sup_{y \in C} \eta_Y (y) (h) \sup_{y \in C} h (y)$.

  (\ref{eq:STU:eta}) For every $x \in X$,
  $(r \circ \eta^S \eta^T)_X (x) = r (\dc \delta_x)$ is the prevision
  that maps $h \in \Lform X$ to
  $\sup_{G \in \dc \delta_x} G (h) = \delta_x (h) = h (x)$, namely
  $\eta^\AN_X (x) = \eta^U_X (x)$.

  
  (\ref{eq:STU:muS}) For every $\mathcal C \in SST X$,
  $(r_X \circ \mu^S_{TX}) (\mathcal C)$ maps every $h \in \Lform X$ to
  $r_X (cl (\bigcup \mathcal C)) (h) = \sup_{G \in cl (\bigcup
    \mathcal C)} G (h) = \sup_{G \in \bigcup \mathcal C} G (h)$ (by
  Lemma~\ref{lemma:sup:cl})
  $= \sup_{C \in \mathcal C} \sup_{G \in C} G (h)$, while:
  \begin{align*}
    (\mu^U_X \circ (ir)_X) (\mathcal C) (h)
    & = \mu^\AN_X ((ir)_X (\mathcal C)) (h) \\
    & = (ir)_X (\mathcal C) (F \mapsto F (h)) \\
    & = \Pred_\AN^\bullet r (i_{\HV
      \Pred_\Nature^\bullet X} (\mathcal C)) (F \mapsto F (h)) \\
    & = i_{\HV \Pred_\Nature^\bullet X} (\mathcal C)
      ((F \mapsto F (h)) \circ r) \\
    & = i_{\HV \Pred_\Nature^\bullet X} (\mathcal C)
      (F \mapsto F (h)) \circ r) \\
    & = \sup_{C \in \mathcal C} ((F \mapsto F (h)) \circ r) (C)
    = \sup_{C \in \mathcal C} r (C) (h)
    = \sup_{C \in \mathcal C} \sup_{G \in C} G (h).
  \end{align*}
  Therefore $r \circ \mu^S T = \mu^U \circ ir$.

  (\ref{eq:STU:muT}) For every $\mathcal C \in STT X$,
  $(r_X \circ S \mu^T_X) (\mathcal C)$ maps every $h \in \Lform X$ to
  $r_X (cl (\mu^T_X [\mathcal C])) (h) = \sup_{G \in cl (\mu^T_X
    [\mathcal C])} G (h) = \sup_{G \in \mu^T_X [\mathcal C]} G (h)$
  (by Lemma~\ref{lemma:sup:cl})
  $= \sup_{\mathcal G \in \mathcal C} \mu^\Nature_X (\mathcal G) (h) =
  \sup_{\mathcal G \in \mathcal C} \mathcal G (G \mapsto G (h))$,
  while, keeping the convention that $G$ ranges over
  $\Pred_\Nature^\bullet X$ and $F$ over $\Pred_\AN^\bullet X$:
  \begin{align*}
    (\mu^U_X \circ (rj)_X) (\mathcal C) (h)
    & = \mu^\AN_X ((rj)_X (\mathcal C)) (h) \\
    & = (rj)_X (\mathcal C) (F \mapsto F (h)) \\
    & = \Pred_\AN^\bullet j (r_{\Pred_\Nature^\bullet X} (\mathcal C))
      (F \mapsto F (h)) \\
    & = r_{\Pred_\Nature^\bullet X} (\mathcal C) ((F \mapsto F (h))
      \circ j) \\
    & = r_{\Pred_\Nature^\bullet X} (\mathcal C) (G \mapsto G (h))
    & \text{since $j$ is mere inclusion} \\
    & = \sup_{\mathcal G \in \mathcal C} \mathcal G (G \mapsto G (h)).
  \end{align*}
  Therefore $r \circ S \mu^T = \mu^U \circ rj$.

  For the remaining three laws, we recall that $r$ and $s$ define a
  homeomorphism between $\Pred_\AN^\bullet X$ and
  $\HV^{cvx} \Pred_\Nature\bullet X$.  Letting $e \eqdef s \circ r$,
  it follows that $e (C) = C$ if and only if $C$ is convex.

  (\ref{eq:STU:etaS:conv}) states that $\eta^S_{TX}$ always returns a
  convex element of $ST X = \HV {\Pred^\bullet_\Nature X}$, and
  indeed, for every $G \in TX$, $\eta^S_{TX} = \dc G$ is convex.

  (\ref{eq:STU:SmuT:conv}).  \red{Using Lemma~\ref{lemma:STU:conv}, we
    verify (\ref{eq:STU:SmuT:conv'}) instead.  This states that
    $e_X (S \mu^T_X (e_{TX} (\mathcal C))) = S\mu^T (e_{TX} (\mathcal
    C))$, namely that $S\mu^T (e_{TX} (\mathcal C))$ is convex, for
    every $\mathcal C \in STTX$.  But $e_{TX} (\mathcal C)$ is convex,
    and $\mu^T_X$ is affine by Lemma~\ref{lemma:muT:lin}, so the image
    $\mu^T_X [e_{TX} (\mathcal C)]$ is convex, and therefore also its
    closure
    $cl (\mu^T_X [e_{TX} (\mathcal C)]) = S \mu^T_X (e_{TX} (\mathcal
    C))$, by Lemma~\ref{lemma:cl:conv}.}


  (\ref{eq:STU:muUS:conv}).  \red{Using Lemma~\ref{lemma:STU:conv}, we
    verify (\ref{eq:STU:muUS:conv'}) instead.  Let
    $f \eqdef s_X \circ \mu^U_X \circ j_{UX}$.  We need to show that
    $e \circ f^{\ext S} \circ e U = f^{\ext S} \circ e U$.  As in the
    previous case, it suffices to show that $f^{\ext S}$ maps} convex
  sets to convex sets.  
  For every $\mathcal G \in TUX$,
  $f (\mathcal G) = s_X (\mu^U_X (\mathcal G)) = \{G \in TX \mid G
  \leq (h \in \Lform X \mapsto \mu^U_X (\mathcal G))\} = \{G \in TX
  \mid G \leq (h \in \Lform X \mapsto \mathcal G (F \in UX \mapsto F
  (h)))\}$.
  
  Then, $\mathcal C \in STUX$,
  $f^{\ext S} (\mathcal C) = cl (\bigcup f [\mathcal C])$, and we
  claim that $\bigcup f [\mathcal C]$ is convex.  Indeed, for all
  $G_1, G_2 \in f^{\ext S} (\mathcal C)$, for every $a \in [0, 1]$, by
  definition,
  $G_1 \leq (h \in \Lform X \mapsto \mathcal G_1 (F \in UX \mapsto F
  (h)))$ for some $\mathcal G_1 \in \mathcal C$,
  $G_2 \leq (h \in \Lform X \mapsto \mathcal G_2 (F \in UX \mapsto F
  (h)))$ for some $\mathcal G_2 \in \mathcal C$, and therefore
  $a G_1 + (1-a) G_2 \leq (h \in \Lform X \mapsto \mathcal G (F \in UX
  \mapsto F (h)))$, where
  $\mathcal G \eqdef a \mathcal G_1 + (1-a) \mathcal G_2 \in \mathcal
  C$.

  Since $\bigcup f [\mathcal C]$ is convex, its closure is also convex
  by Lemma~\ref{lemma:cl:conv}, hence $f^{\ext S} (\mathcal C)$ is
  convex, and this concludes the proof.
\end{proof}

\subsection{The weak distributive law $\lambda^\AN$}
\label{sec:weak-distr-law-1}

As with $\SV$, we elucidate the corresponding weak distributive law.
There is no surprise here.
\begin{proposition}
  \label{prop:AN:wdistr}
  The weak distributive law associated with the distributing
  retraction of Theorem~\ref{thm:AN:STU} is given at each object $X$
  of $\Topcat^\flat$ by:
  \begin{align}
    \nonumber
    \lambda^\AN_X & \colon \Val_\bullet \HV X \to \HV \Val_\bullet
                       X \\
    \label{lambda:flat:h}
                     & \quad \xi \mapsto \{\nu \in \Val_\bullet X \mid \forall
                       h \in \Lform X, \int_{x \in X} h (x) \,d\nu \leq \int_{C \in \HV X}
                       \sup_{x \in C} h (x) \,d\xi\} \\
    \label{lambda:flat:U}
                     & \qquad\quad =
                       \{\nu \in \Val_\bullet X \mid \nu (U) \leq \xi (\Diamond
                       U) \text{ for every open subset $U$ of $X$}\}.
  \end{align}
\end{proposition}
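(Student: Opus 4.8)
The plan is to transpose the proof of Proposition~\ref{prop:DN:wdistr} from the Smyth/superlinear setting to the Hoare/sublinear one, replacing $\SV$ by $\HV$, $\DN$ by $\AN$, and $\min$ by $\sup$. Using Lemma~\ref{lemma:V=PN} I first replace $\Val_\bullet$ by the isomorphic monad $\Pred_\Nature^\bullet$, noting that all the spaces involved ($\HV X$, $\Pred_\Nature^\bullet X$, $\Pred_\AN^\bullet X$, and $\Pred_\Nature^\bullet \HV X$) stay inside $\Topcat^\flat$, so that the retraction $r_\AN$, $s_\AN^\bullet$ and its properties apply. By (\ref{eq:lambda}), $\lambda^\AN_X = s_\AN^\bullet \circ \mu^\AN_X \circ (ji)_X$, and from the proof of Theorem~\ref{thm:AN:STU} I may take $j_Y$ to be the inclusion of $\Pred_\Nature^\bullet Y$ into $\Pred_\AN^\bullet Y$ and $i_Y$ to send $C \in \HV Y$ to $(h \in \Lform Y \mapsto \sup_{y \in C} h(y))$. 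Exactly as in the $\DN$ case, for $\mathcal G \in \Pred_\Nature^\bullet \HV X$ and $H \in \Lform {\Pred_\AN^\bullet X}$ one has $(ji)_X(\mathcal G)(H) = \mathcal G(H \circ i)$, whence $(\mu^\AN_X \circ (ji)_X)(\mathcal G)(h) = \mathcal G(C \mapsto \sup_{x \in C} h(x))$ for $h \in \Lform X$. Since $s_\AN^\bullet(F) = \{G \in \Pred_\Nature^\bullet X \mid G \leq F\}$, this gives $\lambda^\AN_X(\mathcal G) = \{G \mid G(h) \leq \mathcal G(C \mapsto \sup_{x \in C} h(x)) \text{ for all } h \in \Lform X\}$, and translating through $\Val_\bullet \cong \Pred_\Nature^\bullet$ yields (\ref{lambda:flat:h}). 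The one place where the argument departs from the superlinear case is that the inequality is $\leq$ rather than $\geq$, which is forced by $s_\AN^\bullet$ collecting the valuations \emph{below} $F$.

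To pass from (\ref{lambda:flat:h}) to (\ref{lambda:flat:U}) I would establish both inclusions. For the forward inclusion, given $\nu \in \lambda^\AN_X(\xi)$ and an open $U \subseteq X$, instantiate $h \eqdef \chi_U \in \Lform X$: then $\int_{x \in X} \chi_U(x)\,d\nu = \nu(U)$, while $\sup_{x \in C} \chi_U(x) = \chi_{\Diamond U}(C)$ (it equals $1$ exactly when $C$ meets $U$), so $\int_{C \in \HV X} \sup_{x \in C} \chi_U(x)\,d\xi = \xi(\Diamond U)$ and hence $\nu(U) \leq \xi(\Diamond U)$. For the reverse inclusion, assume $\nu(U) \leq \xi(\Diamond U)$ for every open $U$. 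For $h \in \Lform X$ set $h^\ast \colon \HV X \to \creal$, $h^\ast(C) \eqdef \sup_{x \in C} h(x)$; then $h^\ast \in \Lform {\HV X}$ since ${(h^\ast)}^{-1}(]t, \infty]) = \Diamond\, h^{-1}(]t, \infty])$ is open for every $t \in \Rp$ (equivalently, $h^\ast$ is the composite of $i_X$ with the continuous evaluation at $h$). The Choquet formula then yields $\int_{C \in \HV X} h^\ast(C)\,d\xi = \int_0^\infty \xi(\Diamond\, h^{-1}(]t, \infty]))\,dt \geq \int_0^\infty \nu(h^{-1}(]t, \infty]))\,dt = \int_{x \in X} h(x)\,d\nu$, so $\nu \in \lambda^\AN_X(\xi)$ by (\ref{lambda:flat:h}).

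I do not expect any genuine obstacle: the derivation is a faithful copy of the proof of Proposition~\ref{prop:DN:wdistr}, with all the necessary ingredients (the explicit forms of $i$ and $j$, the retraction identity $r_\AN \circ s_\AN^\bullet = \identity{\Pred_\AN^\bullet X}$, Lemma~\ref{lemma:V=PN}, and the Choquet formula) already available. The only thing that deserves a moment of care is the verification that $h^\ast$ is lower semicontinuous with superlevel sets $\Diamond\, h^{-1}(]t, \infty])$ --- unlike the Smyth case, the supremum $h^\ast(C)$ need not be attained --- together with keeping track of the inequality reversed by $s_\AN^\bullet$.
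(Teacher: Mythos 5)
Your proposal is correct and follows essentially the same route as the paper's own proof: replace $\Val_\bullet$ by $\Pred_\Nature^\bullet$, compute $s_\AN^\bullet \circ \mu^\AN_X \circ (ji)_X$ explicitly using the forms of $i$ and $j$ from Theorem~\ref{thm:AN:STU}, then derive (\ref{lambda:flat:U}) via characteristic maps in one direction and the Choquet formula applied to $h^*$ in the other. Your remark that $\sup_{x \in C} h(x) > t$ holds iff $C$ meets $h^{-1}(]t,\infty])$ even when the supremum is not attained is exactly the point that makes the superlevel-set identity go through.
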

\begin{proof}
  As usual, we replace $\Val_\bullet$ by the isomorphic monad
  $\Pred_\Nature^\bullet$.  Relying on (\ref{eq:lambda}),
  $\lambda^\AN_X = s_\AN^\bullet \circ \mu^\AN_X \circ (ji)_X$,
  where $i$ and $j$ are computed as in the proof of
  Theorem~\ref{thm:AN:STU}, namely $j_Y$ is the inclusion of
  $\Pred_\Nature^\bullet Y$ into $\Pred_\AN^\bullet Y$ and $i_Y$ maps
  every $C \in \HV Y$ to
  $(h \in \Lform Y \mapsto \sup_{x \in C} h (x))$, for every space
  $Y$.  For every $\mathcal G \in \Pred_\Nature^\bullet \HV X$, for
  every $H \in \Lform \Pred_\AN^\bullet X$,
  $(ji)_X (\mathcal G) (H) = \Pred_\AN^\bullet i (j_{\HV X} (\mathcal
  G)) (H) = j_{\HV X} (\mathcal G) (H \circ i) = \mathcal G (H \circ
  i)$.  Therefore, for every $h \in \Lform X$,
  $(\mu^\AN_X \circ (ji)_X) (\mathcal G) (h) = (ji)_X (\mathcal G) (F
  \mapsto F (h)) = \mathcal G (C \mapsto i (C) (h)) = \mathcal G (C
  \mapsto \sup_{x \in C} h (x))$.  Hence
  $\lambda^\AN_X (\mathcal G)$ is the collection of
  $G \in \Pred_\Nature^\bullet X$ such that
  $G (h) \leq \mathcal G (C \mapsto \sup_{x \in C} h (x))$ for every
  $h \in \Lform X$.  Through the isomorphism
  $\Val_\bullet \cong \Pred_\Nature^\bullet$, for every
  $\xi \in \Val_\bullet \HV X$,
  $\lambda^\AN_X (\xi) = \{\nu \in \Val_\bullet X \mid \forall h \in
  \Lform X, \int_{x \in X} h (x) \,d\nu \leq \int_{C \in \HV X}
  \sup_{x \in C} h (x) \,d\xi\}$.

  We prove (\ref{lambda:flat:U}).  For every
  $\nu \in \lambda^\AN_X (\xi)$, by taking the characteristic map
  $\chi_U$ of an arbitrary open subset $U$ of $X$,
  $\int_{x \in X} h (x) \,d\nu = \nu (U)$, while
  $\int_{C \in \HV X} \sup_{x \in C} h (x) \,d\xi = \int_{C \in \HV X}
  \chi_{\Diamond U} (C) \,d\xi = \xi (\Diamond U)$, so
  $\nu (U) \leq \xi (\Diamond U)$ for every open subset $U$ of $X$.
  Conversely, we assume that $\nu (U) \leq \xi (\Diamond U)$ for every
  open subset $U$ of $X$.  For every $h \in \Lform X$, let
  $h^* \colon \HV X \to \creal$ map $C$ to $\sup_{x \in C} h (x)$.
  Then $h^* \in \Lform {\HV X}$, and for every $t \in \Rp$,
  ${(h^*)}^{-1} (]t, \infty]) = \Diamond h^{-1} (]t, \infty])$.  Using
  the Choquet formula,
  $\int_{C \in \HV X} h^* (C) \,d\xi = \int_0^\infty \xi ({(h^*)}^{-1}
  (]t, \infty]))\, dt = \int_0^\infty \xi (\Diamond h^{-1} (]t,
  \infty]))\, dt \geq \int_0^\infty \nu (h^{-1} (]t, \infty]))\, dt =
  \int_{x \in X} h (x) \,d\nu$.  Since $h$ is arbitrary,
  $\nu \in \lambda^\AN_X (\xi)$.
\end{proof}

\begin{remark}
  \label{rem:AN:wdistr} 
  When $\xi$ is a finite linear combination
  $\sum_{i=1}^n a_i \delta_{C_i}$, where furthermore each $C_i$ is the
  closure (equivalently, the downward closure) of a non-empty finite
  set $E_i \eqdef \{x_{ij} \mid 1\leq j\leq m_i\}$, then we claim
  that:
  \begin{align}
    \label{lambda:flat:fin}
    \lambda^\AN_X (\xi)
    &= cl (\conv \left\{\sum_{i=1}^n a_i \delta_{x_{i
      f(i)}} \mid f \in \prod_{i=1}^n \{1, \cdots, m_i\}\right\}).
  \end{align}
  To see this, we use Proposition~\ref{prop:AN:wdistr}:
  $\nu \in \lambda^\AN_X (\xi)$ if and only if
  $G \eqdef \int_{x \in X} \_ (x) \,d\nu$ is such that $G (h)$ is
  smaller than or equal to
  $\int_{C \in \HV X} \sup_{x \in C} h (x) \,d\xi = \sum_{i=1}^n a_i
  \sup_{j=1}^{m_i} h (x_{ij})$ for every $h \in \Lform X$.  Now
  $\sum_{i=1}^n a_i \sup_{j=1}^{m_i} h (x_{ij}) = \sup_{f \in
    \prod_{i=1}^n \{1, \cdots, m_i\}} \allowbreak \sum_{i=1}^n a_i h
  (x_{i f (i)}) = r_\AN (\mathcal E)$, where $\mathcal E$ is the
  collection of (integration functionals of) the continuous valuations
  $\sum_{i=1}^n a_i \delta_{x_{i f(i)}}$, and where $f$ ranges over
  $\prod_{i=1}^n \{1, \cdots, m_i\}$.  Hence
  $\lambda^\AN_X (\xi) = s_\AN^\bullet (r_\AN (\mathcal E))$.  By
  \citep[Proposition~4.19]{JGL-mscs16}, $s_\AN^\bullet \circ r_\AN$
  maps every non-empty closed subset $C$ of $\Val_\bullet X$ to its
  closed convex hull $cl (\conv C)$, so
  $\lambda^\AN_X (\xi) = cl (\conv \mathcal E)$.
\end{remark}

\subsection{The algebras of $\Pred_\AN^\bullet$}
\label{sec:algebr-pred}

The $\HV$-algebras on $\Topcat_0$ were identified by 
\citet[Theorem~6.9]{Schalk:diss}.  They are the sober inflationary
topological semilattices, namely the semilattices $(L, \vee)$ with a
topology that makes $L$ sober and addition $\vee$ (jointly)
continuous, and where $x \leq x \vee y$ for all $x, y \in L$; the
latter is what ``inflationary'' means.  (For sobriety, see
\citep[Chapter~8]{JGL-topology}.)  Equivalently, $\vee$ is binary
supremum with respect to the specialization ordering of $L$, and is
required to exist and to be continuous, in addition to the property of
sobriety.  The actual structure maps are supremum maps
$\bigvee \colon \HV L \to L$, which automatically exist and are
continuous.  The morphisms of $\HV$-algebras are the
addition-preserving continuous maps.  It follows:
\begin{theorem}
  \label{thm:AN:alg}
  For every $T_0$ topological space $X$, there is a structure of a
  $\Pred_\AN^\bullet$-algebra on $X$ if and only if $X$ has a
  structure of a sober inflationary topological semilattice, and there
  is a $\Val_\bullet$-algebra structure
  $\beta \colon \Val_\bullet X \to X$ such that
  $\beta \circ \Val_\bullet \bigvee = \bigvee \circ \SV \beta \circ
  \lambda^\AN_X$.

  If $X$ is a sober inflationary topological semilattice, then there
  is a one-to-one correspondence between $\Val_\bullet$-algebra
  structures $\beta$ on $X$ such that
  $\beta \circ \Val_\bullet \bigvee = \bigvee \circ \SV \beta \circ
  \lambda^\AN_X$ and $\Pred_\AN^\bullet$-algebra structures $\gamma$
  on $X$, given by
  $\gamma \eqdef \bigvee \circ \HV \beta \circ s_\AN^\bullet$.

  The morphisms $f$ of $\Pred_\AN^\bullet$-algebras are the morphisms
  of underlying $\Val_\bullet$-algebras that preserve binary infima.
\end{theorem}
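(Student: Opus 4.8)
The plan is to derive Theorem~\ref{thm:AN:alg} from the general description of the algebras of a combined monad recalled around diagram~\eqref{eq:lambda:distr}, instantiated at the distributing retraction of Theorem~\ref{thm:AN:STU}. By that theorem together with Theorem~\ref{thm:STU:wdistr} (and transporting $\Val_\bullet$ through the isomorphism $\Val_\bullet\cong\Pred_\Nature^\bullet$ of Lemma~\ref{lemma:V=PN}), on $\Topcat^\flat$ the monad $\Pred_\AN^\bullet$ is the combined monad of the weak distributive law $\lambda^\AN$ of $S\eqdef\HV$ over $T\eqdef\Val_\bullet$. Hence, by \cite[Lemma~14]{Garner:weak:distr} or \cite[Proposition~3.7]{Bohm:weak:monads}, a $\Pred_\AN^\bullet$-algebra structure $\gamma$ on $X$ amounts to a pair $(\alpha,\beta)$ of an $\HV$-algebra $\alpha\colon\HV X\to X$ and a $\Val_\bullet$-algebra $\beta\colon\Val_\bullet X\to X$ for which diagram~\eqref{eq:lambda:distr} commutes, with $\gamma=\alpha\circ\HV\beta\circ s_\AN^\bullet$; likewise a map is a $\Pred_\AN^\bullet$-algebra morphism exactly when it is simultaneously an $\HV$-algebra and a $\Val_\bullet$-algebra morphism.

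I would then feed in Schalk's characterization of $\HV$-algebras \cite[Theorem~6.9]{Schalk:diss}: on a $T_0$ space an $\HV$-algebra structure exists iff $X$ carries a structure of sober inflationary topological semilattice, in which case it is unique and equal to the supremum map $\bigvee\colon\HV X\to X$, and the $\HV$-algebra morphisms are the continuous maps preserving binary suprema in the specialization order. Substituting $\alpha=\bigvee$ into~\eqref{eq:lambda:distr} and comparing the two legs $\bigvee\circ\HV\beta\circ\lambda^\AN_X$ and $\beta\circ\Val_\bullet\bigvee$ from $\Val_\bullet\HV X$ to $X$ turns the commutation requirement into the stated identity $\beta\circ\Val_\bullet\bigvee=\bigvee\circ\HV\beta\circ\lambda^\AN_X$. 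Since the $\HV$-component $\alpha$ is then forced and unique, the admissible pairs over a fixed sober inflationary topological semilattice $X$ are precisely the $\Val_\bullet$-algebras $\beta$ obeying this identity; this yields the existence criterion and the bijection $\beta\mapsto\gamma=\bigvee\circ\HV\beta\circ s_\AN^\bullet$ of the statement, and the morphism clause follows from Schalk's description of $\HV$-algebra morphisms. In explicit verifications one would expand $\lambda^\AN_X$ via Proposition~\ref{prop:AN:wdistr}, exactly as in the proof of Corollary~\ref{corl:DN:alg:lcsober}.

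The point requiring the most care is that Theorem~\ref{thm:AN:STU}, and hence the identification of $\Pred_\AN^\bullet$ as the combined monad of $\lambda^\AN$, are only available on $\Topcat^\flat$, while Theorem~\ref{thm:AN:alg} ranges over all $T_0$ spaces. I would bridge this in two steps. For the ``only if'' half I would avoid the retraction entirely: the monad morphisms $i\colon\HV\to\Pred_\AN^\bullet$, given objectwise by $i_X(C)=(h\in\Lform X\mapsto\sup_{x\in C}h(x))$, and $j\colon\Val_\bullet\to\Pred_\AN^\bullet$, the inclusion of linear into sublinear previsions, are well-defined monad morphisms on the whole of $\Topcat$ (a direct check, using Lemma~\ref{lemma:sup:cl}), so from any $\Pred_\AN^\bullet$-algebra $\gamma$ on a $T_0$ space $X$ the composite $\gamma\circ i_X$ is an $\HV$-algebra on $X$, forcing $X$ to be a sober inflationary topological semilattice, and $\beta\eqdef\gamma\circ j_X$ is a $\Val_\bullet$-algebra. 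For the ``if'' half and the bijection, which do need $r_\AN$, $s_\AN^\bullet$ and the splitting of the idempotent~\eqref{eq:e} to be available at $X$, I would check that a sober inflationary topological semilattice supporting a compatible $\Val_\bullet$-algebra is $\AN_\bullet$-friendly (or, if that proves awkward, restrict those clauses to $\Topcat^\flat$, just as Corollary~\ref{corl:DN:alg:lcsober} restricts to locally compact sober spaces); everything remaining is a routine diagram chase.
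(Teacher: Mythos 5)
Your proposal takes essentially the same route as the paper: Theorem~\ref{thm:AN:alg} is obtained there by combining the general correspondence between algebras of the combined monad and pairs $(\alpha,\beta)$ making diagram~\eqref{eq:lambda:distr} commute with Schalk's characterization of $\HV$-algebras, exactly as you do. Your additional care about the restriction to $\Topcat^\flat$ is a legitimate point that the paper leaves implicit (the statement ranges over all $T_0$ spaces while Theorem~\ref{thm:AN:STU} only supplies the distributing retraction, hence the weak distributive law and the splitting, on $\Topcat^\flat$), so flagging it and proposing either a friendliness check or an explicit restriction is a sensible refinement rather than a departure from the paper's argument.
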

We specialize this to the category $\Cont$ of continuous dcpos, as we
did with locally compact sober spaces in
Corollary~\ref{corl:DN:alg:lcsober}.  A \emph{continuous
  sup-semilattice} is a continuous dcpo with a (necessarily
Scott-continuous) binary supremum operation $\vee$.  A
\emph{continuous sup-semilattice d-cone} is a continuous d-cone that
is also a continuous sup-semilattice.
\begin{corollary}
  \label{corl:AN:alg:lcsober}
  The $\Pred_\AN$-algebras in $\Cont$ are the continuous
  sup-semilattice d-cones in which scalar multiplication and addition
  distribute over binary suprema.  The maps of $\Pred_\DN$-algebras
  are the linear, sup-preserving Scott-continuous maps.
\end{corollary}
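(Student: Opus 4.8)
The plan is to mirror the proof of Corollary~\ref{corl:DN:alg:lcsober}, replacing each Smyth construction by its Hoare dual: $\SV$ by $\HV$, $\bigwedge$ by $\bigvee$, upward closure by topological closure, $\min$ by $\sup$, Remark~\ref{rem:DN:wdistr} by Remark~\ref{rem:AN:wdistr}, and Theorem~\ref{thm:DN:alg} by Theorem~\ref{thm:AN:alg}. First I would assemble the underlying structure: by Theorem~\ref{thm:AN:alg}, a $\Pred_\AN$-algebra on a continuous dcpo $X$ amounts to an $\HV$-algebra structure together with a $\Val$-algebra structure on $X$ satisfying the compatibility condition $\beta\circ\Val \bigvee = \bigvee\circ\HV\beta\circ\lambda^\AN_X$. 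Specializing \cite[Theorem~6.9]{Schalk:diss} (the Hoare counterpart of \cite[Proposition~7.28]{Schalk:diss}) to continuous dcpos, the $\HV$-algebras are exactly the continuous sup-semilattices, with structure map the supremum map $\bigvee\colon\HV X\to X$; and, as recalled before Corollary~\ref{corl:DN:alg:lcsober}, the $\Val$-algebras in $\Cont$ are the continuous d-cones, with $\beta$ uniquely determined by $\beta(\sum_{i} a_i\delta_{x_i})=\sum_{i} a_i x_i$. Hence the continuous dcpos carrying a $\Pred_\AN$-algebra are precisely the continuous sup-semilattice d-cones that satisfy the compatibility condition, and it remains to show that this condition is equivalent to the two distributivity laws.

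Next I would reduce the compatibility condition to simple valuations. By Jones's basis theorem \cite[Corollary~5.4]{Jones:proba}, a continuous map out of $\Val X$ is determined by its values on simple valuations, so it suffices to test the equation on $\xi\eqdef\sum_{i=1}^n a_i\delta_{C_i}$, where each $C_i=cl (E_i)$ is the closure of a non-empty finite set $E_i=\{x_{ij}\mid 1\le j\le m_i\}$. The left-hand side equals $\beta(\Val \bigvee(\xi))=\beta(\sum_i a_i\delta_{\bigvee_{j} x_{ij}})=\sum_{i=1}^n a_i\bigvee_{j=1}^{m_i} x_{ij}$. For the right-hand side, Remark~\ref{rem:AN:wdistr} gives $\lambda^\AN_X(\xi)=cl (\conv \mathcal E)$, where $\mathcal E\eqdef\{\sum_{i=1}^n a_i\delta_{x_{i f(i)}}\mid f\in\prod_{i=1}^n\{1,\cdots,m_i\}\}$; applying $\HV\beta$, which sends a closed set $C$ to $cl (\beta[C])$ and therefore commutes with closure and carries $\conv \mathcal E$ to $\conv \beta[\mathcal E]$ (since $\beta$ preserves convex combinations of simple valuations), gives $cl (\conv \beta[\mathcal E])$; and then $\bigvee cl (\conv \beta[\mathcal E])=\bigvee\beta[\mathcal E]$, using $\bigvee cl (A)=\bigvee A$ (because $\dc y$ is Scott-closed in a continuous dcpo) and $\bigvee\conv E=\bigvee E$ (because in a d-cone every convex combination of elements of $E$ lies below $\bigvee E$). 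Thus the compatibility condition reduces to $\sum_{i=1}^n a_i\bigvee_{j=1}^{m_i} x_{ij}=\bigvee_{f\in\prod_{i=1}^n\{1,\cdots,m_i\}}\sum_{i=1}^n a_i x_{i f(i)}$ for all such data.

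Then, exactly as in Corollary~\ref{corl:DN:alg:lcsober}, I would observe that this family of identities is equivalent to the conjunction of the two distributivity laws: taking $n=1$ gives $a\bigvee_{j} x_j=\bigvee_{j} a x_j$, i.e.\ scalar multiplication distributes over binary suprema; taking every $a_i=1$ gives that finite sums distribute over finite suprema, i.e.\ addition distributes over binary suprema; and conversely a short induction shows these two laws imply the general identity. For the morphism statement, the general theory recalled before (\ref{eq:lambda:distr}) identifies the $\Pred_\AN$-algebra morphisms with the maps that are simultaneously $\HV$-algebra and $\Val$-algebra morphisms; by Schalk the former are the $\bigvee$-preserving continuous maps, and by \cite[Theorem~5.12]{GLJ:bary} the latter are the linear continuous maps, so together they are precisely the linear, sup-preserving Scott-continuous maps.

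I do not expect any step to be a genuine obstacle, since the whole argument is the Hoare-side mirror of the already-completed proof for $\SV$; the only point that needs some care is the interaction of topological closure with $\HV\beta$ and with $\bigvee$, which is covered by Lemma~\ref{lemma:sup:cl} and Lemma~\ref{lemma:cl:conv} together with the remark that $\dc y$ is Scott-closed in a continuous dcpo.
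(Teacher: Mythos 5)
Your proposal is correct and follows essentially the same route as the paper: reduce the compatibility condition to simple valuations via Jones's basis theorem, establish that $\HV \beta (cl (\conv \mathcal E)) = cl (\conv \beta [\mathcal E])$ and that $\bigvee cl (\conv E) = \bigvee E$, and then read off the two distributivity laws from the cases $n=1$ and all $a_i=1$. The one local divergence is your proof of $\bigvee cl (\conv E) = \bigvee E$: you argue order-theoretically that $A$ and $cl(A)$ have the same upper bounds because $\dc y = cl(\{y\})$ is closed, whereas the paper gives a purely topological argument using the continuity of the structure map $\bigvee$ and the subbasic opens $\Diamond U_k$ of the lower Vietoris topology; both are valid, and yours is the more elementary of the two.
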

\begin{proof}
  This is proved as Corollary~\ref{corl:DN:alg:lcsober}, with $\vee$
  replacing $\wedge$ and $cl (\conv \_)$ replacing $\upc \conv \_$: we
  reduce the condition
  $\beta \circ \Val_\bullet \bigvee = \bigvee \circ \HV \beta \circ
  \lambda^\AN_X$ to a more manageable condition.

  For every finite set $\mathcal E$ of simple valuations, we note that
  $\HV \beta (cl (\conv \mathcal E)) = cl (\conv \beta [\mathcal E])$.
  The left-hand side is equal to $cl (\beta [cl (\conv \mathcal E)])$.
  Since $\beta$ is continuous,
  $\beta [cl (\conv \mathcal E)]) \subseteq cl (\beta [\conv \mathcal
  E])$, and since $\beta$ preserves convex combinations of simple
  valuations, $\beta [\conv \mathcal E] = \conv (\beta [\mathcal E])$,
  so
  $cl (\beta [cl (\conv \mathcal E)]) \subseteq cl (\conv \beta
  [\mathcal E])$.  Conversely,
  $\conv \beta [\mathcal E] = \beta [\conv \mathcal E] \subseteq cl
  (\beta [cl (\conv \mathcal E)])$, so
  $cl (\conv \beta [\mathcal E]) \subseteq cl (\beta [cl (\conv
  \mathcal E)])$.

  Also, for every non-empty finite subset $E$ of points,
  $\bigvee cl (\conv E) = \bigvee E$.  Indeed,
  $\bigvee E \leq \bigvee cl (\conv E)$ since
  $E \subseteq cl (\conv E)$, while conversely, it suffices to show
  that every open neighborhood $U$ of $\bigvee cl (\conv E)$ contains
  $\bigvee E$.  Since $\bigvee$ is the structure map of a
  $\HV$-algebra, it is in particular continuous, so there is an open
  neighborhood $\mathcal U$ of $cl (\conv E)$ in $\HV X$ whose image
  under $\bigvee$ is included in $U$.  By definition of the lower
  Vietoris topology, we may assume that $\mathcal U$ is an
  intersection $\bigcap_{k=1}^p \Diamond {U_k}$, where each $U_k$ is
  open in $X$.  If $p=0$, then $\mathcal U$ is the whole Hoare
  hyperspace, and its image under $\bigvee$, which contains
  $\bigvee \dc x$ for every point $x$, contains every point; hence $U$
  contains every point, and therefore $\bigvee E$ in particular.  We
  therefore assume $p \geq 1$.  Since $cl (\conv E) \in \mathcal U$,
  $cl (\conv E)$ intersects every $U_k$, so $\conv E$ intersects every
  $U_k$.  Hence there are convex combinations $x_k$ of elements of $E$
  such that $x_k \in U_k$ for every $k$.  Let $x \eqdef \bigvee E$.
  Every convex combination of elements of $E$ is a convex combination
  of elements smaller than or equal to $x$, hence is itself smaller
  than or equal to $x$.  Hence $x_k \leq x$ for every $k$.  Let $C$ be
  the non-empty (since $p \geq 1$) closed set
  $\dc \{x_1, \cdots, x_p\}$.  This is an element of the Hoare
  hyperspace, and $\bigvee C \leq x$.  But $\bigvee$ maps $\mathcal U$
  to $U$, and $C$ is in $\mathcal U = \bigcap_{k=1}^p \Diamond {U_k}$,
  so $\bigvee C \in U$.  Since $U$ is upwards closed, $x \in U$, as
  promised.

  The condition
  $\beta \circ \Val_\bullet \bigvee = \bigvee \circ \HV \beta \circ
  \lambda^\AN_X$ reduces to the fact that both sides coincide on
  simple valuations $\xi \eqdef \sum_{i=1}^n a_i \delta_{C_i}$, where
  each $C_i$ is the (downward) closure of a non-empty finite set
  $E_i \eqdef \{x_{ij} \mid 1\leq j\leq m_i\}$.  Then
  $\beta (\Val_\bullet \bigvee (\xi)) = \beta (\sum_{i=1}^n a_i
  \delta_{\bigvee_{j=1}^{m_i} x_{ij}}) = \sum_{i=1}^n a_i
  \bigvee_{j=1}^{m_i} x_{ij}$, while, using
  Remark~\ref{rem:AN:wdistr} and the remarks made above:
  \begin{align*}
    \bigvee (\HV \beta (\lambda^\AN_X (\xi)))
    & = \bigvee \left(\HV \beta \left(cl (\conv \left\{\sum_{i=1}^n a_i \delta_{x_{i
      f(i)}} \mid f \in \prod_{i=1}^n \{1, \cdots,
      m_i\})\right\}\right)\right) \\
      & = \bigvee \beta \left[\left\{\sum_{i=1}^n a_i \delta_{x_{i
      f(i)}} \mid f \in \prod_{i=1}^n \{1, \cdots,
      m_i\}\right\}\right] \\
    & = \bigvee_{f \in \prod_{i=1}^n \{1, \cdots, m_i\}}
      \sum_{i=1}^n a_i x_{i f (i)}.
  \end{align*}
  It is easy to see that this reduces to the fact that scalar
  multiplication and addition distribute over binary suprema.
\end{proof}

\section{Lenses, continuous valuations, and forks}
\label{sec:plotk-powerd-cont}

Combining $\SV$ and $\HV$, there is a monad of lenses, which we define
below.  The variant with the Scott topology is sometimes referred to
as the Plotkin powerdomain, or the convex powerdomain, and models
erratic non-determinism.

A \emph{lens} on a space $X$ is a non-empty set of the form $Q \cap C$
where $Q$ is compact saturated and $C$ is closed in $X$.  The
\emph{Vietoris topology} has subbasic open subsets of the form
$\Box U$ (the set of lenses included in $U$) and $\Diamond U$ (the set
of lenses that intersect $U$), for each open subset $U$ of $X$.  We
write $\Plotkin X$ for the set of lenses on $X$, and we let
$\Plotkinn X$ denote $\Plotkin X$ with the Vietoris topology.  The
specialization ordering of $\Plotkinn X$ is the \emph{topological
  Egli-Milner ordering}: $L \TEMleq L'$ if and only if
$\upc L \supseteq \upc L'$ and $cl (L) \subseteq cl (L')$
\citep[Discussion before Fact~4.1]{GL:duality}.  This is an ordering,
not just a preordering, hence $\Plotkinn X$ is $T_0$.

We will work in the category $\SComp$ of stably compact spaces and
continuous maps.  That is not the most general category we could use,
and core-compact core-coherent spaces (with an additional requirement
for compactness if $\bullet$ is ``$1$'') would be enough for us to use
the results of \citep[Section~3.3]{JGL-mscs16}.

A \emph{stably compact space} is a sober, locally compact, compact
coherent space; see \citep[Chapter~9]{JGL-topology} for definitions and
results.  Every \emph{compact pospace} $(X, \preceq)$, namely every
(necessarily Hausdorff) compact space $X$ with an ordering $\preceq$
whose graph $(\preceq)$ is closed in $X \times X$, provides two
examples of stably compact spaces: $X$ with the \emph{upward
  topology}, whose open subsets are the open subsets of $X$ that are
upwards closed with respect to $\preceq$, and $X$ with the
\emph{downward topology}, defined similarly, with ``upwards'' replaced
by ``downwards''.  Conversely, every stably compact space $X$ yields a
compact pospace $(X^\patch, \leq)$, where $X^\patch$ has the same
points as $X$ and has the \emph{patch topology} (the coarsest topology
that makes all open subsets of $X$ open and all compact saturated
subsets of $X$ closed), and $\leq$ is the specialization ordering of
$X$.  The two constructions are inverse of each other.  In particular,
the upward topology of $(X^\patch, \leq)$ is the original topology on
$X$; the downward topology consists of the complements of compact
saturated subsets of $X$, and is called the \emph{cocompact topology}.
The space $X^\dG$, whose points are those of $X$ and whose topology is
the cocompact topology, is called the \emph{de Groot dual} of $X$.


Given a stably compact space $X$, the closure $cl (L)$ of every lens
coincides with its downward closure $\dc L$
\citep[Lemma~4.2]{GL:duality}.  Hence, for two lenses $L$ and $L'$,
$L \TEMleq L'$ if and only if $\upc L \supseteq \upc L'$ and
$\dc L \subseteq \dc L'$, an ordering known as the \emph{Egli-Milner
  ordering}.  A canonical representation as $Q \cap C$ for a lens $L$
is as $\upc L \cap \dc L$.  We note that $\upc L$ is compact saturated
and that $\dc L = cl (L)$ is closed.  The equality is an easy
exercise, noting that any lens $L = Q \cap C$, with $Q$ compact
saturated and $C$ closed (hence downwards closed) must be
\emph{order-convex}, namely any point sitting between two points of
$L$ in the specialization preordering $\leq$ of $X$ must itself be in
$L$.

\begin{proposition}
  \label{prop:AD:monad}
  There is a monad
  $(\Plotkinn, \allowbreak \eta^\pl, \allowbreak \mu^\pl)$ on
  $\SComp$, whose unit and multiplication are defined at every
  topological space $X$ by:
  \begin{itemize}
  \item for every $x \in X$,
    $\eta^\pl_X (x) = 
    \{x\}$,
  \item for every $\mathcal L \in \Plotkinn \Plotkinn X$, $\mu^\pl_X
    (\mathcal L) = \bigcup \{\upc L \mid L \in \mathcal L\} \cap cl
    (\bigcup \{\dc L \mid L \in \mathcal L\})$,
  \end{itemize}
  and whose action on morphisms $f \colon X \to Y$ is given by
  $\Plotkinn f (L) = \upc f [L] \cap cl (f[L])$.  For every continuous
  map $f \colon X \to \Plotkinn Y$, the extension $f^\natural \colon
  \Plotkinn X \to \Plotkinn Y$ ($= \mu^\pl_Y \circ \Plotkinn f$) is
  given by $f^\natural (L) \eqdef \bigcup \{\upc f (x) \mid x \in L\}
  \cap cl (\bigcup \{\dc f (x) \mid x \in L\})$.
\end{proposition}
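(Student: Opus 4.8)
The plan is to verify the monad laws by Ernest Manes' characterization through units and extension operations \cite{Manes:algth}, exactly as in Proposition~\ref{prop:DN:monad}, and to reduce everything to the fact that $\SV$ and $\HV$ are already monads \cite{Schalk:diss}.

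\textbf{Reduction to a product.}  First I would record that $\Plotkinn$ restricts to an endofunctor of $\SComp$: if $X$ is stably compact, so is $\Plotkinn X$ (classical; see also \cite{GL:duality}).  For every stably compact space $X$, I consider the maps $p_X \colon \Plotkinn X \to \SV X$, $L \mapsto \upc L$, and $q_X \colon \Plotkinn X \to \HV X$, $L \mapsto cl (L) = \dc L$.  They are well defined, since for every lens $L$ in a stably compact space $\upc L$ is non-empty compact saturated and $cl (L) = \dc L$ \cite[Lemma~4.2]{GL:duality}, and they are continuous, since $p_X^{-1} (\Box U) = \{L \mid \upc L \subseteq U\} = \Box U$ and $q_X^{-1} (\Diamond U) = \{L \mid cl (L) \cap U \neq \emptyset\} = \Diamond U$ (using that open sets are saturated, and that an open set meeting $cl (L)$ already meets $L$).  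Since the subbasic open subsets of $\Plotkinn X$ are precisely the preimages under $\langle p_X, q_X \rangle$ of the subbasic open subsets of $\SV X \times \HV X$, the map $\iota_X \eqdef \langle p_X, q_X \rangle \colon \Plotkinn X \to \SV X \times \HV X$ is a topological embedding; it is injective because $L = \upc L \cap \dc L$ recovers $L$ from $\iota_X (L)$.

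\textbf{The extension.}  Given a continuous $f \colon X \to \Plotkinn Y$, I set $\phi \eqdef p_Y \circ f \colon X \to \SV Y$ and $\psi \eqdef q_Y \circ f \colon X \to \HV Y$, with Smyth extension $\phi^\sharp$ and Hoare extension $\psi^\flat$.  Since $f$ is continuous, hence monotone for the specialization orderings, one has $\upc f (x) \supseteq \upc f (x')$ and $\dc f (x) \subseteq \dc f (x')$ whenever $x \leq x'$, and therefore $\phi^\sharp (\upc L) = \bigcup_{x \in L} \upc f (x)$ and $\psi^\flat (\dc L) = cl (\bigcup_{x \in L} \dc f (x)) = cl (\bigcup_{x \in L} f (x))$; hence $f^\natural (L) = \phi^\sharp (\upc L) \cap \psi^\flat (\dc L)$.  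Its first factor lies in $\SV Y$, its second in $\HV Y$, and $f^\natural (L) \supseteq \bigcup_{x \in L} f (x) \neq \emptyset$, so $f^\natural (L)$ is a lens and $f^\natural$ maps $\Plotkinn X$ into $\Plotkinn Y$.  I would then check that $\upc (f^\natural (L)) = \phi^\sharp (\upc L)$ (every point of $\phi^\sharp (\upc L)$ dominates a point of some $f (x) \subseteq f^\natural (L)$) and $cl (f^\natural (L)) = \psi^\flat (\dc L)$ (from $\bigcup_{x \in L} f (x) \subseteq f^\natural (L)$ and $cl (\bigcup_{x \in L} f (x)) = \psi^\flat (\dc L)$); in other words $\iota_Y \circ f^\natural = (\phi^\sharp \times \psi^\flat) \circ \iota_X$.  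As $\phi^\sharp$, $\psi^\flat$ are continuous and $\iota_X$, $\iota_Y$ are topological embeddings, $f^\natural$ is continuous.

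\textbf{Manes' equations and the stated formulas.}  Each of the three Manes equations for $\Plotkinn$ now reduces, by composing with the pair $(p, q)$ — which is jointly injective at each object — and using $\iota_Y \circ f^\natural = (\phi^\sharp \times \psi^\flat) \circ \iota_X$, to the corresponding equations for $\SV$ and $\HV$.  Explicitly: $p_X \circ \eta^\pl_X = \eta^\Smyth_X$ and $q_X \circ \eta^\pl_X = \eta^\Hoare_X$ (as $\upc \{x\} = \upc x$ and $cl \{x\} = \dc x$), so $(\eta^\pl_X)^\natural$ has Smyth component $(\eta^\Smyth_X)^\sharp = \identity{\SV X}$ and Hoare component $(\eta^\Hoare_X)^\flat = \identity{\HV X}$, whence $(\eta^\pl_X)^\natural = \identity{\Plotkinn X}$; evaluating at $\eta^\pl_X (x) = \{x\}$ and using Manes' equation~\ref{it:Manes:dagger:1} for $\SV$ and $\HV$ gives $f^\natural \circ \eta^\pl_X = f$; and $g^\natural \circ f^\natural = (g^\natural \circ f)^\natural$ follows by applying $p_Z$ and $q_Z$ and invoking Manes' equation~\ref{it:Manes:dagger:2} for $\SV$ (applied to $\phi$ and $p_Z \circ g$) and for $\HV$ (applied to $\psi$ and $q_Z \circ g$).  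Manes' theorem then yields the monad, with $\Plotkinn f = (\eta^\pl_Y \circ f)^\natural$, $\mu^\pl_X = \identity{\Plotkinn X}^\natural$ and $f^\natural = \mu^\pl_Y \circ \Plotkinn f$ for free; computing $\upc$ and $cl$ of these exactly as above recovers the stated formulas $\Plotkinn f (L) = \upc f [L] \cap cl (f [L])$ and $\mu^\pl_X (\mathcal L) = \bigcup \{\upc L \mid L \in \mathcal L\} \cap cl (\bigcup \{\dc L \mid L \in \mathcal L\})$.  I expect the main obstacle to be the middle step, and in particular the product decomposition $\iota_Y \circ f^\natural = (\phi^\sharp \times \psi^\flat) \circ \iota_X$: this is the one place where stable compactness is genuinely used — compactness of $\upc L$ for a lens $L$, and $cl (L) = \dc L$ — and once it is secured the monad laws are inherited almost mechanically from those of $\SV$ and $\HV$.
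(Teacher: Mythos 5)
Your proof is correct, but it takes a genuinely different route from the paper's. The paper verifies continuity of $f^\natural$ and the three Manes equations directly, by computing inverse images of the subbasic opens $\Box V$ and $\Diamond V$ and using the observation that two continuous maps into a $T_0$ space agree as soon as they have the same inverse images of subbasic opens; it never mentions $\SV$ or $\HV$ in this proof. You instead embed $\Plotkinn X$ into $\SV X \times \HV X$ via $L \mapsto (\upc L, \dc L)$ and inherit everything from the Smyth and Hoare monads, the crux being the intertwining identity $\upc f^\natural (L) = (\varpi_1 \circ f)^\sharp (\upc L)$ and $\dc f^\natural (L) = (\varpi_2 \circ f)^\flat (\dc L)$. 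That identity is not free --- it is exactly item~\ref{it:Pf:L:dagger} of Lemma~\ref{lemma:Pf:L}, which the paper proves separately \emph{after} the proposition, and whose proof is where stable compactness (in particular $cl (L) = \dc L$ for lenses) genuinely enters; your sketch of it is essentially the paper's argument and is sound. So your approach front-loads Lemma~\ref{lemma:Pf:L} and then gets the monad laws almost mechanically from those of $\SV$ and $\HV$, at the price of doing the harder pointwise work up front; the paper's approach keeps the proposition self-contained and purely "inverse-image-of-subbasics" formal, deferring the $\varpi_1, \varpi_2$ compatibility to where it is first needed. Both are complete; yours has the mild advantage of making the later equations (\ref{eq:ADN:STU:eta:var})--(\ref{eq:ADN:STU:dagger}) immediate byproducts.
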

\begin{proof}
  By \citep[Theorem~6.7, item~2]{Lawson:scomp}, $\Plotkinn X$ is stably
  compact for every stably compact space $X$.  The map $\eta^\pl_X$ is
  continuous, because the inverse image of both $\Box U$ and $\Diamond
  U$ is $U$, for every open subset $U$ of $X$.

  Using Manes' characterization of monads, we define $f^\natural$, the
  desired extension of $f \colon X \to \Plotkinn Y$, by the formula
  $f^\natural (L) \eqdef \bigcup \{\upc f (x) \mid x \in X\} \cap cl
  (\bigcup \{\dc f (x) \mid x \in X\})$ for every $L \in \Plotkinn X$.
  The continuity of $f^\natural$ will follow from the following two
  properties, valid for all open subsets $V$ of $Y$:
  \begin{itemize}
  \item $(f^\natural)^{-1} (\Box V) = \Box f^{-1} (\Box V)$.  In order
    to prove this, we need to show that for every $L \in \Plotkinn X$,
    $f^\natural (L) \subseteq V$ if and only if for every $x \in L$,
    $f (x) \subseteq V$.  If the latter holds, then
    $\upc f (x) \subseteq V$ for every $x \in L$ since $V$ is upwards
    closed, so
    $f^\natural (L) \subseteq \bigcup \{\upc f (x) \mid x \in X\}
    \subseteq V$.  Conversely, if $f^\natural (L) \subseteq V$, then
    for every $x \in L$,
    $f (x) \in \upc f (x) \cap \dc f (x) \subseteq f^\natural (L)
    \subseteq V$.
  \item
    $(f^\natural)^{-1} (\Diamond V) = \Diamond f^{-1} (\Diamond V)$.
    We need to show that for every $L \in \Plotkinn X$,
    $f^\natural (L)$ intersects $V$ if and only if $f (x)$ intersects
    $V$ for some $x \in L$.  If $f (x)$ intersects $V$ for some
    $x \in L$, say at $y$, then $y$ is both in $\upc f (x)$ and in
    $\dc f (x)$, hence in their intersection, hence also in the larger
    set $f^\natural (L)$; so $f^\natural (L)$ intersects $V$.
    Conversely, if $f^\natural (L)$ intersects $V$, then then the
    larger set $cl (\bigcup \{\dc f (x) \mid x \in X\})$ also
    intersects $V$.  Since $V$ is open, it must intersect
    $\{\dc f (x) \mid x \in X\}$, hence a set $\dc f (x)$ for some
    $x \in X$.  Since $V$ is upwards closed, $V$ intersects $f (x)$.
  \end{itemize}
  The formulae for $\Plotkinn f$ and $\mu^\pl$ follow immediately.  We
  check Manes' axioms.  In order to do this, it is practical to rely
  on the easily proved fact that given any two continuous functions
  $h, k \colon X \to Y$, where $Y$ is $T_0$, if
  $h^{-1} (V) = k^{-1} (V)$ for every element $V$ of a given subbase
  of the topology of $Y$, then $h=k$.  (If the assumption holds, it
  must hold for every open subset $V$ of $Y$ as well.  Then, for every
  $x \in X$, we show that $h (x) = k (x)$ by showing that $h (x)$ and
  $k (x)$ have the same open neighborhoods $V$: $h (x) \in V$ iff
  $x \in h^{-1} (V)$, iff $x \in k^{-1} (V)$, iff $k (x) \in V$.)  We
  apply this trick to each equation $h = k$ that needs to be proved,
  recalling that all the spaces $\Plotkinn Y$ are $T_0$:
  \begin{itemize}
  \item[\ref{it:Manes:eta}] the inverse image of $\Box U$ under
    $(\eta^\pl_X)^\natural$ is $\Box {(\eta^\pl_X)^{-1} (\Box U)} =
    \Box U$, and similarly the inverse image of $\Diamond U$ is
    $\Diamond U$; $\identity {\Plotkinn X}$ has the same inverse
    images;
  \item[\ref{it:Manes:dagger:1}] the inverse image of $\Box V$ under
    $f^\natural \circ \eta^\pl_X$ is $(\eta^\pl_X)^{-1}
    ((f^\natural)^{-1} (\Box V)) = (\eta^\pl_X)^{-1} (\Box f^{-1}
    (\Box V)) = f^{-1} (\Box V)$, and similarly with $\Diamond$;
  \item[\ref{it:Manes:dagger:2}] the inverse image of $\Box V$ under
    $g^\natural \circ f^\natural$ is
    $(f^\natural)^{-1} ((g^\natural)^{-1} (\Box V)) =
    (f^\natural)^{-1} (\Box g^{-1} (\Box V)) = \Box f^{-1} (\Box
    g^{-1} (\Box V))$, while its inverse image under
    $(g^\natural \circ f)^\natural$ is
    $\Box (g^\natural \circ f)^{-1} (\Box V) = \Box f^{-1}
    ((g^\natural)^{-1} (\Box V)) = \Box f^{-1} (\Box g^{-1} (\Box
    V))$; similarly with $\Diamond$.
  \end{itemize}
%
\end{proof}

We will need the following later on.  For every stably compact space
$X$, we define $\varpi_1 \colon \Plotkinn X \to \SV X$ as mapping $L$
to $\upc L$ and $\varpi_2 \colon \Plotkinn X \to \HV X$ as mapping $L$
to $\dc L$.

\begin{lemma}
  \label{lemma:Pf:L}
  The following hold on $\SComp$:
  \begin{enumerate}
  \item\label{it:Pf:L:varpi} $\varpi_1$ and $\varpi_2$ are continuous;
  \item\label{it:Pf:L:dagger} for every morphism
    $f \colon X \to \Plotkinn Y$,
    $\varpi_1 \circ f^\natural = (\varpi_1 \circ f)^\sharp \circ
    \varpi_1$ and
    $\varpi_2 \circ f^\natural = (\varpi_2 \circ f)^\flat \circ
    \varpi_2$; explicitly, for every $L \in \Plotkinn X$,
    $\upc f^\natural (L) = (x \in X \mapsto \upc f (x))^\sharp (\upc
    L)$ and
    $\dc f^\natural (L) = (x \in X \mapsto \dc f (x))^\flat (\dc L)$;
  \item\label{it:Pf:L:Pf} for every morphism $f \colon X \to Y$,
    $\varpi_1 \circ \Plotkinn f = \SV f \circ \varpi_1$ and
    $\varpi_2 \Plotkinn f = \HV f \circ \varpi_2$;
  \item\label{it:Pf:L:mu}
    $\varpi_1 \circ \mu^\pl = \mu^\Smyth \circ \SV \varpi_1 \circ
    \varpi_1$ and
    $\varpi_2 \circ \mu^\pl = \mu^\Hoare \circ \HV \varpi_2 \circ
    \varpi_2$.
  \end{enumerate}
\end{lemma}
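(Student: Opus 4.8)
The plan is to prove the four items in order, with item~(\ref{it:Pf:L:dagger}) carrying the real content and items~(\ref{it:Pf:L:Pf}) and~(\ref{it:Pf:L:mu}) dropping out as instances of it through Manes' recovery formulas $\Plotkinn f = (\eta^\pl_Y \circ f)^\natural$ and $\mu^\pl_X = \identity{\Plotkinn X}^\natural$. For item~(\ref{it:Pf:L:varpi}) I would just compute inverse images of subbasic opens: since every open set is upwards closed, $\upc L \subseteq U$ iff $L \subseteq U$, so $\varpi_1^{-1}(\Box U) = \Box U$; and $\dc L$ meets $U$ iff $L$ does, so $\varpi_2^{-1}(\Diamond U) = \Diamond U$; both are open in $\Plotkinn X$, so $\varpi_1$ and $\varpi_2$ are continuous.

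For item~(\ref{it:Pf:L:dagger}), fix a morphism $f \colon X \to \Plotkinn Y$ and $L \in \Plotkinn X$, and abbreviate $Q \eqdef \bigcup_{x \in L} \upc f(x)$ and $C \eqdef cl(\bigcup_{x \in L} \dc f(x))$, so that $f^\natural(L) = Q \cap C$ by Proposition~\ref{prop:AD:monad}. The first step is to show that $\upc f^\natural(L) = Q$ and $\dc f^\natural(L) = cl(f^\natural(L)) = C$. For every $x \in L$ we have $f(x) \subseteq \upc f(x) \subseteq Q$ and $f(x) \subseteq \dc f(x) \subseteq C$, hence $f(x) \subseteq Q \cap C$; this gives $\upc f(x) \subseteq \upc(Q \cap C)$, and, using the elementary inclusion $\dc S \subseteq cl(S)$, also $\dc f(x) \subseteq cl(f(x)) \subseteq cl(Q \cap C)$. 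Taking unions (and closures) we get $Q \subseteq \upc(Q \cap C)$ and $C \subseteq cl(Q \cap C)$, while the reverse inclusions $\upc(Q \cap C) \subseteq \upc Q = Q$ and $cl(Q \cap C) \subseteq cl(C) = C$ are immediate; and $\dc f^\natural(L) = cl(f^\natural(L))$ because $f^\natural(L)$ is a lens on a stably compact space (\cite[Lemma~4.2]{GL:duality}).

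The second step bridges from $L$ to $\upc L$ and $\dc L$ using that $f$ is monotone for the specialization orders, the latter being the topological Egli--Milner order $\TEMleq$ on $\Plotkinn Y$: if $x' \geq x \in L$ then $f(x) \TEMleq f(x')$, whence $\upc f(x') \subseteq \upc f(x) \subseteq Q$, so $\bigcup_{x \in \upc L} \upc f(x) = Q$; dually, if $x' \leq x \in L$ then $f(x') \TEMleq f(x)$, whence $\dc f(x') \subseteq \dc f(x)$, so $cl(\bigcup_{x \in \dc L} \dc f(x)) = C$. Since $(\varpi_1 \circ f)(x) = \upc f(x)$ and $(\varpi_2 \circ f)(x) = \dc f(x)$, the defining formulas for the Smyth, resp.\ Hoare, extension operations now give $(\varpi_1 \circ f)^\sharp(\upc L) = Q = \upc f^\natural(L)$ and $(\varpi_2 \circ f)^\flat(\dc L) = C = \dc f^\natural(L)$, which is item~(\ref{it:Pf:L:dagger}). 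Items~(\ref{it:Pf:L:Pf}) and~(\ref{it:Pf:L:mu}) then follow by specialization: instantiating item~(\ref{it:Pf:L:dagger}) at $f \eqdef \eta^\pl_Y \circ g$ and using $\varpi_1 \circ \eta^\pl_Y = \eta^\Smyth_Y$, $\varpi_2 \circ \eta^\pl_Y = \eta^\Hoare_Y$ (because $\upc\{y\} = \upc y$ and $\dc\{y\} = \dc y$) together with $(\eta^\Smyth_Y \circ g)^\sharp = \SV g$ and $(\eta^\Hoare_Y \circ g)^\flat = \HV g$ yields item~(\ref{it:Pf:L:Pf}); instantiating it at $f \eqdef \identity{\Plotkinn X}$ and using $\varpi_1^\sharp = \mu^\Smyth \circ \SV \varpi_1$ and $\varpi_2^\flat = \mu^\Hoare \circ \HV \varpi_2$ (which is the definition of the extension operations of the $\SV$ and $\HV$ monads) yields item~(\ref{it:Pf:L:mu}).

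The one delicate point is inside item~(\ref{it:Pf:L:dagger}), namely identifying $\dc f^\natural(L)$ with $cl(\bigcup_{x \in L} \dc f(x))$: this genuinely uses the stable-compactness hypothesis, through the coincidence of the closure of a lens with its downward closure. All the remaining manipulations are routine bookkeeping with $\upc$, $\dc$ and $cl$.
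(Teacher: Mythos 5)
Your proof is correct and follows essentially the same route as the paper's: continuity of $\varpi_1$, $\varpi_2$ via inverse images of subbasic opens, the substantive item~\ref{it:Pf:L:dagger} proved by combining monotonicity of $f$ for the Egli--Milner order with the identity $\dc(\text{lens}) = cl(\text{lens})$ on stably compact spaces, and items~\ref{it:Pf:L:Pf} and~\ref{it:Pf:L:mu} deduced from item~\ref{it:Pf:L:dagger} through Manes' recovery formulas. The only (harmless) difference is organizational: you first identify $\upc f^\natural(L)$ and $\dc f^\natural(L)$ with the unions taken over $L$ and then pass to $\upc L$ and $\dc L$, whereas the paper proves the two inclusions directly against $(\varpi_1\circ f)^\sharp(\upc L)$ and $(\varpi_2\circ f)^\flat(\dc L)$.
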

\begin{proof}
  \ref{it:Pf:L:varpi}.  For every open subset $V$ of $X$,  the inverse
  image of $\Box V$ (as a subbasic open subset of $\SV X$) by
  $\varpi_1$ is $\Box V$ (as a subbasic open subset of $\Plotkinn
  X$).  Similarly with $\varpi_2$ and $\Diamond$, where we must note
  that $\dc L$ is closed for every lens $L$, since $\dc L = cl (L)$,
  thanks to the fact that $X$ is stably compact.
  
  \ref{it:Pf:L:dagger}.  Let
  $f^- \eqdef (\varpi_1 \circ f) = (x \in X \mapsto \upc f (x)) \colon
  X \to \SV Y$ and
  $f^+ \eqdef (\varpi_2 \circ f) = (x \in X \mapsto \dc f (x)) \colon
  X \to \HV Y$.

  We recall that
  $f^\natural (L) = \bigcup \{\upc f (x) \mid x \in L\} \cap cl
  (\bigcup \{\dc f (x) \mid x \in L\})$, so
  $f^\natural (L) = \bigcup f^- [L] \cap cl (\bigcup f^+ [L])$.  Hence
  $f^\natural (L) \subseteq \bigcup f^- [L] \subseteq \bigcup f^-
  [\upc L] = (f^-)^\sharp (\upc L)$.  Since the latter is upwards
  closed, $\upc f^\natural (L) \subseteq (f^-)^\sharp (\upc L)$.
  Similarly, $\dc f^\natural (L) \subseteq (f^+)^\flat (\dc L)$.
  
  Conversely, every element $y$ of $(f^-)^\sharp (\upc L)$ is in
  $\upc f (x)$ for some $x \in \upc L$.  Let $x' \in L$ be such that
  $x' \leq x$.  Since $f$ is continuous, hence monotonic with
  respective to specialization preorderings, $f (x') \TEMleq f (x)$,
  and in particular $\upc f (x') \supseteq \upc f (x)$; so
  $y \in \upc f (x')$, where $x' \in L$.  Let $y' \in f (x')$ be such
  that $y' \leq y$.  Since $f (x')$ is included in both $\upc f (x')$
  and $\dc f (x')$, it is included in $f^\natural (L)$.  Therefore
  $y \in \upc f^\natural (L)$.

  Similarly, every element $y$ of $f^+ [\dc L]$ is smaller than or
  equal to some element of $f (x)$, for some $x \in \dc L$.  Let
  $x' \in L$ be such that $x \leq x'$.  Since $f$ is monotonic,
  $f (x) \TEMleq f (x')$, and in particular
  $cl (f (x)) \subseteq cl (f (x'))$.  Since closed sets are
  downwards-closed, $y \in cl (f (x))$, so $y \in cl (f (x'))$.  Since
  $f (x')$ is a lens in a compact space, $y \in \dc f (x')$, so there
  is a point $y' \in f (x')$ such that $y \leq y'$.  Now $f (x')$ is
  included in both $\upc f (x')$ and $\dc f (x')$, so it is included
  in $f^\natural (L)$.  Therefore $y \in \dc f^\natural (L)$.  This
  holds for every $y \in f^+ [\dc L]$, and $\dc f^\natural (L)$ is
  closed, since $f^\natural (L)$ is a lens in a stably compact space,
  so $cl (f^+ [\dc L]) \subseteq \dc f^\natural (L)$; namely
  $(f^+)^\flat (\dc L) \subseteq \dc f^\natural (L)$.

  \ref{it:Pf:L:Pf}.  We note that $\varpi_1 \circ \eta^\pl_Y$ maps
  every $y \in Y$ to $\upc \{y\} = \eta^\sharp_Y (y)$.  Then, by
  \ref{it:Pf:L:dagger},
  $\varpi_1 \circ \Plotkinn f = \varpi_1 \circ (\eta^\pl_Y \circ
  f)^\natural = (\varpi_1 \circ \eta^\pl_Y \circ f)^\sharp \circ
  \varpi_1 = (\eta^\sharp_Y \circ f)^\sharp \circ \varpi_1 = \SV f
  \circ \varpi_1$.  We show
  $\varpi_2 \Plotkinn f = \HV f \circ \varpi_2$ similarly.

  \ref{it:Pf:L:mu}.  For every stably compact space $X$,
  $\varpi_1 \circ \mu^\pl_X = \varpi_1 \circ (\identity {\Plotkinn
    X})^\natural = (\varpi_1 \circ \identity {\Plotkinn X})^\sharp
  \circ \varpi_1 = \varpi_1^\sharp \circ \varpi_1$, by
  \ref{it:Pf:L:dagger}; and this is equal to
  $\mu^\Smyth \circ \SV \varpi_1 \circ \varpi_1$.  Similarly with
  $\varpi_2$.
\end{proof}

In this section, $S$ will be $\Plotkinn$, $T$ will be $\Val_\bullet$
(equivalently, $\Pred_\Nature^\bullet$), and $U$ will be the following
monad of forks.  A \emph{fork} on a space $X$ is any pair $(F^-, F^+)$
of a superlinear prevision $F^-$ on $X$ and of a sublinear prevision
$F^+$ on $X$ satisfying \emph{Walley's condition} (or is
\emph{medial}):
\[
  F^- (h+h') \leq F^- (h) + F^+ (h') \leq F^+ (h+h')
\]
for all $h, h' \in \Lform X$ \citep{Gou-csl07,KP:predtrans:pow}.  A
fork is \emph{subnormalized}, resp.\ \emph{normalized} if and only if
both $F^-$ and $F^+$ are.

We write $\Pred_{\ADN} X$ for the set of all forks on $X$, and
$\Pred^{\leq 1}_{\ADN} X$, $\Pred^1_{\ADN} X$ for their subsets of
subnormalized, resp.\ normalized, forks.  The \emph{weak topology} on
each is the subspace topology induced by the inclusion into the larger
space $\Pred_{\DN} X \times \Pred_{\AN} X$.  A subbase of the weak
topology is composed of two kinds of open subsets: $[h > r]^-$,
defined as $\{(F^-, F^+) \mid F^- (h) > r\}$, and $[h > r]^+$, defined
as $\{(F^-, F^+) \mid F^+ (h) > r\}$, where $h \in \Lform X$,
$r \in \real^+$.  The specialization ordering of spaces of forks is
the product ordering $\leq \times \leq$, where $\leq$ denotes the
pointwise ordering on previsions.  In particular, all those spaces of
forks are $T_0$.

We will write $\pi_1$ and $\pi_2$ for first and second projection,
from $\Pred^\bullet_\ADN X$ to $\Pred_\DN^\bullet X$ and
$\Pred_\AN^\bullet X$ respectively.  Those are continuous maps.
Additionally, a function $f \colon X \to \Pred_\ADN^\bullet Y$ is
continuous if and only if $\pi_1 \circ f$ and $\pi_2 \circ f$ are.

The following is similar to \citep[Proposition~3]{Gou-csl07}, except
that the Scott topology was considered instead of the weak topology on
spaces of forks.
\begin{proposition}
  \label{prop:ADN:monad}
  Let $\bullet$ be nothing, ``$\leq 1$'', or ``$1$''.  There is a
  monad
  $(\Pred_\ADN^\bullet, \allowbreak \eta^\DN, \allowbreak \mu^\DN)$ on
  $\Topcat$, whose unit and multiplication are defined at every
  topological space $X$ by:
  \begin{itemize}
  \item for every $x \in X$, $\eta^\ADN_X (x) \eqdef (\eta^\DN_X (x),
    \eta^\AN_X (x))$;
  \item for every
    $(\mathcal F^-, \mathcal F^+) \in \Pred_\ADN^\bullet
    {\Pred_\ADN^\bullet X}$,
    $\mu^\ADN_X (\mathcal F^-, \mathcal F^+) \eqdef (\pi_1^{\ext \DN}
    (\mathcal F^-), \pi_2^{\ext \AN} (\mathcal F^+))$.
  \end{itemize}
  Its action on morphisms $f \colon X \to Y$ is given by
  $\Pred_\ADN^\bullet (f) (F^-, F^+) = (\Pred_\DN^\bullet (f) (F^-),
  \Pred_\AN^\bullet (f) (F^+))$.  For every continuous map
  $f \colon X \to \Pred_\ADN^\bullet Y$, its extension $f^{\ext \ADN}$
  is defined by
  $f^{\ext \ADN} (F^-, F^+) \eqdef ((\pi_1 \circ f)^{\ext \DN} (F^-),
  (\pi_2 \circ f)^{\ext \AN} (F^+))$.
\end{proposition}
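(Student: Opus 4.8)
The plan is to mimic the proof of Proposition~\ref{prop:DN:monad}, relying on Manes' characterization of a monad through units and extension operations, and bootstrapping from the already-established monad structures on $\Pred_\DN^\bullet$ and $\Pred_\AN^\bullet$.  Throughout, given a fork $G$ on a space I write $G = (G^-, G^+)$ for its two components; I use repeatedly that a map into a space of forks is continuous if and only if its composites with $\pi_1$ and $\pi_2$ are, and that, the specialization ordering on spaces of forks being a product of pointwise orderings, every continuous prevision occurring below is monotone.

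\emph{Well-definedness of the data.}  First I would check that $\eta^\ADN_X(x) = (\eta^\DN_X(x), \eta^\AN_X(x))$ is a fork: since $\eta^\DN_X(x)(h) = h(x) = \eta^\AN_X(x)(h)$, Walley's condition reads $h(x) + h'(x) \leq h(x) + h'(x) \leq h(x) + h'(x)$ and is trivial; normalization (resp.\ subnormalization) when $\bullet$ is ``$1$'' (resp.\ ``$\leq 1$'') is inherited from $\eta^\DN_X(x)$ and $\eta^\AN_X(x)$; and $\eta^\ADN_X$ is continuous because $\pi_1 \circ \eta^\ADN_X = \eta^\DN_X$ and $\pi_2 \circ \eta^\ADN_X = \eta^\AN_X$ are.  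Next, for continuous $f \colon X \to \Pred_\ADN^\bullet Y$, both $\pi_1 \circ f$ and $\pi_2 \circ f$ are continuous, so $(\pi_1 \circ f)^{\ext \DN}(F^-)$ is a superlinear prevision and $(\pi_2 \circ f)^{\ext \AN}(F^+)$ is a sublinear one by Proposition~\ref{prop:DN:monad}; moreover $f^{\ext \ADN}$ is continuous because $\pi_1 \circ f^{\ext \ADN} = (\pi_1 \circ f)^{\ext \DN} \circ \pi_1$ and $\pi_2 \circ f^{\ext \ADN} = (\pi_2 \circ f)^{\ext \AN} \circ \pi_2$ are composites of continuous maps, and the subnormalization/normalization clauses transfer componentwise from Proposition~\ref{prop:DN:monad}.

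\emph{The crux: $f^{\ext \ADN}(F^-, F^+)$ is a fork.}  This is the one step I expect to require genuine work rather than just unwinding definitions.  Writing $G^- \eqdef (\pi_1 \circ f)^{\ext \DN}(F^-)$ and $G^+ \eqdef (\pi_2 \circ f)^{\ext \AN}(F^+)$, so that $G^-(h) = F^-(x \mapsto f(x)^-(h))$ and $G^+(h) = F^+(x \mapsto f(x)^+(h))$ for $h \in \Lform Y$, I would prove $G^-(h+h') \leq G^-(h) + G^+(h')$ by combining the pointwise inequality $f(x)^-(h+h') \leq f(x)^-(h) + f(x)^+(h')$ (Walley's condition for the fork $f(x)$) with monotonicity of $F^-$, and then applying the first Walley inequality for $(F^-, F^+)$ to the pair $(x \mapsto f(x)^-(h))$, $(x \mapsto f(x)^+(h'))$ of elements of $\Lform X$.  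Symmetrically, $G^-(h) + G^+(h') \leq G^+(h+h')$ follows from the second Walley inequality for $(F^-, F^+)$ applied to the same pair, then monotonicity of $F^+$ together with the pointwise inequality $f(x)^-(h) + f(x)^+(h') \leq f(x)^+(h+h')$.  Hence $(G^-, G^+)$ is medial.

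\emph{Manes' axioms and the recovered formulae.}  These are all componentwise.  For~\ref{it:Manes:eta}, $(\eta^\ADN_X)^{\ext \ADN} = ((\eta^\DN_X)^{\ext \DN}, (\eta^\AN_X)^{\ext \AN})$, which is the identity by Proposition~\ref{prop:DN:monad}.  For~\ref{it:Manes:dagger:1}, at $x \in X$ one gets $f^{\ext \ADN}(\eta^\ADN_X(x)) = ((\pi_1 \circ f)^{\ext \DN}(\eta^\DN_X(x)), (\pi_2 \circ f)^{\ext \AN}(\eta^\AN_X(x))) = ((\pi_1 \circ f)(x), (\pi_2 \circ f)(x)) = f(x)$.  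For~\ref{it:Manes:dagger:2}, the key observation is that $\pi_1 \circ (g^{\ext \ADN} \circ f) = (\pi_1 \circ g)^{\ext \DN} \circ (\pi_1 \circ f)$ and likewise with $\pi_2$, after which both sides of the axiom have $\pi_1$-component $(\pi_1 \circ g)^{\ext \DN} \circ (\pi_1 \circ f)^{\ext \DN}$ and $\pi_2$-component $(\pi_2 \circ g)^{\ext \AN} \circ (\pi_2 \circ f)^{\ext \AN}$ by Proposition~\ref{prop:DN:monad}.  Finally, Manes' recovery formulae yield $\Pred_\ADN^\bullet(f) = (\eta^\ADN_Y \circ f)^{\ext \ADN} = (\Pred_\DN^\bullet(f)(F^-), \Pred_\AN^\bullet(f)(F^+))$ and $\mu^\ADN_X = (\identity{\Pred_\ADN^\bullet X})^{\ext \ADN} = (\pi_1^{\ext \DN}(\mathcal F^-), \pi_2^{\ext \AN}(\mathcal F^+))$, matching the stated expressions.
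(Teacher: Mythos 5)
Your proposal is correct and follows essentially the same route as the paper: verify Walley's condition for the unit and for the extension (the latter by exactly the same two chains of inequalities, alternating the pointwise Walley condition for each $f(x)$ with monotonicity and the Walley condition for $(F^-,F^+)$), handle continuity and (sub)normalization componentwise via $\pi_1$ and $\pi_2$, and reduce Manes' axioms to those of $\Pred_\DN^\bullet$ and $\Pred_\AN^\bullet$. The only difference is cosmetic: you spell out the Manes verifications that the paper dismisses as immediate.
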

\begin{proof}
  First, $\eta^\ADN_X (x)$ is a fork for every $x \in X$.  For this,
  we need to check Walley's condition, which reduces to
  $(h+h') (x) \leq h (x) + h' (x) \leq (h+h') (x)$.  The inverse image
  of $[h > r]^+$ is $(\eta^\DN_X)^{-1} ([h > r])$, which is open, and
  similarly with $[h > r]^-$, so $\eta^\ADN_X$ is continuous.

  Second, we check that $f^{\ext \ADN} (F^-, F^+)$ satisfies Walley's
  condition, where $f$ is any continuous map
  $f \colon X \to \Pred_\ADN^\bullet Y$ and
  $(F^-, F^+) \in \Pred_\ADN^\bullet X$.  For all
  $h, h' \in \Lform X$, we need to show that
  $(\pi_1 \circ f)^{\ext \DN} (F^-) (h+h') \leq (\pi_1 \circ f)^{\ext \DN} (F^-)
  (h) + (\pi_2 \circ f)^{\ext \AN} (F^+) (h') \leq (\pi_2 \circ f)^{\ext \AN}
  (F^+) (h+h')$, as follows:
  \begin{align*}
    & (\pi_1 \circ f)^{\ext \DN} (F^-) (h+h') \\
    & = F^- (x \in X \mapsto \pi_1 (f (x)) (h+h')) \\
    & \leq F^- (x \in X \mapsto \pi_1 (f (x)) (h) + \pi_2 (f (x))
      (h')) \\
    & \quad \text{since $f (x)$ satisfies Walley's condition and $F^-$
      is monotonic} \\
    & \leq F^- (x \in X \mapsto \pi_1 (f (x)) (h)) + F^+ (x \in X
      \mapsto \pi_2 (f (x)) (h')) \\
    & \quad \text{since $(F^-, F^+)$ satisfies Walley's condition} \\
    & = (\pi_1 \circ f)^{\ext \DN} (F^-)  (h) + (\pi_2 \circ f)^{\ext \AN} (F^+)
      (h')
  \end{align*}
  and:
  \begin{align*}
    & (\pi_1 \circ f)^{\ext \DN} (F^-) (h) + (\pi_2 \circ f)^{\ext \AN} (F^+)
      (h') \\
    & = F^- (x \in X \mapsto \pi_1 (f (x)) (h)) + F^+ (x \in X
      \mapsto \pi_2 (f (x)) (h')) \\
    & \leq F^+ (x \in X \mapsto \pi_1 (f (x)) (h) + \pi_2 (f (x))
      (h')) \\
    & \quad \text{since $(F^-, F^+)$ satisfies Walley's condition} \\
    & \leq F^+ (x \in X \mapsto \pi_2 (f (x)) (h+h')) \\
    & \quad \text{since $f (x)$ satisfies Walley's condition and $F^+$
      is monotonic} \\
    & = (\pi_2 \circ f)^{\ext \AN} (F^+) (h+h').
  \end{align*}
  The fact that $f^{\ext \ADN} (F^-, F^+)$ is subnormalized if $\bullet$ is
  ``$\leq 1$'' and normalized if $\bullet$ is ``$1$'' follows from the
  fact that $(\pi_1 \circ f)^{\ext \DN} (F^-)$ and
  $(\pi_2 \circ f)^{\ext \AN} (F^+)$ are.  In order to show that $f^{\ext \ADN}$
  is continuous, it suffices to observe that $\pi_1 \circ f^{\ext \ADN}$
  and $\pi_2 \circ f^{\ext \ADN}$ are.  But
  $\pi_1 \circ f^{\ext \ADN} = (\pi_1 \circ f)^{\ext \DN} \circ \pi_1$, and
  similarly with $\pi_2$.

  The monad equations, in the form of Manes' equations
  \ref{it:Manes:eta}, \ref{it:Manes:dagger:1} and
  \ref{it:Manes:dagger:2}, follow immediately from the same equations
  for the $\Pred_\DN^\bullet$ and $\Pred_\AN^\bullet$ monads.  The
  formulae for multiplication and $\Pred_\ADN^\bullet (f)$ also follow
  immediately.
%
\end{proof}

\subsection{The distributing retraction $r_\ADN$, $s_\ADN^\bullet$}
\label{sec:distr-retr-r_adn}

By \citep[Proposition~3.32]{JGL-mscs16}, for every stably compact space
$X$, there is a retraction of $\Plotkinn \Pred_\Nature^\bullet X$ onto
$\Pred_\ADN^\bullet X$, defined by:
\begin{align*}
  r_\ADN
  & \colon \Plotkinn \Pred_\Nature^\bullet X \to
    \Pred_\ADN^\bullet X
  & s_\ADN^\bullet
  & \colon \Pred_\ADN^\bullet X \to \Plotkinn \Pred_\Nature^\bullet X
  \\
  & \quad L \mapsto (r_\DN (L), r_\AN (L))
  && \quad (F^-, F^+) \mapsto s_\DN^\bullet (F^-) \cap s_\DN^\bullet (F^+),
\end{align*}
where $r_\DN$ and $r_\AN$ are extended to lenses by the usual
formulae: $r_\DN (L) (h) \eqdef \min_{G \in L} G (h)$,
$r_\AN (L) (h) \eqdef \sup_{G \in L} G (h)$.  It is clear that
$r_\DN (L)$ is equal to $r_\DN (\upc L)$, and $r_\AN (L)$ is equal to
$r_\AN (cl (L))$ by Lemma~\ref{lemma:sup:cl}, hence to $r_\AN (\dc L)$
on a stably compact space.  This defines a homeomorphism between the
subspace $\Plotkinn^{cvx} \Pred_\Nature^\bullet X$ of convex lenses
and $\Pred_\ADN^\bullet X$ \citep[Proposition~4.8]{JGL-mscs16}.

As a result, the $\Pred_\ADN^\bullet$ monad restricts to $\SComp$.
This is a consequence of the following.
\begin{proposition}
  \label{prop:U:scomp}
  Let $\bullet$ be nothing, ``$\leq 1$'' or ``$1$''.  For every stably
  compact space $X$, $\Pred_\Nature^\bullet X$, $\Val_\bullet X$ and
  $\Pred_\ADN^\bullet X$ are stably compact.
\end{proposition}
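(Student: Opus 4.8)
The plan is to reduce everything to the valuation powerspace and to the lens hyperspace $\Plotkinn$, using that stable compactness passes to retracts. By Lemma~\ref{lemma:V=PN}, $\Val_\bullet X$ and $\Pred_\Nature^\bullet X$ are homeomorphic, so it suffices to treat one of them; and that $\Val_\bullet X$ is stably compact for every stably compact space $X$ is a known fact of the probabilistic powerspace literature (this subsumes the local compactness, and the compactness when $\bullet$ is ``$1$'', that were quoted from \cite[Theorem~4.1]{JGL:vlc} earlier). I would simply invoke that result.

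The second ingredient I would establish is: a retract of a stably compact space is stably compact. So let $r \colon Y \to X$ be a continuous retraction with continuous section $s$ (so $r \circ s = \identity{X}$), with $Y$ stably compact. Compactness of $X = r[Y]$ is immediate, being the continuous image of the compact space $Y$. Local compactness of $X$ follows from the argument recalled in the discussion after Definition~\ref{defn:Top:flat}, which shows that retracts of locally compact spaces are locally compact (following \cite[Proposition~2.17]{Jung:scs:prob}). Sobriety of retracts of sober spaces is standard: $X$ is $T_0$ since $s$ embeds it into $Y$, and if $C$ is an irreducible closed subset of $X$ then $\overline{s[C]}$ is irreducible and closed in $Y$, hence has a unique generic point $y$ by sobriety of $Y$, and one checks that $r(y)$ is the unique generic point of $C$. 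Finally, for coherence, given compact saturated subsets $K_1, K_2$ of $X$, the sets $\upc s[K_1]$ and $\upc s[K_2]$ are compact saturated in $Y$ (a compact set has compact saturation), hence so is $\upc s[K_1] \cap \upc s[K_2]$ by coherence of $Y$; using the monotonicity of $r$ and $s$ for the specialization orderings and the saturatedness of $K_1$ and $K_2$, one verifies that $K_1 \cap K_2 = r[\upc s[K_1] \cap \upc s[K_2]]$, so that $K_1 \cap K_2$ is compact.

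Putting the two ingredients together: $\Pred_\Nature^\bullet X$ is stably compact, hence so is $\Plotkinn \Pred_\Nature^\bullet X$, since $\Plotkinn$ preserves stable compactness by \cite[Theorem~6.7, item~2]{Lawson:scomp} (as already used in the proof of Proposition~\ref{prop:AD:monad}); and $\Pred_\ADN^\bullet X$ is a retract of $\Plotkinn \Pred_\Nature^\bullet X$ through $r_\ADN$ and $s_\ADN^\bullet$, so it is stably compact as well. I expect the genuine difficulty to be concentrated in the cited input that $\Val_\bullet X$ is stably compact---sobriety and coherence of the weak topology are the delicate points there---whereas in the retract-closure argument only the coherence clause needs more than one line, precisely because of the bookkeeping between saturations taken in $Y$ and in $X$.
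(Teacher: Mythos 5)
Your proposal is correct and follows essentially the same route as the paper: cite stable compactness of $\Val_\bullet X$ (hence of the homeomorphic $\Pred_\Nature^\bullet X$), invoke Lawson's result that $\Plotkinn$ preserves stable compactness, and conclude via closure of stable compactness under retracts applied to $r_\ADN$, $s_\ADN^\bullet$. The only difference is that you prove the retract-closure lemma inline (correctly, including the coherence bookkeeping via $K_1 \cap K_2 = r[\upc s[K_1] \cap \upc s[K_2]]$), whereas the paper simply cites it from \cite[Proposition~2.17]{Jung:scs:prob}.
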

\begin{proof}
  $\Val_\bullet X$ is stably compact: this is due to
  \citet{Jung:scs:prob,AMJK:scs:prob} when $\bullet$ is ``$\leq 1$'' or
  ``$1$'', and to \citet[Corollary~1]{Plotkin:alaoglu} in the general
  case.  It follows that the homeomorphic space
  $\Pred_\Nature^\bullet X$ is also stably compact.  We have already
  noted that $\Plotkinn X$ is stably compact \citep[Theorem~6.7,
  item~2]{Lawson:scomp}.  Hence $\Plotkinn \Pred_\Nature^\bullet X$ is
  stably compact, and we conclude since any retract of a stably
  compact space is stably compact
  \citep[Proposition~2.17]{Jung:scs:prob}.  (The result is originally
  due to 
  \citet[page 154, second paragraph]{Lawson:versatile},
  who mentions it in terms of compact subersober spaces.)
\end{proof}

\begin{lemma}
  \label{lemma:sADP:updown}
  Let $X$ be a stably compact space.  For every
  $(F^-, F^+) \in \Pred_\ADN^\bullet X$,
  $\upc s_\ADN^\bullet (F^-, F^+) = s_\DN^\bullet (F^-)$ and
  $\dc s_\ADN^\bullet (F^-, F^+) = s_\AN^\bullet (F^+)$.
\end{lemma}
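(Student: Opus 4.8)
The plan is to feed $\upc L$ and $\dc L$, where $L \eqdef s_\ADN^\bullet(F^-, F^+)$, into the homeomorphisms $\SV^{cvx}\Pred_\Nature^\bullet X \cong \Pred_\DN^\bullet X$ and $\HV^{cvx}\Pred_\Nature^\bullet X \cong \Pred_\AN^\bullet X$ recalled above, working throughout through the isomorphism $\Val_\bullet \cong \Pred_\Nature^\bullet$, so that $L$ is a lens in $\Plotkinn\Pred_\Nature^\bullet X$. By definition $L = s_\DN^\bullet(F^-) \cap s_\AN^\bullet(F^+)$, the set of linear previsions above $F^-$ intersected with those below $F^+$; since $s_\DN^\bullet$ and $s_\AN^\bullet$ take only convex values, $L$ is convex. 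Moreover $r_\DN(L) = F^-$ and $r_\AN(L) = F^+$, which is immediate from $r_\ADN(L) = (r_\DN(L), r_\AN(L))$ together with the fact that $r_\ADN \circ s_\ADN^\bullet$ is the identity on $\Pred_\ADN^\bullet X$.

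The one step worth highlighting is that $\upc L$ is convex, and dually $\dc L$ is convex. This is a short monotonicity argument: given $G_1, G_2 \in \upc L$, choose $H_1, H_2 \in L$ with $H_1 \leq G_1$ and $H_2 \leq G_2$; for $a \in [0,1]$ the convex combination $aH_1 + (1-a)H_2$ lies in $L$ by convexity of $L$ and satisfies $aH_1 + (1-a)H_2 \leq aG_1 + (1-a)G_2$, convex combinations being monotone in each argument with respect to the pointwise order (which is the specialization order of $\Pred_\Nature^\bullet X$), so $aG_1 + (1-a)G_2 \in \upc L$; the symmetric argument gives convexity of $\dc L$. Next, $\upc L$ is compact saturated, because a lens $Q \cap C$ is compact (closed in the compact space $Q$) and hence its saturation is compact saturated; and $\dc L = cl(L)$ is closed, because $\Pred_\Nature^\bullet X$ is stably compact by Proposition~\ref{prop:U:scomp}, so on it the closure of a lens coincides with its downward closure. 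Since $\Pred_\Nature^\bullet X$ is stably compact, $r_\DN$ and $s_\DN^\bullet$ restrict to a homeomorphism between its convex compact saturated subsets and $\Pred_\DN^\bullet X$; applying the fact that $s_\DN^\bullet \circ r_\DN$ is the identity on convex compact saturated sets to $\upc L$, and using $r_\DN(\upc L) = r_\DN(L)$ (valid on a stably compact space, as $r_\DN$ factors through $\upc$), we get $\upc L = s_\DN^\bullet(r_\DN(L)) = s_\DN^\bullet(F^-)$. The Hoare side is entirely symmetric, using the homeomorphism between convex closed subsets of $\Pred_\Nature^\bullet X$ and $\Pred_\AN^\bullet X$, the identity $r_\AN(L) = r_\AN(cl(L)) = r_\AN(\dc L)$ from Lemma~\ref{lemma:sup:cl}, and the convexity and closedness of $\dc L$, to obtain $\dc L = s_\AN^\bullet(r_\AN(\dc L)) = s_\AN^\bullet(F^+)$.

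I do not expect a genuine obstacle; the proof is an assembly of facts already established. The only thing to be careful about is checking that all the cited statements — the two convex-element homeomorphisms, the retraction $r_\ADN/s_\ADN^\bullet$, the identities $r_\DN(L) = r_\DN(\upc L)$ and $r_\AN(L) = r_\AN(\dc L)$, and ``closure of a lens equals its downward closure'' — really apply in $\Pred_\Nature^\bullet X$, which they do: by Proposition~\ref{prop:U:scomp}, $\Pred_\Nature^\bullet X$ is stably compact, hence in particular locally compact, sober and $\AN_\bullet$-friendly, which is all those results require.
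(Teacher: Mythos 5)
Your proposal is correct and follows essentially the same route as the paper: both proofs observe that $L \eqdef s_\ADN^\bullet(F^-,F^+)$ is convex as an intersection of convex sets, deduce that $\upc L$ (resp.\ $\dc L$) is convex, use the retraction $r_\ADN \circ s_\ADN^\bullet = \mathrm{id}$ together with $r_\DN(\upc L)=r_\DN(L)$ to get $r_\DN(\upc L)=F^-$, and then apply the fact that $s_\DN^\bullet \circ r_\DN$ fixes convex elements of $\SV\Pred_\Nature^\bullet X$ to conclude $\upc L = s_\DN^\bullet(F^-)$, dually for the Hoare side. The only difference is that you spell out details the paper leaves implicit (the monotonicity argument for convexity of $\upc L$, and the checks that $\upc L$ is compact saturated and $\dc L$ closed), which is fine.
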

\begin{proof}
  The set $s_\ADN^\bullet (F^-, F^+)$ is convex, being the
  intersection of two convex sets, and therefore its upward closure is
  convex, too.  Therefore
  $s_\DN^\bullet (r_\DN (\upc s_\ADN^\bullet (F^-, F^+))) = \upc
  s_\ADN^\bullet (F^-, \allowbreak F^+)$, since $r_\DN$ and
  $s_\DN^\bullet$ form a homeomorphism between
  $\SV \Pred_\Nature^\bullet X$ and $\Pred_\DN^\bullet X$.  Since
  $r_\ADN$ and $s_\ADN^\bullet$ form a retraction,
  $r_\ADN (s_\ADN^\bullet (F^-, F^+)) = (F^-, F^+)$, and therefore,
  looking at first components,
  $r_\DN (\upc s_\ADN^\bullet (F^-, F^+)) = F^-$.  It follows that
  $s_\DN^\bullet (F^-) = s_\DN^\bullet (r_\DN (\upc s_\ADN^\bullet
  (F^-, F^+))) = \upc s_\ADN^\bullet (F^-, F^+)$.  We prove that
  $s_\AN^\bullet (F^+) = \dc s_\ADN^\bullet (F^-, F^+)$ similarly.
\end{proof}

\begin{lemma}
  \label{lemma:rADP:nat}
  The transformations $r_\ADN$ and $s_\ADN^\bullet$ are natural on
  $\SComp$.
\end{lemma}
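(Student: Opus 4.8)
The plan is to reduce the two naturality squares to ones already established, using that a fork is determined by its pair of components and a lens on a stably compact space by the pair $(\upc L,\dc L)$. Concretely: the space of forks $\Pred_\ADN^\bullet Z$ carries the subspace topology from $\Pred_\DN^\bullet Z\times\Pred_\AN^\bullet Z$, so two continuous maps into it agree as soon as their composites with $\pi_1$ and with $\pi_2$ agree; and $\Pred_\Nature^\bullet Z$ is stably compact by Proposition~\ref{prop:U:scomp}, so every lens $L$ in it satisfies $L=\upc L\cap\dc L$, whence two maps into $\Plotkinn\Pred_\Nature^\bullet Z$ agree as soon as their composites with $\varpi_1$ and with $\varpi_2$ agree. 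I would also record the four factorizations $\pi_1\circ r_\ADN=r_\DN\circ\varpi_1$ and $\pi_2\circ r_\ADN=r_\AN\circ\varpi_2$ (using $r_\DN(L)=r_\DN(\upc L)$ and $r_\AN(L)=r_\AN(\dc L)$ on stably compact spaces, as noted just before the lemma), together with $\varpi_1\circ s_\ADN^\bullet=s_\DN^\bullet\circ\pi_1$ and $\varpi_2\circ s_\ADN^\bullet=s_\AN^\bullet\circ\pi_2$ (these are Lemma~\ref{lemma:sADP:updown}). Finally, $\pi_1$ and $\pi_2$ are natural transformations out of $\Pred_\ADN^\bullet$ straight from the formula for $\Pred_\ADN^\bullet(f)$ in Proposition~\ref{prop:ADN:monad}, and $\varpi_1,\varpi_2$ are natural by Lemma~\ref{lemma:Pf:L}, item~\ref{it:Pf:L:Pf}. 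Here I also use that every stably compact space is locally compact, hence $\AN_\bullet$-friendly, and that $\HV$, $\Pred_\Nature^\bullet$, $\Pred_\AN^\bullet$ preserve stable compactness (cf.\ Proposition~\ref{prop:U:scomp}), so $\SComp$ is an instance of a category $\Topcat^\flat$ as in Definition~\ref{defn:Top:flat} and the $\AN$-side lemmas apply.

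For $r_\ADN$, I would fix a morphism $f\colon X\to Y$ in $\SComp$, post-compose with $\pi_1$, and chase: $\pi_1\circ\Pred_\ADN^\bullet f\circ r_\ADN=\Pred_\DN^\bullet f\circ\pi_1\circ r_\ADN=\Pred_\DN^\bullet f\circ r_\DN\circ\varpi_1=r_\DN\circ\SV\Pred_\Nature^\bullet f\circ\varpi_1$ (naturality of $r_\DN$, Lemma~\ref{lemma:rDP:nat}) $=r_\DN\circ\varpi_1\circ\Plotkinn\Pred_\Nature^\bullet f$ (naturality of $\varpi_1$, Lemma~\ref{lemma:Pf:L}, item~\ref{it:Pf:L:Pf}) $=\pi_1\circ r_\ADN\circ\Plotkinn\Pred_\Nature^\bullet f$. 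The identical chase with $\pi_2$, $\varpi_2$, $r_\AN$, $\HV$ and Lemma~\ref{lemma:rAP:nat} in place of Lemma~\ref{lemma:rDP:nat} handles the second component. Since $(\pi_1,\pi_2)$ is jointly monic, this yields $r_\ADN\circ\Plotkinn\Pred_\Nature^\bullet f=\Pred_\ADN^\bullet f\circ r_\ADN$, i.e.\ naturality of $r_\ADN$.

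For $s_\ADN^\bullet$, I would again fix $f\colon X\to Y$ in $\SComp$, post-compose with $\varpi_1$, and chase: $\varpi_1\circ\Plotkinn\Pred_\Nature^\bullet f\circ s_\ADN^\bullet=\SV\Pred_\Nature^\bullet f\circ\varpi_1\circ s_\ADN^\bullet=\SV\Pred_\Nature^\bullet f\circ s_\DN^\bullet\circ\pi_1$ (Lemma~\ref{lemma:sADP:updown}) $=s_\DN^\bullet\circ\Pred_\DN^\bullet f\circ\pi_1$ (naturality of $s_\DN^\bullet$, Lemma~\ref{lemma:rDP:nat}) $=s_\DN^\bullet\circ\pi_1\circ\Pred_\ADN^\bullet f$ (naturality of $\pi_1$) $=\varpi_1\circ s_\ADN^\bullet\circ\Pred_\ADN^\bullet f$; and the analogous chase with $\varpi_2$, $\pi_2$, $s_\AN^\bullet$, $\HV$ and Lemma~\ref{lemma:rAP:nat}. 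Because a lens on a stably compact space is recovered from its upward and downward closures, this forces $s_\ADN^\bullet\circ\Pred_\ADN^\bullet f=\Plotkinn\Pred_\Nature^\bullet f\circ s_\ADN^\bullet$.

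The step I expect to be the only real subtlety is that $s_\ADN^\bullet$ is assembled from $s_\DN^\bullet$ and $s_\AN^\bullet$ by an \emph{intersection} rather than a product, so it does not split coordinatewise; Lemma~\ref{lemma:sADP:updown} is precisely what re-expresses $\varpi_1\circ s_\ADN^\bullet$ and $\varpi_2\circ s_\ADN^\bullet$ through the already-natural pieces, after which both squares become pure diagram chases. The remaining bookkeeping — checking that $\SComp$ legitimately sits inside a $\Topcat^\flat$ so that Lemmas~\ref{lemma:rAP:nat} and~\ref{lemma:sADP:updown} are available there — is routine.
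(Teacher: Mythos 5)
Your proposal is correct and follows essentially the same route as the paper: both reduce the two naturality squares to those of $r_\DN$, $r_\AN$, $s_\DN^\bullet$, $s_\AN^\bullet$ via the component maps $\pi_1,\pi_2$ and $\varpi_1,\varpi_2$, using Lemma~\ref{lemma:sADP:updown} and Lemma~\ref{lemma:Pf:L}, item~\ref{it:Pf:L:Pf}. The only difference is presentational --- the paper computes both sides elementwise and lands directly on the defining intersection, whereas you check components and invoke joint monicity of the projections --- which changes nothing of substance.
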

\begin{proof}
  Let $f \colon X \to Y$ be a continuous map.  For every
  $L \in \Plotkinn \Pred_\Nature^\bullet X$,
  \begin{align*}
    (\Pred_\ADN^\bullet (f) \circ r_\ADN) (L)
    & = (\Pred_\DN^\bullet (f) (r_\DN (\upc L)), \Pred_\AN^\bullet (f)
      (r_\AN (\dc L))) \\
    & = (r_\DN (\SV \Pred_\Nature^\bullet (f) (\upc L)),
      r_\AN (\HV \Pred_\Nature^\bullet (f) (\dc L)))
    & \text{since $r_\DN$, $r_\AN$ are natural,}
  \end{align*}
  while $(r_\ADN \circ \Plotkinn \Pred_\Nature^\bullet (f)) (L)$ is
  equal to $r_\ADN (L')$ where
  $L' \eqdef \Plotkinn \Pred_\Nature^\bullet (f)) (L)$.  Now
  $r_\ADN (L') = (r_\DN (\upc L'), r_\AN (\dc L'))$, which is equal to
  $(r_\DN (\SV \Pred_\Nature^\bullet (f) (\upc L)), r_\AN (\HV
  \Pred_\Nature^\bullet (f) (\dc L)))$ by Lemma~\ref{lemma:Pf:L},
  item~\ref{it:Pf:L:Pf}.

  As for $s_\ADN^\bullet$,
  \begin{align*}
    & (\Plotkinn \Pred_\Nature^\bullet (f) \circ s_\ADN^\bullet) (F^-,
      F^+) \\
    & = \upc \Pred_\Nature^\bullet (f) [\upc s_\ADN^\bullet (F^-,
      F^+)]
      \cap cl (\Pred_\Nature^\bullet (f) [\dc s_\ADN^\bullet (F^-,
      F^+)])
      & \text{by def.\ of $\Plotkinn$ on morphisms} \\
    & = \upc \Pred_\Nature^\bullet (f) [s_\DN^\bullet (F^-)] \cap
      cl (\Pred_\Nature^\bullet (f) [s_\AN^\bullet (F^+)])
    & \text{by Lemma~\ref{lemma:sADP:updown}} \\
    & = \SV \Pred_\Nature^\bullet (f) (s_\DN^\bullet (F^-))
      \cap \HV \Pred_\Nature^\bullet (f) (s_\AN^\bullet (F^+))
    & \text{by definition of $\SV$, $\HV$} \\
    & = s_\DN^\bullet (\Pred_\DN^\bullet (f) (F^-))
      \cap s_\AN^\bullet (\Pred_\AN^\bullet (f) (F^+))
    & \text{nat.\ of $s_\DN^\bullet$, $s_\AN^\bullet$
      (Lemmas~\ref{lemma:rDP:nat}, \ref{lemma:rAP:nat})} \\
    & = s_\ADN^\bullet (\Pred_\ADN^\bullet (f) (F^-, F^+))
    & \text{by definition of $s_\ADN^\bullet$.}
  \end{align*}
\end{proof}

Now that we know that $r_\ADN$ and $s_\ADN^\bullet$ are natural, we
can reason 2-categorically, in the category (0-cell) $\SComp$.  The
following equations hold, where $f$ ranges over the 2-cells (natural
transformations) from some 1-cell $F$ to some 1-cell $G$.
(\ref{eq:ADN:STU:s}) is Lemma~\ref{lemma:sADP:updown},
(\ref{eq:ADN:STU:mu:var}) and (\ref{eq:ADN:STU:P:var}) are
Lemma~\ref{lemma:Pf:L}, items~\ref{it:Pf:L:mu} and~\ref{it:Pf:L:Pf}
respectively; (\ref{eq:ADN:STU:mu}) is because
$\pi_1 \circ \mu^\ADN = \pi_1^{\ext \DN} \circ \pi_1 U = \mu^\DN \circ
\Pred_\DN^\bullet \pi_1 \circ \pi_1 U = \mu^\DN \circ \pi_1 \pi_1$ and
similarly with $\pi_2$; the other equations are easy.
\begin{align}
  \label{eq:ADN:STU:r}
  \pi_1 \circ r_\ADN & = r_\DN \circ \varpi_1 T
  & \pi_2 \circ r_\ADN & = r_\AN \circ \varpi_2 T \\
  \label{eq:ADN:STU:s}
  \varpi_1 T \circ s_\ADN^\bullet & = s_\DN^\bullet \circ \pi_1
  & \varpi_2 T \circ s_\ADN^\bullet & = s_\AN^\bullet \circ \pi_2 \\
  \label{eq:ADN:STU:eta}
  \pi_1 \circ \eta^\ADN & = \eta^\DN
  & \pi_2 \circ \eta^\ADN & = \eta^\AN \\
  \label{eq:ADN:STU:mu}
  \pi_1 \circ \mu^\ADN & = \mu^\DN \circ \pi_1 \pi_1 
  & \pi_2 \circ \mu^\ADN & = \mu^\AN \circ \pi_2 \pi_2 
  \\
  \label{eq:ADN:STU:P}
  \pi_1 G \circ \Pred_\ADN^\bullet (f)
                     & = \Pred_\DN^\bullet (f) \circ \pi_1 F
  &\pi_2 G \circ \Pred_\ADN^\bullet (f)
                       & = \Pred_\AN^\bullet (f) \circ \pi_2 F \\
  \label{eq:ADN:STU:eta:var}
  \varpi_1 \circ \eta^\pl & = \eta^\Smyth
  & \varpi_2 \circ \eta^\pl & = \eta^\Hoare
  \\
  \label{eq:ADN:STU:mu:var}
  \varpi_1 \circ \mu^\pl & = \mu^\Smyth \circ \varpi_1 \varpi_1
  & \varpi_2 \circ \mu^\pl & = \mu^\Hoare \circ \varpi_2 \varpi_2 \\
  \label{eq:ADN:STU:P:var}
  \varpi_1 G \circ \Plotkinn f
                     & = \SV f \circ \varpi_1 F
  & \varpi_2 G \circ \Plotkinn f
                       & = \HV f \circ \varpi_2 F.
\end{align}
Let $e \eqdef s_\ADN^\bullet \circ r_\ADN$,
$i \eqdef r_\ADN \circ \Plotkinn \eta^\Nature$ and
$j \eqdef r_\ADN \circ \eta^\pl T$.  We also define
$e_\DN \eqdef s_\DN^\bullet \circ r_\DN$,
$i_\DN \eqdef r_\DN \circ \SV \eta^\Nature$,
$j_\DN \eqdef r_\DN \circ \eta^\Smyth T$,
$e_\AN \eqdef s_\AN^\bullet \circ r_\AN$,
$i_\AN \eqdef r_\AN \circ \HV \eta^\Nature$,
$j_\AN \eqdef r_\AN \circ \eta^\Hoare T$.  Then, using
(\ref{eq:ADN:STU:s}) and (\ref{eq:ADN:STU:r}), (\ref{eq:ADN:STU:r})
and (\ref{eq:ADN:STU:P:var}), or (\ref{eq:ADN:STU:r}) and
(\ref{eq:ADN:STU:eta:var}), or (\ref{eq:ADN:STU:mu:var}) and
(\ref{eq:ADN:STU:P:var}) respectively, we obtain:
\begin{align}
  \label{eq:ADN:STU:e}
  \varpi_1 T \circ e & = e_\DN \circ \varpi_1 T
  & \varpi_2 T \circ e & = e_\AN \circ \varpi_2 T \\
  \label{eq:ADN:STU:i}
  \pi_1 \circ i & = i_\DN \circ \varpi_1
  & \pi_2 \circ i & = i_\AN \circ \varpi_2 \\
  \label{eq:ADN:STU:j}
  \pi_1 \circ j & = j_\DN
  & \pi_2 \circ j & = j_\AN \\
  \label{eq:ADN:STU:dagger}
  \varpi_1 G \circ f^\natural & = (\varpi_1 G \circ f)^\sharp \circ \varpi_1 F
  & \varpi_2 G \circ f^\natural & = (\varpi_2 G \circ f)^\flat \circ \varpi_2 F,
\end{align}
for every 2-cell $f \colon F \dto \Plotkinn G$.


\begin{theorem}
  \label{thm:ADN:STU}
  $\xymatrix{\Plotkinn \Val_\bullet \ar@<1ex>[r]^{r_\ADN} &
    \Pred_{\ADN}^\bullet \ar@<1ex>[l]^{s_\ADN^\bullet}}$ is a
  distributing retraction on $\SComp$.
\end{theorem}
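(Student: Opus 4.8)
The plan is to reduce the six laws of Definition~\ref{defn:STU:retr} to the two cases already settled in Theorems~\ref{thm:DN:STU} and~\ref{thm:AN:STU}, by projecting everything onto its superlinear and sublinear (equivalently, Smyth and Hoare) components. Set $S \eqdef \Plotkinn$, $T \eqdef \Pred_\Nature^\bullet$ (isomorphic to $\Val_\bullet$ by Lemma~\ref{lemma:V=PN}), $U \eqdef \Pred_\ADN^\bullet$, $r \eqdef r_\ADN$, $s \eqdef s_\ADN^\bullet$. That $r$ and $s$ form a retraction is \cite[Proposition~3.32]{JGL-mscs16}, and Lemma~\ref{lemma:rADP:nat} makes them natural, so we may argue entirely inside the $2$-category $\SComp$; every $2$-cell occurring in the six laws lands either in some $\Pred_\ADN^\bullet X$ or in some $\Plotkinn \Pred_\Nature^\bullet X$.

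The reduction rests on two elementary facts. First, a fork is by definition a pair of previsions, so two $2$-cells into $\Pred_\ADN^\bullet$ coincide as soon as they coincide after composition with $\pi_1$ and with $\pi_2$. Second, every lens $L$ in a stably compact space satisfies $L = \upc L \cap \dc L$, hence is determined by $(\varpi_1 (L), \varpi_2 (L))$; since $\Pred_\Nature^\bullet X$ is stably compact by Proposition~\ref{prop:U:scomp}, two $2$-cells into $\Plotkinn \Pred_\Nature^\bullet$ coincide as soon as they coincide after whiskering with $\varpi_1 \Pred_\Nature^\bullet$ and with $\varpi_2 \Pred_\Nature^\bullet$.

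I would then run through (\ref{eq:STU:eta})--(\ref{eq:STU:muUS:conv}) one by one. For each equation, compose (respectively whisker) both sides with $\pi_1$, respectively $\varpi_1 \Pred_\Nature^\bullet$; using the commutation identities (\ref{eq:ADN:STU:r})--(\ref{eq:ADN:STU:dagger}) together with the interchange law, the projected equation becomes exactly the corresponding equation for the retraction $r_\DN, s_\DN^\bullet$, which holds by Theorem~\ref{thm:DN:STU}; the $\pi_2$, respectively $\varpi_2 \Pred_\Nature^\bullet$, projections give the corresponding equation for $r_\AN, s_\AN^\bullet$, which holds by Theorem~\ref{thm:AN:STU}. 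For instance, for (\ref{eq:STU:muS}) the left-hand side projects as $\pi_1 \circ r_\ADN \circ \mu^\pl T = r_\DN \circ \varpi_1 T \circ \mu^\pl T = r_\DN \circ \mu^\Smyth T \circ \varpi_1 \varpi_1 T = \mu^\DN \circ i_\DN r_\DN \circ \varpi_1 \varpi_1 T$, using (\ref{eq:ADN:STU:r}), (\ref{eq:ADN:STU:mu:var}) and (\ref{eq:STU:muS}) for $\DN$, while the right-hand side projects as $\pi_1 \circ \mu^\ADN \circ ir = \mu^\DN \circ \pi_1 \pi_1 \circ ir = \mu^\DN \circ (i_\DN \circ \varpi_1)(r_\DN \circ \varpi_1 T) = \mu^\DN \circ i_\DN r_\DN \circ \varpi_1 \varpi_1 T$, using (\ref{eq:ADN:STU:mu}), (\ref{eq:ADN:STU:i}), (\ref{eq:ADN:STU:r}) and interchange; so the two agree, and likewise after $\pi_2$. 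All the other cases are strictly analogous; the only one requiring a little care is (\ref{eq:STU:muUS:conv}), where one first rewrites the extension $(s \circ \mu^U \circ j U)^{\ext S}$ as an application of the $\natural$-operation and then uses (\ref{eq:ADN:STU:dagger}) (i.e.\ Lemma~\ref{lemma:Pf:L}, item~\ref{it:Pf:L:dagger}) to push $\varpi_1 T$ through it, landing on (\ref{eq:STU:muUS:conv}) for $\SV$.

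The main obstacle here is not conceptual but organisational: once the list of commutation identities (\ref{eq:ADN:STU:r})--(\ref{eq:ADN:STU:dagger}) is in place, each law follows by rote, but one must track carefully which functor each $2$-cell is whiskered by and apply interchange in the right order. As an alternative for the three convexity laws (\ref{eq:STU:etaS:conv})--(\ref{eq:STU:muUS:conv}), one may bypass the projection argument and argue directly as in the proofs of Theorems~\ref{thm:DN:STU} and~\ref{thm:AN:STU}, using that $e (L) = L$ if and only if $L$ is a convex lens (the homeomorphism $\Plotkinn^{cvx} \Pred_\Nature^\bullet X \cong \Pred_\ADN^\bullet X$ of \cite[Proposition~4.8]{JGL-mscs16}), that $\eta^\pl_{\Pred_\Nature^\bullet X}$ returns singletons, that $\Plotkinn$ of an affine map sends convex lenses to convex lenses, and Lemma~\ref{lemma:cl:conv}.
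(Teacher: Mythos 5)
Your proposal is correct and follows essentially the same route as the paper's own proof: reduce each of the six laws to the Smyth and Hoare cases by post-composing with $\pi_1,\pi_2$ (for codomains in $\Pred_\ADN^\bullet$) and $\varpi_1,\varpi_2$ (for codomains in $\Plotkinn$, using that a lens $L$ equals $\upc L\cap\dc L$), via the commutation identities (\ref{eq:ADN:STU:r})--(\ref{eq:ADN:STU:dagger}); your worked instance of (\ref{eq:STU:muS}) is the very example the paper spells out.
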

\begin{proof}
  In order to check equations $h=k$ between morphisms with a codomain
  of the form $\Pred_\ADN^\bullet Z$, it suffices to verify that
  $\pi_1 \circ h = \pi_1 \circ k$ and $\pi_2 \circ h = \pi_2 \circ k$.
  Similarly, for morphisms with a codomain of the form $\Plotkinn Z$,
  it suffices to verify that $\varpi_1 \circ h = \varpi_1 \circ k$ and
  $\varpi_2 \circ h = \varpi_2 \circ k$; $\varpi_1$ and $\varpi_2$
  were defined before Lemma~\ref{lemma:Pf:L}.
  Then, verifying the six equations of a distributing retraction
  reduces to the corresponding equations of the distributing
  retractions of the Smyth and Hoare cases (Theorem~\ref{thm:DN:STU},
  Theorem~\ref{thm:AN:STU}).  For example, in order to verify
  (\ref{eq:STU:muS}), we check that
  $\pi_1 \circ r_\ADN \circ \mu^S T = r_\DN \circ \mu^\Smyth T \circ
  \varpi_1 \varpi_1 T$ and
  $\pi_1 \circ \mu^U \circ ir = \mu^\DN \circ i_\DN r_\DN \circ
  \varpi_1 \varpi_1 T$, which are equal by the Smyth case of
  (\ref{eq:STU:muS}), and similarly with $\pi_2$; so
  $r_\ADN \circ \mu^S T = \mu^U \circ ir$.  We leave the other cases
  as exercises.
\end{proof}

\subsection{The weak distributive law $\lambda^\ADN$}
\label{sec:weak-distr-law-2}

\begin{proposition}
  \label{prop:ADN:wdistr}
  The weak distributive law associated with the distributing
  retraction of Theorem~\ref{thm:AN:STU} is given at each object $X$
  of $\Topcat$ by:
  \begin{align}
    \nonumber
    \lambda^\ADN_X
    & \colon \Val_\bullet \Plotkinn X \to \Plotkinn \Val_\bullet X \\
    \label{lambda:natural:h}
    & \quad \xi \mapsto
      \{\nu \in \Val_\bullet X \mid \forall
      h \in \Lform X, \\
    \nonumber
    & \qquad\quad\int_{L \in \Plotkinn X}
      \min_{x \in L} h (x) \,d\xi \leq \int_{x \in  X} h (x) \,d\nu
      \leq \int_{L \in \Plotkinn X} \sup_{x \in L} h (x) \,d\xi\} \\
    \label{lambda:natural:U}
    & \quad \xi \mapsto
      \{\nu \in \Val_\bullet X \mid
      \xi (\Box U) \leq \nu (U) \leq \xi (\Diamond U)
      \text{ for every open subset $U$ of $X$}\}.
  \end{align}
\end{proposition}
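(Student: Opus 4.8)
The plan is to follow the template of the proofs of Propositions~\ref{prop:DN:wdistr} and~\ref{prop:AN:wdistr}, exploiting the componentwise structure recorded in the equations that precede Theorem~\ref{thm:ADN:STU}. By~(\ref{eq:lambda}) the law in question is $\lambda^\ADN = s_\ADN^\bullet \circ \mu^\ADN \circ ji$, with $i \eqdef r_\ADN \circ \Plotkinn\eta^\Nature$ and $j \eqdef r_\ADN \circ \eta^\pl\Pred_\Nature^\bullet$ as defined there. First I would identify $i$ and $j$ concretely. From~(\ref{eq:ADN:STU:j}) and the fact that $j_\DN$, $j_\AN$ are the plain inclusions (Theorems~\ref{thm:DN:STU} and~\ref{thm:AN:STU}), $j_Y$ is the diagonal inclusion $G \mapsto (G, G)$ of $\Pred_\Nature^\bullet Y$ into $\Pred_\ADN^\bullet Y$; from~(\ref{eq:ADN:STU:i}), the identities $\varpi_1(L) = \upc L$, $\varpi_2(L) = \dc L$, and the Smyth and Hoare formulas for $i_\DN$, $i_\AN$, the map $i_X$ sends a lens $L \in \Plotkinn X$ to the fork $\bigl(h \in \Lform X \mapsto \min_{x \in L} h(x),\ h \in \Lform X \mapsto \sup_{x \in L} h(x)\bigr)$ (the $\min$ over $\upc L$ coincides with the $\min$ over $L$ by monotonicity of maps in $\Lform X$, the $\sup$ over $\dc L$ with the $\sup$ over $L$ by Lemma~\ref{lemma:sup:cl}, and the $\min$ exists because lenses are compact).

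Next I would unwind $\mu^\ADN_X \circ (ji)_X$ on an element $\xi$ of $\Val_\bullet\Plotkinn X \cong \Pred_\Nature^\bullet\Plotkinn X$ exactly as in the proof of Proposition~\ref{prop:DN:wdistr}: since $(ji)_X(\xi) = \Pred_\ADN^\bullet(i_X)(\xi, \xi)$ and $\mu^\ADN$ applies $\pi_1^{\ext{\DN}}$, $\pi_2^{\ext{\AN}}$ to the two coordinates, one obtains the fork $(F^-, F^+)$ with $F^-(h) = \int_{L \in \Plotkinn X} \min_{x \in L} h(x)\,d\xi$ and $F^+(h) = \int_{L \in \Plotkinn X} \sup_{x \in L} h(x)\,d\xi$; here one checks that $L \mapsto \min_{x \in L} h(x)$ and $L \mapsto \sup_{x \in L} h(x)$ lie in $\Lform\Plotkinn X$, from continuity of $\varpi_1$, $\varpi_2$ (Lemma~\ref{lemma:Pf:L}, item~\ref{it:Pf:L:varpi}), Lemma~\ref{lemma:sup:cl}, and compactness of lenses. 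By Lemma~\ref{lemma:sADP:updown} and the canonical representation $L = \upc L \cap \dc L$ of a lens in the stably compact space $\Val_\bullet X$ (Proposition~\ref{prop:U:scomp}), $s_\ADN^\bullet(F^-, F^+)$ is the set of $G \in \Pred_\Nature^\bullet X$ with $F^- \leq G \leq F^+$, and translating back through $\Val_\bullet \cong \Pred_\Nature^\bullet$ yields exactly~(\ref{lambda:natural:h}).

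Finally,~(\ref{lambda:natural:U}) follows from~(\ref{lambda:natural:h}) by the Choquet-formula argument of the earlier proofs: instantiating $h = \chi_U$ and noting $\min_{x \in L} \chi_U(x) = \chi_{\Box U}(L)$, $\sup_{x \in L} \chi_U(x) = \chi_{\Diamond U}(L)$ (with $\Box U$, $\Diamond U$ the subbasic opens of $\Plotkinn X$) gives the forward inclusion, and for the converse one expands the two integrals over $\Plotkinn X$ via the Choquet formula and the identities $\bigl(L \mapsto \min_{x \in L} h(x)\bigr)^{-1}(]t,\infty]) = \Box h^{-1}(]t,\infty])$ and $\bigl(L \mapsto \sup_{x \in L} h(x)\bigr)^{-1}(]t,\infty]) = \Diamond h^{-1}(]t,\infty])$ in $\Plotkinn X$ (which hold because $\varpi_1$, $\varpi_2$ pull the $\Box$, $\Diamond$ subbasic opens of $\SV X$, $\HV X$ back to those of $\Plotkinn X$, as in the proof of Lemma~\ref{lemma:Pf:L}, item~\ref{it:Pf:L:varpi}), then compares termwise with the Choquet expansion of $\int_{x \in X} h(x)\,d\nu$. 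A slicker route I would at least mention is that $\lambda^\ADN_X(\xi) = \varpi_1(\lambda^\ADN_X(\xi)) \cap \varpi_2(\lambda^\ADN_X(\xi))$ by the same canonical representation, and that a short interchange-law computation from~(\ref{eq:lambda}), (\ref{eq:ADN:STU:s}), (\ref{eq:ADN:STU:mu}), (\ref{eq:ADN:STU:i}) and~(\ref{eq:ADN:STU:j}) gives $\varpi_1\Val_\bullet \circ \lambda^\ADN = \lambda^\DN \circ \Val_\bullet\varpi_1$ and $\varpi_2\Val_\bullet \circ \lambda^\ADN = \lambda^\AN \circ \Val_\bullet\varpi_2$, so that~(\ref{lambda:natural:U}) falls out of Propositions~\ref{prop:DN:wdistr} and~\ref{prop:AN:wdistr} once one observes $\varpi_{1,X}^{-1}(\Box U) = \Box U$ and $\varpi_{2,X}^{-1}(\Diamond U) = \Diamond U$ in $\Plotkinn X$. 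On either route the only real difficulty is 2-categorical bookkeeping --- keeping the whiskerings straight, for instance the interchange identity $\pi_1\pi_1 \circ ji = j_\DN i_\DN \circ \Val_\bullet\varpi_1$ --- with no conceptual obstacle beyond what the Smyth and Hoare cases already demanded.
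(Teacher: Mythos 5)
Your proposal is correct, but your \emph{primary} route is not the one the paper takes --- the paper's proof is precisely the ``slicker route'' you relegate to a closing remark. The paper computes $\varpi_1 T \circ \lambda^\ADN = \lambda^\DN \circ T\varpi_1$ and $\varpi_2 T \circ \lambda^\ADN = \lambda^\AN \circ T\varpi_2$ from (\ref{eq:lambda}) and the bookkeeping equations (\ref{eq:ADN:STU:s}), (\ref{eq:ADN:STU:mu}), (\ref{eq:ADN:STU:i}), (\ref{eq:ADN:STU:j}), deduces $\lambda^\ADN_X(\xi) = \upc\lambda^\ADN_X(\xi) \cap \dc\lambda^\ADN_X(\xi) = \lambda^\DN_X(\varpi_1[\xi]) \cap \lambda^\AN_X(\varpi_2[\xi])$, and then imports Propositions~\ref{prop:DN:wdistr} and~\ref{prop:AN:wdistr} wholesale via the change of variable formula (\ref{eq:chgvar}), with the observation that minima over $\upc L$ and suprema over $\dc L$ coincide with those over $L$. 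Your main argument instead unwinds $s_\ADN^\bullet \circ \mu^\ADN \circ ji$ directly from the concrete descriptions of $i$, $j$ and $\mu^\ADN$, mirroring the internal structure of the proofs of the Smyth and Hoare cases rather than invoking their statements; this is a valid, self-contained alternative, and all the individual steps check out (the identification of $j$ as the diagonal inclusion and of $i_X(L)$ as the min/sup fork, the lower semicontinuity of $L \mapsto \min_{x\in L}h(x)$ and $L \mapsto \sup_{x\in L}h(x)$ via $\varpi_1$, $\varpi_2$, and the Choquet-formula derivation of (\ref{lambda:natural:U})). What the paper's route buys is economy --- no re-derivation of the integral formulas, and no need to re-verify that the two maps into $\creal$ are lower semicontinuous, since that work is already encapsulated in the two earlier propositions; what your direct route buys is independence from the componentwise identities, at the cost of repeating computations. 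One cosmetic point: you invoke Lemma~\ref{lemma:sADP:updown} to identify $s_\ADN^\bullet(F^-,F^+)$ as $\{G \mid F^- \leq G \leq F^+\}$, but this is immediate from the definition $s_\ADN^\bullet(F^-,F^+) = s_\DN^\bullet(F^-) \cap s_\AN^\bullet(F^+)$; the lemma is not needed there.
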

\begin{proof}
  Let us continue to use
  $T \eqdef \Pred_\Nature^\bullet \cong \Val_\bullet$.  By
  Proposition~\ref{prop:STU:wdistr},
  $\lambda^\ADN \eqdef s_\ADN \circ \mu^\ADN \circ ji$, so
  $\varpi_1 T \circ \lambda^\ADN = s_\DN^\bullet \mu^\DN \circ j_\DN
  (i_\DN \circ \varpi_1) = \lambda^\DN \circ T \varpi_1$ by
  (\ref{eq:ADN:STU:s}), (\ref{eq:ADN:STU:mu}), (\ref{eq:ADN:STU:j}),
  (\ref{eq:ADN:STU:i}).  Similarly,
  $\varpi_2 T \circ \lambda^\ADN = \lambda^\AN \circ T \varpi_2$.  For
  every $\xi \in \Val_\bullet \Plotkinn X$, it follows that
  $\lambda^\ADN_X (\xi) = \upc \lambda^\ADN_X (\xi) \cap \dc
  \lambda^\ADN_X (\xi) = \lambda^\DN_X (\Val_\bullet \varpi_1 (\xi))
  \cap \lambda^\AN_X (\Val_\bullet \varpi_2 (\xi)) = \lambda^\DN_X
  (\varpi_1 [\xi]) \cap \lambda^\AN_X (\varpi_2 [\xi])$.

  We now use Proposition~\ref{prop:DN:wdistr} and Proposition~\ref{prop:AN:wdistr}.
  \begin{align*}
    \lambda^\ADN_X (\xi)
    & = \{\nu \in \Val_\bullet X \mid \forall
      h \in \Lform X, \int_{x \in X} h (x) \,d\nu \geq \int_{Q \in \SV X}
      \min_{x \in Q} h (x) \,d\varpi_1 [\xi]\} \\
    & \quad \cap \{\nu \in \Val_\bullet X \mid \forall
      h \in \Lform X, \int_{x \in X} h (x) \,d\nu \leq \int_{C \in \HV X}
      \sup_{x \in C} h (x) \,d\varpi_2 [\xi]\} \\
    & = \{\nu \in \Val_\bullet X \mid \forall
      h \in \Lform X, \int_{x \in X} h (x) \,d\nu \geq \int_{L \in \Plotkinn X}
      \min_{x \in \upc L} h (x) \,d\xi\} \\
    & \quad \cap \{\nu \in \Val_\bullet X \mid \forall
      h \in \Lform X, \int_{x \in X} h (x) \,d\nu \leq \int_{L \in \Plotkinn X}
      \sup_{x \in \dc L} h (x) \,d\xi\} \\
    & \qquad\text{by the change of variable formula (\ref{eq:chgvar}).}
  \end{align*}
  Taking minima of (continuous, hence) monotonic maps over $\upc L$ is
  the same as taking them over $L$, and taking suprema over $\dc L$ is
  the same as taking them over $L$, leading to
  (\ref{lambda:natural:h}).  We obtain (\ref{lambda:natural:U}) in a
  similar fashion.
\end{proof}

\begin{remark}
  \label{rem:ADN:wdistr}
  Let us write $\langle E \rangle$ for the order-convex closure
  $\upc E \cap \dc E$ of a set $E$.  When $\xi$ is a finite linear
  combination $\sum_{i=1}^n a_i \delta_{L_i}$, where furthermore each
  $L_i$ is the order-convex closure $\langle E_i \rangle$ of a
  non-empty finite set $E_i \eqdef \{x_{ij} \mid 1\leq j\leq m_i\}$,
  then we claim that:
  \begin{align}
    \label{lambda:natural:fin}
    \lambda^\DN_X (\xi)
    &= \left\langle \conv \left\{\sum_{i=1}^n a_i \delta_{x_{i
      f(i)}} \mid f \in \prod_{i=1}^n \{1, \cdots, m_i\}\right\} \right\rangle,
  \end{align}
  This follows from
  $\lambda^\ADN_X (\xi) = \lambda^\DN_X (\varpi_1 [\xi]) \cap
  \lambda^\AN_X (\varpi_2 [\xi])$ and Remarks~\ref{rem:DN:wdistr}
  and~\ref{rem:AN:wdistr}, once we observe that
  $\varpi_1 [\xi] = \sum_{i=1}^n a_i \delta_{\upc E_i}$ and
  $\varpi_2 [\xi] = \sum_{i=1}^n a_i \delta_{\dc E_i}$.

  This is equivalent to the formula given by Goy and Petri\c san for
  their weak distributive law of the powerset monad over the finite
  distribution monad on $\Setcat$ \citep[Lemma~3.1]{goypetr-dp}.  Note
  that their powerset includes the empty set, while our monad is one
  of non-empty compact sets.  As shown by Aristote (see below), this
  does make a difference.
\end{remark}

\subsection{The Vietoris and Radon monads on $\KHaus$}
\label{sec:viet-radon-monads}

There is a special case of what we have just done, which is perhaps
more interesting to classical topologists, and which is obtained by
restricting to the category $\KHaus$ of compact Hausdorff spaces, and
letting $\bullet$ be ``$1$''.

In a Hausdorff space $X$, the specialization ordering is equality.
Hence we can dispense with $\upc$ and $\dc$ symbols.  Then $(X, =)$ is
a compact pospace, the upward and downward topologies coincide with
the original topology on $X$, and in particular $X$ is stably compact.
Additionally, the lenses are exactly the non-empty compact subsets of
$X$, as one checks easily.  Then $\Plotkinn X$ is the space of
non-empty compact subsets of $X$ with the Vietoris topology.  This
space was first studied by 
\citet{Vietoris:hyper},
then by 
\citet[\S 28]{Hausdorff:Mengen} and 
\citet{Michael:hyper} among others, and is compact Hausdorff.
Hence $\Plotkinn$ restricts to a monad on $\KHaus$, which has become
to be known as the (non-empty) \emph{Vietoris monad} $\mathsf V_*$
\citep{Garner:weak:distr,Goy:PhD,goypetr-dp,Aristote:mono:WDL}.
Beware, though, that the Vietoris monad sometimes refer to a monad
$\mathsf V$ where $\mathsf V X$ is the space of all compact subsets of
$X$, including the empty set.  Aristote showed that there is no
monotone weak distributive law of $\mathsf V$ over the Radon monad
$\mathsf R$ on $\KHaus$ \citep[Theorem~35]{Aristote:mono:WDL},
suggesting that there is perhaps no weak distributive law of
$\mathsf V$ over $\mathsf R$ at all.  (Note that Aristote calls weak
distributive law of $T$ over $S$ what we or 
\citet{Garner:weak:distr} call a weak distributive law of $S$ over
$T$.)

The Radon monad $\mathsf R$ on $\KHaus$ is isomorphic to our $\Val_1$,
as we now see.  A Borel measure $\mu$ on a Hausdorff space $X$ is
\emph{inner regular} if and only if the measure of every Borel subset
$E$ is the supremum of the measures of the compact subsets of $E$.
(Hausdorffness is required so that compact subsets are closed, hence
Borel.)  Following \citet{AMJK:scs:prob}, we call \emph{Radon measure}
on a Hausdorff space any inner regular Borel measure $\mu$ such that
$\mu (K) < \infty$ for every compact subset $K$.  There are alternate
definitions, which are all equivalent on locally compact Hausdorff
spaces.  On a compact Hausdorff space, all bounded Borel measures are
Radon, where a measure $\mu$ is bounded if $\mu (X) < \infty$; this is
the case of all \emph{probability} measures, for which $\mu (X) = 1$.
$\mathsf R X$ is the space of all Radon probability measures on
$X^\patch$, with the \emph{vague topology}, which is the coarsest one
that makes $\mu \mapsto \int_{x \in X} h (x) \,d\mu$ continuous for
every continuous map from $X^\patch$ to $\real$ \citep[Definition
28]{AMJK:scs:prob}.  Then $\mathsf R X$ and $\Val_1 X$ are isomorphic
\citep[Theorem~26, Theorem~36]{AMJK:scs:prob}; in one direction, every
Radon measure yields a probability valuation when restricted to the
open sets, and conversely every probability valuation extends to a
unique Radon measure on all Borel subsets of $X^\patch$
\citep[Theorem~27]{AMJK:scs:prob}.  (See also \citet{KL:measureext}
for other, possibly more general similar extension results.)  This
defines a monad $\mathsf R$ on $\SComp$, which is isomorphic to the
monad $\Val_1$.

\begin{proposition}
  \label{prop:KHaus:restr}
  The monads $\Plotkinn$ ($\cong \mathsf V_*$), $\Val_1$
  ($\cong \mathsf R$), $\Pred_\Nature^1$ and $\Pred_\ADN^1$ restrict
  to $\KHaus$.
\end{proposition}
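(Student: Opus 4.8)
The plan is to reduce the statement to an object-level check. Every compact Hausdorff space is stably compact (it is compact, locally compact, coherent, and, being Hausdorff, sober), so $\KHaus$ is a full subcategory of $\SComp$, and each of the four monads is already available on $\SComp$: on $\Val_\bullet$, $\Pred_\Nature^\bullet$ and $\Pred_\ADN^\bullet$ by Proposition~\ref{prop:U:scomp}, and on $\Plotkinn$ by \cite[Theorem~6.7, item~2]{Lawson:scomp}, with units, multiplications and actions on morphisms inherited from the ambient categories. Hence it suffices to prove that for every compact Hausdorff space $X$, each of $\Plotkinn X$, $\Val_1 X$, $\Pred_\Nature^1 X$ and $\Pred_\ADN^1 X$ is again compact Hausdorff; fullness of $\KHaus$ then forces the remaining monad structure to restrict, since the monad laws are equations already valid on $\SComp$ and all continuous maps between compact Hausdorff spaces are $\KHaus$-morphisms.

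For the object-level check I would proceed as follows. By Proposition~\ref{prop:U:scomp} and \cite[Theorem~6.7, item~2]{Lawson:scomp}, all four of these spaces are stably compact, hence compact; and a stably compact space $Y$ is compact Hausdorff as soon as it is $T_1$ (equivalently, as soon as its specialization preorder is equality), because then the compact pospace $(Y^\patch, \leq)$ attached to $Y$ has $\leq$ equal to equality, so the upward topology that recovers $Y$ from $Y^\patch$ is just the patch topology, which is compact Hausdorff. First I would treat $\Plotkinn X$: its specialization ordering is the topological Egli-Milner ordering, and since $X$ is Hausdorff its specialization ordering is equality and each lens $L$ is compact, hence closed, so $\upc L = L = cl (L) = \dc L$; thus $L \TEMleq L'$ iff $L \supseteq L'$ and $L \subseteq L'$, iff $L = L'$, so $\Plotkinn X$ is $T_1$ — this merely re-derives the classical fact that $\Plotkinn X$ is the compact Hausdorff Vietoris hyperspace of $X$ \cite{Michael:hyper}. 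Next, through the isomorphism $\Val_1 \cong \mathsf R$, the space $\Val_1 X$ is homeomorphic to the space of Radon probability measures on $X$ with the vague topology; two distinct such measures are separated by the integral of some continuous $h \colon X \to \real$ (Riesz representation), which yields disjoint vague-open neighbourhoods, so $\Val_1 X$ is Hausdorff, hence compact Hausdorff. Then $\Pred_\Nature^1 X \cong \Val_1 X$ by Lemma~\ref{lemma:V=PN}, so it too is compact Hausdorff, and therefore $\Plotkinn \Pred_\Nature^1 X$ is compact Hausdorff by the $\Plotkinn$ case. Finally, $r_\ADN$ and $s_\ADN^\bullet$ exhibit $\Pred_\ADN^1 X$ as a topological retract of $\Plotkinn \Pred_\Nature^1 X$ (both continuous, with $r_\ADN \circ s_\ADN^\bullet = \identity {\Pred_\ADN^1 X}$, by \cite[Proposition~3.32]{JGL-mscs16}): it is compact as a continuous image of a compact space and Hausdorff as a subspace, via the embedding $s_\ADN^\bullet$, of a Hausdorff space. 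Assembling these four cases gives the proposition.

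The step I expect to be the main obstacle is the Hausdorffness of $\Val_1 X$ (equivalently $\mathsf R X$): the stably-compact machinery of Proposition~\ref{prop:U:scomp} gives compactness for free but not Hausdorffness, and the latter genuinely needs $X$ to be Hausdorff — it fails for the Sierpi\'nski space $\mathbb S$, where $\Val_1 \mathbb S$ is $[0, 1]$ with its non-$T_1$ Scott topology. A related pitfall to avoid is trying to deduce that $\Pred_\ADN^1 X$ is Hausdorff from its being a subspace of $\Pred_\DN^1 X \times \Pred_\AN^1 X$: the two factors are in general not $T_1$ (for instance $h \mapsto \min_{x \in X} h (x)$ lies strictly below every integration functional in $\Pred_\DN^1 X$), which is precisely why one must route through the retract onto $\Plotkinn \Pred_\Nature^1 X$. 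Everything else — the collapse of the patch bitopology in the stably compact $T_1$ case, the Egli-Milner computation, and the retract argument — is routine.
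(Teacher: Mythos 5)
Your proposal is correct, and it reaches the same object-level reduction as the paper (show that each of $\Plotkinn X$, $\Val_1 X \cong \Pred_\Nature^1 X$ and $\Pred_\ADN^1 X$ is compact Hausdorff; fullness of $\KHaus$ in $\SComp$ does the rest), but two of the three nontrivial verifications go by genuinely different arguments. For $\Val_1 X$, the paper does not pass through Radon measures and Riesz representation at all: it works directly with the weak topology on continuous valuations and shows that $\Val_1 X$ is Hausdorff for every \emph{regular} space $X$, by picking $r$ with $\mu (U) < r < \nu (U)$, using regularity to find a closed $C \subseteq U$ with $\nu (\interior C) > r$, and exhibiting the disjoint subbasic opens $[\interior C > r] \ni \nu$ and $[X \diff C > 1-r] \ni \mu$. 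That argument is more elementary (no representation theorem) and strictly more general than yours, which needs the identification $\Val_1 \cong \mathsf R$ and injectivity of integration on $C(X)$; on the other hand yours is shorter given that the identification is already quoted. For $\Pred_\ADN^1 X$, your argument is the cleaner one: you simply note that a retract (split by the embedding $s_\ADN^1$) of the compact Hausdorff space $\Plotkinn \Pred_\Nature^1 X$ is compact Hausdorff, whereas the paper runs an explicit separation argument, using injectivity of $s_\ADN^1$, normality of $\Val_1 X$ to separate a point $\nu \in \mathcal L_1 \setminus \mathcal L_2$ from the closed set $\mathcal L_2$, and then the disjoint opens $\Diamond {\mathcal U}$ and $\Box {\mathcal V}$ pulled back along $s_\ADN^1$. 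Your uniform framing via ``stably compact plus $T_1$ equals compact Hausdorff'' and the Egli--Milner computation for $\Plotkinn X$ match what the paper leaves implicit (it just cites the classical Vietoris literature for that case). One small point to make explicit in your $\Val_1$ step: with only subbasic opens of the form $[h > r]$ you also need $\{\mu \mid \int h\,d\mu < r\}$ to be open, which on a compact space follows by replacing $h$ with $c - h$; this is harmless but worth a line.
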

\begin{proof}
  We have just argued the case of $\Plotkinn$.  For the other two, we
  recall that a \emph{regular} space $X$ is one in which for every
  point $x$, every open neighborhood $U$ of $x$ contains a closed
  neighborhood $C$ of $x$ \citep[Exercise 4.1.21]{JGL-topology};
  equivalently, if every open set is the union of the directed family
  of interiors $\interior C$ of closed subsets $C$ of $U$.  Every
  compact Hausdorff space is regular \citep[Proposition 4.4.17,
  Exercise 4.1.23]{JGL-topology}.

  Let $X$ be regular.  Given any two distinct elements
  $\mu, \nu \in \Val_\bullet$, there must be an open subset $U$ of $X$
  such that $\mu (U) \neq \nu (U)$, say $\mu (U) < \nu (U)$.  Let $r$
  be such that $\mu (U) < r < \nu (U)$.  Since
  $r < \nu (U) = \sup_C \nu (\interior C)$, where $C$ ranges over the
  closed subsets of $U$, we have $r < \nu (\interior C)$ for some
  closed subset $C$ of $U$.  Then $\nu$ is in $[\interior C > r]$,
  $\mu$ is in $[X \diff C > 1-r]$ (since
  $1 = \mu (X) = \mu ((X \diff C) \cup U) \leq \mu (X \diff C) + \mu
  (U) < \mu (X \diff C) + r$, hence $\mu (X \diff C) > 1-r$), and
  $[\interior C > r]$ and $[X \diff C > 1-r]$ are disjoint; for the
  latter, any probability valuation in the intersection must map the
  disjoint union $\interior C \cup (X \diff C)$ to a number strictly
  larger than $r + (1-r)=1$, which is impossible.  Hence $\Val_1 X$ is
  Hausdorff.

  We now assume $X$ compact Hausdorff---hence regular and stably
  compact.  Let $(F_1^-, F_1^+)$ and $(F_2^-, F_2^+)$ be two distinct
  elements of $\Pred_\ADN^1 X$.  Since $r_\ADN$ and $s_\ADN^1$ form a
  retraction, in particular $s_\ADN^1$ is injective, so
  $\mathcal L_1 \eqdef s_\ADN^1 (F_1^-, F_1^+)$ and
  $\mathcal L_2 \eqdef s_\ADN^1 (F_2^-, F_2^+)$ are distinct.  By
  symmetry, let $\nu$ be an element of $\mathcal L_1$ that is not in
  $\mathcal L_2$.  Every compact Hausdorff space is normal, namely one
  can separate disjoint closed subsets by disjoint open neighborhoods
  \citep[Proposition 4.4.17]{JGL-topology}.  $\Val_1 X$ is compact
  Hausdorff, $\mathcal L_2$ is compact, hence closed, and $\{\nu\}$ is
  also closed, so there are disjoint open subsets $\mathcal U$ and
  $\mathcal V$ of $\Val_1 X$ such that $\nu \in \mathcal U$ and
  $\mathcal L_2 \subseteq \mathcal V$.  Then
  $\mathcal L_1 \in \Diamond {\mathcal U}$,
  $\mathcal L_2 \in \Box {\mathcal V}$, and it is easy to see that
  $\Diamond {\mathcal U}$ and $\Box {\mathcal V}$ are disjoint.  It
  follows that $(s_\ADN^1)^{-1} (\Diamond {\mathcal U})$ and
  $(s_\ADN^1)^{-1} (\Box {\mathcal V})$ are disjoint open
  neighborhoods of $(F_1^-, F_1^+)$ and $(F_2^-, F_2^+)$ respectively.
%
%
%
\end{proof}

We obtain the following from Proposition~\ref{prop:KHaus:restr} and
Theorem~\ref{thm:ADN:STU}.
\begin{corollary}
  \label{corl:ADN:STU}
  $\xymatrix{\Plotkinn \Val_1 \ar@<1ex>[r]^{r_\ADN} &
    \Pred_{\ADN}^1 \ar@<1ex>[l]^{s_\ADN^1}}$ is a
  distributing retraction on $\KHaus$.
\end{corollary}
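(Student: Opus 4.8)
The plan is to deduce this formally from Theorem~\ref{thm:ADN:STU} (specialized to $\bullet$ being ``$1$'') together with Proposition~\ref{prop:KHaus:restr}, exploiting the general fact that a distributing retraction restricts to any full subcategory that is closed under the three monads it involves. The first step is to recall, as noted at the beginning of Section~\ref{sec:viet-radon-monads}, that every compact Hausdorff space $X$ is stably compact — $(X, =)$ is a compact pospace whose upward and downward topologies both coincide with the topology of $X$ — so that $\KHaus$ sits inside $\SComp$ as a \emph{full} subcategory, and indeed on the nose: the stably-compact-space structure attached to a compact Hausdorff space carries the very same topology, and a map between compact Hausdorff spaces is continuous precisely when it is a morphism of $\SComp$ between the corresponding stably compact spaces.

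Next I would invoke Proposition~\ref{prop:KHaus:restr} to get that the endofunctors $\Plotkinn$ ($\cong \mathsf V_*$), $\Val_1$ ($\cong \mathsf R$) and $\Pred_\ADN^1$ all send compact Hausdorff spaces to compact Hausdorff spaces; hence, writing $S \eqdef \Plotkinn$, $T \eqdef \Val_1$, $U \eqdef \Pred_\ADN^1$, the subcategory $\KHaus$ is closed inside $\SComp$ under $S$, $T$ and $U$. By fullness, the unit and multiplication 2-cells of these three monads have components at objects of $\KHaus$ that are again morphisms of $\KHaus$, so $S$, $T$ and $U$ restrict to monads on $\KHaus$; and for every $X$ in $\KHaus$ all the spaces that appear as domains and codomains in Definition~\ref{defn:STU:retr} — such as $STX = \Plotkinn \Val_1 X$, $UX = \Pred_\ADN^1 X$, $STTX$, $STUX = \Plotkinn \Val_1 \Pred_\ADN^1 X$, and so on — also lie in $\KHaus$.

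The final step is then purely formal. Since all the spaces involved stay in $\KHaus$, the components of $r_\ADN \colon \Plotkinn \Val_1 \to \Pred_\ADN^1$, of $s_\ADN^1$, and of the derived 2-cells $i \eqdef r_\ADN \circ \Plotkinn \eta$, $j \eqdef r_\ADN \circ \eta^\pl \Val_1$ and $e \eqdef s_\ADN^1 \circ r_\ADN$ of Definition~\ref{defn:STU:retr} at a compact Hausdorff space are morphisms of $\KHaus$, so all these natural transformations restrict to $\KHaus$. The retraction identity $r_\ADN \circ s_\ADN^1 = \identity{\Pred_\ADN^1}$ and the six axioms (\ref{eq:STU:eta})--(\ref{eq:STU:muUS:conv}) are equations between these 2-cells, and they already hold on all of $\SComp$ by Theorem~\ref{thm:ADN:STU}; restricting them along the full inclusion $\KHaus \hookrightarrow \SComp$ yields exactly the claimed distributing retraction on $\KHaus$. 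I do not expect a real obstacle here: the only genuine content is the closure of $\KHaus$ under $\Plotkinn$, $\Val_1$ and $\Pred_\ADN^1$, which is precisely Proposition~\ref{prop:KHaus:restr}, and everything else is inherited verbatim from the $\SComp$ statement; if any point requires a moment's care, it is merely to confirm that the inclusion of $\KHaus$ into $\SComp$ is strict and full so that the 2-cells restrict literally, which was settled in the first step.
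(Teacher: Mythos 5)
Your proposal is correct and matches the paper's own (very terse) argument: the corollary is obtained by restricting the distributing retraction of Theorem~\ref{thm:ADN:STU} along the full inclusion $\KHaus \hookrightarrow \SComp$, using Proposition~\ref{prop:KHaus:restr} for closure of $\KHaus$ under the three monads. The extra detail you supply about the 2-cells and the equational axioms restricting verbatim is exactly the content the paper leaves implicit.
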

In particular, $\lambda^\ADN$ is a weak distributive law of
$\Plotkinn$ ($\cong \mathsf V_*$) over $\Val_1$ ($\cong \mathsf R$) on
$\KHaus$.  Aristote further showed that this weak distributive law is
monotone \citep[Theorem~35]{Aristote:mono:WDL}.  Our import is,
perhaps, to give an explicit description of the combined monad, and
that is $\Pred_\ADN^1$.

\section*{Acknowledgements}

I would like to thank Richard Garner, Alexandre Goy, Quentin Aristote and Daniela
Petri\c san for discussions that the forerunner \citep{JGL:wdistr} to
this paper spurred between us.

\bibliographystyle{newapa}
\ifarxiv

\else
\iftac
  \newcommand{\etalchar}[1]{$^{#1}$}

\else
\bibliography{wdistr}
\fi
\fi

\end{document}